\newcommand{\R}{\mathbb{R}} \newcommand{\N}{\mathbb{N}}
\newcommand{\T}{\mathbb{T}}
\newcommand{\J}{\mathbb{J}}
\newcommand{\E}{\mathbb{E}}
\renewcommand{\d}{\mathrm{d}}
\newcommand{\D}{\mathrm{D}}
\newcommand{\dvol}{\mathrm{dVol}}
\newcommand{\scri}{{\mathscr I}}
\newcommand{\ba}{\mathbf{a}}
\newcommand{\bb}{\mathbf{b}}
\newcommand{\hook}{{\setlength{\unitlength}{11pt}   
                   \begin{picture}(.833,.8)
                   \put(.15,.08){\line(1,0){.35}}
                   \put(.5,.08){\line(0,1){.5}}
                   \end{picture}}}
\newtheorem{definition}{Definition}[section]
\newtheorem{theorem}{Theorem}[section]
\newtheorem{proposition}{Proposition}[section]
\newtheorem{corollary}{Corollary}[section]
\newtheorem{lemma}{Lemma}[section]
\newtheorem{remark}{Remark}[section]
\begin{document}
\mbox{} \thispagestyle{empty}

\begin{center}
\bf{\Huge Peeling on Kerr spacetime~:\\
linear and non linear scalar fields}

\vspace{0.1in}

{Jean-Philippe NICOLAS\footnote{LMBA, UMR CNRS 6205, Department of Mathematics, University of Brest, 6 avenue Victor Le Gorgeu, 29200 Brest, France.
Email~: Jean-Philippe.Nicolas@univ-brest.fr} \& PHAM Truong Xuan\footnote{Faculty of Information Technology, Department of Mathematics,
Thuyloi university, Khoa Cong nghe Thong tin, Bo mon Toan, Dai hoc Thuy loi
175 Tay Son, Dong Da, Hanoi, Vietnam}}
\end{center}

{\bf Abstract.} We study the peeling on Kerr spacetime for fields satisfying conformally invariant linear and nonlinear scalar wave equations. We follow the approach initiated by L.J. Mason and the first author \cite{MaNi2009,MaNi2012} for the Schwarzschild metric, based on a Penrose compactification and energy estimates. This approach provides a definition of the peeling at all orders in terms of Sobolev regularity near $\scri$ instead of ${\cal C}^k$ regularity at $\scri$, allowing to characterise completely and without loss the classes of initial data ensuring a certain order of peeling at $\scri$. This paper extends the construction to the Kerr metric, confirms the validity and optimality of the flat spacetime model (in the sense that the same regularity and fall-off assumptions on the data guarantee the peeling behaviour in flat spacetime and on the Kerr metric) and does so for the first time for a nonlinear equation. Our results are local near spacelike infinity and are valid for all values of the angular momentum of the spacetime, including for fast Kerr metrics.

\vspace{0.1in}

{\bf Keywords.} Peeling, wave equation, Kerr metric, null infinity, conformal compactification.

\vspace{0.1in}

{\bf Mathematics subject classification.} 35L05, 35Q75, 83C57.

\tableofcontents

\section{Introduction}
The peeling is a type of asymptotic behaviour of zero rest-mass fields discovered by R. Sachs for spin $1$ and spin $2$ fields in the flat case in \cite{Sa61} and then established in the asymptotically flat case in \cite{Sa62}. An outgoing zero rest-mass field can be expanded in powers of $1/r$ along a null geodesic going out to future null infinity~; for a field of spin $s$, the part that falls off like $1/r^k$ has $2s-k$ of its principal null directions aligned along the null geodesic. E.T. Newman and R. Penrose \cite{NP} and then R. Penrose \cite{Pe1963,Pe1964,Pe1965} explored the notion further using the conformal method and the spin-coefficient formalism (now known as the Newman-Penrose formalism). In \cite{Pe1965}, the peeling is shown to be equivalent to an apparently much simpler property~: the boundedness of the conformally rescaled field at null infinity. Moreover, Penrose suggests in this paper that the behaviour in flat spacetime extends generically to asymptotically flat situations and he defines precisely the geometrical framework in which this behaviour is supposed to occur~: asymptotically simple spacetimes whose regularity at null infinity encodes the peeling of the Weyl tensor. This suggestion caused some controversy because of Schwarzschild's spacetime, which is the first non trivial asymptotically flat solution of the Einstein equations and, the model for the leading order fall-off in the asymptotic expansion of general asymptotically flat spacetimes. Its asymptotic structure is known to be quite different from that of Minkowski spacetime, with in particular a singular conformal structure at spacelike infinity. Many felt that this difference should impose more stringent constraints on the regularity and fall-off of initial data in order to ensure the peeling of the solution~; as a consequence the classes of data ensuring peeling would be much smaller in the Schwarzschild case than in the flat case and naturally, this would also be true generically for asymptotically flat spacetimes. In short, it was felt that the peeling was not a generic behaviour. In particular, the peeling of the Weyl tensor was not generic and therefore asymptotically simple spacetimes, in addition to having smaller classes of test fields with a peeling behaviour than flat spacetime, were not a generic model for asymptotically flat spacetimes.

Peeling estimates for test fields on a given background, i.e. pointwise decay estimates for the various components of the fields in a null frame, were obtained using integrated energy estimates and Sobolev embeddings, without conformal compactification~; this approach was initiated by D. Christodoulou and S. Klainerman \cite{ChriKla90} for the wave equation on Minkowski spacetime and then used notably by W. Inglese and F. Nicolò \cite{InglNi} and J. Metcalfe, D. Tataru and M. Tohaneanu \cite{MeTaTo} for the Maxwell field on black hole spacetimes. Also, the peeling of the Weyl tensor and the related question of asymptotic simplicity, have been resolved using different approaches by D. Christodoulou and S. Klainerman \cite{ChriKla}, P. Chru\'sciel and E. Delay \cite{ChruDe2002, ChruDe2003}, J. Corvino \cite{Co2000}, J. Corvino and R. Schoen \cite{CoScho2003}, H. Friedrich (see \cite{HFri2004} for a survey of his work on the subject) and S. Klainerman and F. Nicol\`o \cite{KlaNi,KlaNi2002,KlaNi2003}. But the validity of the second part of Penrose's original suggestion, concerning the genericity of the peeling behaviour for zero-rest-mass fields, is still not entirely clear. The question is whether generic asymptotically flat spacetimes admit large enough classes of initial data giving rise to solutions satisfying the peeling. Recently, L. Mason and one of the authors (JPN) provided a complete answer for scalar, Dirac and Maxwell fields on the Schwarzschild metric in \cite{MaNi2009,MaNi2012}~;  their results confirm Penrose's insights.

The approach developed in \cite{MaNi2009,MaNi2012} is based on three essential principles.
\begin{enumerate}
\item The work is done on the compactified spacetime. The peeling in characterised, in the spirit of Penrose, in terms of local regularity at $\scri$ (null infinity) instead of in terms of decay along outgoing null directions. Energy estimates, performed directly on the conformally rescaled field, give an optimal control of the regularity of the rescaled solution at $\scri$ in terms of the regularity and fall-off of the initial data. The optimality is guaranteed by the fact that the estimates go both ways. This provides a definition of the peeling at any given order in terms of weighted Sobolev norms on $\scri$ of transverse derivatives of the solution. Pointwise decay estimates as well as ${\cal C}^k$ norms at $\scri$ are avoided, because they are obtained via Sobolev embeddings which lose derivatives and prevent an optimal control of the classes of data ensuring a given regularity at $\scri$.
\item The estimates are performed in the neighbourhood of $i^0$ (spacelike infinity). If the fall-off at spacelike infinity of the initial data is not strong enough, singularities can appear at $i^0$ that will creep along null infinity. The essential difficulty is to understand precisely how to avoid such singularities by specifying the fall-off and regularity of the data near spacelike infinity. Once we control the regularity of the rescaled solution at $\scri$ near $i^0$, its regularity on the rest of null infinity can easily be inferred from the regularity of the solution in the bulk of the spacetime.
\item Peeling on flat spacetime is redefined with a weaker compactification. The conformal compactifications available on generic asymptotically flat spacetimes are not complete. They allow to construct $\scri$ but usually $i^0$ and the timelike infinities $i^\pm$ are singularities of the conformal structure and are therefore not included in the compactified spacetime. Typically, on the Schwarzschild metric, the compactification is performed using a rescaling of the metric by $1/r^2$. In order to be able to compare the classes of data ensuring peeling on, say, the Schwarzschild metric and flat spacetime, we must make sure that the definitions of peeling are comparable. This would not be the case if the full compactification of Minkowski spacetime (for which we have a smooth embedding in the Einstein cylinder) were to be used~; the problem is essentially with the weights at $i^0$ appearing in the weighted Sobolev norms, not with the local regularity. In the comparison with the flat case, the peeling on Minkowski spacetime is recast using a partial compactification with a rescaling of the metric by $1/r^2$~; this means that the flat spacetime model and the asymptotically flat spacetime considered are compactified in a similar way and the comparison of the results is therefore meaningful. This choice provides in fact a larger class\footnote{Larger than the one obtained using the embedding in the Einstein cylinder.} of data giving rise to solutions that peel at a given order. It also guarantees that the corresponding classes in the Schwarzschild case are comparable to the flat case ones, in the sense that they are characterised by the same regularity and fall-off at infinity. This is simply because of the local uniformity of the estimates with respect to the mass of the spacetime (for more details, see \cite{MaNi2009,MaNi2012}).
\end{enumerate}
In the present work, we extend the results of \cite{MaNi2009,MaNi2012} to conformally invariant linear and nonlinear scalar wave equations on the Kerr metric. In the slow and extreme cases, the exterior of the black hole is globally hyperbolic and we have a well-posed Cauchy problem for the linear equation. For the nonlinear equation, the well-posedness of the Cauchy problem outside the black hole has been proved by one of the authors in \cite{Ni2002} and although the question is still open in the extreme case, it is expected that well-posedness also holds. A study of the peeling in these cases amounts to controlling the behaviour of the field near $i^0$, as in \cite{MaNi2009,MaNi2012}. We define a neighbourhood of $i^0$ in the compactified exterior of the black hole, in which we perform energy estimates for the conformally rescaled field, both ways between future null infinity and the $\{ t=0 \}$ slice (a time reversal and a change of sign of the angular momentum leave the spacetime unchanged, hence our construction is also valid for past null infinity). For a given regularity at $\scri$, these estimates give us the corresponding weighted Sobolev norms (or nonlinear functionals in the nonlinear case) of the initial data, whose finiteness will ensure the required regularity at $\scri$ of the rescaled field. The energy current used for the estimates is associated with a vector field that is timelike on the whole neighbourhood of $i^0$ including at $\scri$~: the Morawetz vector field, or rather an analogue of it on the Kerr exterior. This vector field was introduced in flat spacetime by C. Morawetz \cite{Mo1961} in the early 1960's to obtain pointwise decay estimates for the wave equation. A natural transposition of it on the Schwarzschild metric was used for the first time by W. Inglese and F. Nicolò \cite{InglNi}, then among others by M. Dafermos and I. Rodnianski, notably in \cite{DaRoLN} and \cite{DaRo2009}, and by L.J. Mason and JPN in \cite{MaNi2009, MaNi2012}. In the flat case the Morawetz vector field is a conformal Killing vector and is exactly Killing for the metric rescaled by $1/r^2$. On the Schwarzschild metric rescaled with the same conformal factor, the Killing form of the analog of the Morawetz vector field decays rapidly, like $1/r^2$, at spacelike and null infinity, making it an approximate conformal Killing vector field of Schwarzschild's spacetime. On the Kerr metric however, the Killing form of our version of the Morawetz vector field does not decay at infinity, but it behaves well enough to allow the estimates to close.

Our estimates are in fact valid for all values of the angular momentum and the mass of the spacetime, including for fast Kerr metrics. In this case however, the Cauchy problem is not known to be well-posed or even to make sense since the spacetime is no longer globally hyperbolic and even contains a time machine~; our results are therefore valid in the neighbourhood of $i^0$ that we consider, which is globally hyperbolic, and they extend to any globally defined and sufficiently regular solution.

The paper is organised as follows. Section \ref{GeomAnalSet} contains a brief description of the Kerr metric and its conformal compactification. Then we construct explicitly a neighbourhood of spacelike infinity on the compactified spacetime~; we define it by its boundary, composed of a part of the $\{ t=0 \}$ slice, a part of $\scri^+$ and a null hypersurface $\cal S$ generated by a family of surface-forming outgoing null geodesics, known as the simple null geodesics (see \cite{FleLu2003} and \cite{Ha2009}). We also define a foliation of this neighbourhood by a family of spacelike hypersurfaces that accumulate on $\scri^+$. Then the Morawetz vector field is defined and its Killing form written in details and finally the linear and nonlinear conformal wave equations are presented with some remarks on the Cauchy problem in our neighbourhood of $i^0$. In Section \ref{Lin} we study the peeling for the linear conformal wave equation. We establish energy estimates both ways between the union of $\scri^+$ and $\cal S$ and the $\{ t=0 \}$ slice, at all orders of regularity~: we prove a fundamental estimate first and then use a set of vector fields generating the tangent bundle to obtain higher order estimates. The energy norms on $\scri^+$ appearing in the estimates define the classes of regularity characterising the different orders of peeling. The energy norms on the initial data slice define, by completion of smooth compactly supported data, the corresponding classes of data that ensure the required order of peeling for the solution. Note that the energy on $\{ t=0 \}$ involves derivatives transverse to the initial data slice, but using the equation we turn them into combinations of tangential derivatives~; this is how we obtain our norms on the initial data. Similarly to the Schwarzschild case, the estimates are uniform in the mass and angular momentum of the spacetime on any compact domain of the form $[0,M] \times [-a,a]$. This guarantees that the classes of data are comparable to the corresponding ones on Minkowski spacetime in the sense defined above. Section \ref{NLCase} is devoted to the nonlinear conformal wave equation. The approach is similar but now we use stress-energy tensors associated to both the linear and the nonlinear equations in our estimates. We establish a Sobolev embedding, uniformly on the slices of the foliation, that allows us to control the non-linear energy by (a function of) the linear energy. We obtain a characterisation of the orders of peeling in terms of the linear energy of successive derivatives. Similarly, the classes of data giving rise to a certain order of peeling are defined by the finiteness of the linear energy of successive derivatives on the initial data slice. However the situation is a little more subtle than in the linear case. When we use the equation to express the transverse derivatives as combinations of tangential ones, we get a nonlinear expression. Consequently, our classes of data are characterised by the finiteness of nonlinear functionals and are not vector spaces. The uniformity with respect to the mass and angular momentum are assured and the classes are therefore comparable to the flat space ones.

\vspace{0.1in}

\noindent {\bf Notations and conventions.}
\begin{enumerate}
\item Concerning differential forms and Hodge duality, following R. Penrose and W. Rindler \cite{PeRi}, we adopt the following convention~: on a spacetime $({\cal M},g)$ (i.e. a $4$-dimensional Lorentzian manifold that is oriented and time-oriented), the Hodge dual of a $1$-form $\alpha$ is given by
\begin{equation} \label{Star}
(*\alpha)_{abc} = e_{abcd} \alpha^d\, ,
\end{equation}
where $e_{abcd}$ is the volume form on $({\cal M} , g)$, which in this paper we simply denote $\dvol$. We shall use two important properties of the Hodge star~:
\begin{itemize}
\item given two $1$-forms $\alpha$ and $\beta$, we have
\begin{equation} \label{HStarP1}
\alpha \wedge * \beta = -\frac{1}{4} \alpha_a  \beta^a\, \dvol  \, ;
\end{equation}
\item for a $1$-form $\alpha$ that is differentiable,
\begin{equation} \label{HStarP2}
\d * \alpha =  -\frac{1}{4} (\nabla_a  \alpha^a ) \dvol  \, .
\end{equation}
\end{itemize}
\item We shall use the notation $\lesssim$ to signify that the left-hand side is bounded above by a positive constant times the right-hand side, the constant being independent of the parameters and functions appearing in the inequality. The notation $\simeq$ means that both $\lesssim$ and $\gtrsim$ are valid.
\end{enumerate}

\noindent{\bf Acknowledgements.} This research was partly supported by the ANR funding ANR-12-BS01-012-01.

\section{Geometrical and analytical setting} \label{GeomAnalSet}

\subsection{The Kerr metric}

The Kerr metric is given in Boyer-Lindquist coordinates $(t,r,\theta , \varphi ) \in \R_t \times \R_r \times [0,\pi ] \times [0,2\pi [$ as
\begin{equation}\label{Kerr-metric0}
 g=\left(1-\frac{2Mr}{\rho^2}\right)\d t^2 +\frac{4aMr\sin^2\theta}{\rho^2}\d t\d \varphi -\frac{\rho^2}{\Delta}\d r^2-\rho^2\d \theta^2-\frac{\sigma^2}{\rho^2}\sin^2\theta\d \varphi^2, 
\end{equation}
$$ \rho^2=r^2+a^2\cos^2\theta,\Delta=r^2-2Mr+a^2, $$
$$ \sigma^2=(r^2+a^2)\rho^2+2Mra^2\sin^2\theta=(r^2+a^2)^2-\Delta a^2\sin^2\theta \; ,$$
where $M>0$ is the mass of the black hole and $a\neq 0$ its angular momentum per unit mass. If $a=0$, $g$ reduces to the Schwarzschild metric, and if $M=a=0$, we obtain the Minkowski metric. Kerr's spacetime is asymptotically flat and possesses two independent Killing vector fields $\partial_t$ and $\partial\varphi$ which span the space of all Killing vector fields for $g$.

The expression \eqref{Kerr-metric0} of the Kerr metric has two types of singularities~: the set of points where $\rho^2=0$ (the equatorial ring $\left\{ r=0\, ,~\theta = \pi/2 \right\}$ of the $\left\{r=0\right\}$ sphere) is a true curvature singularity~; the spheres where $\Delta$ vanishes are mere coordinate singularities and correspond to event horizons.

There are three types of Kerr spacetimes depending on the respective values of $M$ and $a$~:
\begin{itemize}
\item[$\bullet$] Slow Kerr spacetime for $0<|a|<M$. $\Delta$ has two real roots
$$r_\pm = M \pm \sqrt{M^2-a^2} \, ;$$
we have two horizons on either different side of the hypersurface $\left\{ r=M\right\}$. The sphere $\left\{ r=r_-\right\}$ is called the inner horizon and the sphere $\left\{ r=r_+\right\}$ is the outer or black hole horizon. The horizons separate the spacetime in three types of domains referred to as blocks. Block I, the exterior of the black hole, is the region $\{ r>r+\}$~; it is locally but not globally stationary since it admits no global timelike Killing vector field. Block II is the dynamic region between the two horizons $\{ r_- < r < r_+\}$~; in this region $\partial_t$ is spacelike and $\partial_r$ is timelike. Block III is the region beyond the inner horizon $\{ r<r_-\}$~; it contains the singularity and a time machine where $\partial_\varphi$ becomes timelike.
\item[$\bullet$] Extreme Kerr spacetime for $|a|=M$. $M$ is then the double root of $\Delta$ and the sphere $\left\{ r=M\right\}$ is the only horizon. Only blocks I and III exist in this case.
\item[$\bullet$] Fast Kerr spacetime for $|a|>M$. $\Delta$ has no real root and the spacetime has no horizon. There is no black hole in this case and the singularity at $r=0$ is naked.
\end{itemize}
In this paper, we do not specify the respective values of $a$ and $M$~; we work in a neighbourhood of spacelike infinity that is sufficiently far away from the black hole or the singularity. We shall describe this neighbourhood in more details in Subsection \ref{Neighbourhood}.

Kerr spacetime has Petrov type D (see \cite{Petrov})~: its Weyl spinor has two double principle null directions
\[ V^\pm = \frac{r^2+a^2}{\Delta}\partial_t \pm \partial_r + \frac{a}{\Delta}\partial_\varphi \, . \]
Introducing a Regge-Wheeler-type variable $r_*$ such that
\begin{equation} \label{rstar}
\frac{\d r_*}{\d r} = \frac{r^2+a^2}{\Delta} \, ,
\end{equation}
the principle null directions take the form
\begin{equation} \label{PNG}
V^\pm = \frac{r^2+a^2}{\Delta} (\partial_t \pm \partial_{r_*} ) + \frac{a}{\Delta}\partial_\varphi \, .
\end{equation}
\begin{remark}
In the slow case, the variable $r_*$, modulo an arbitrary constant of integration, has the following expression
\[r_* = r +\frac{2Mr_+}{r_+-r_-} \log \vert r-r_+ \vert + \frac{2Mr_-}{r_--r_+} \log \vert r-r_- \vert \, .\]
In the extreme case, we have
\[ r_* = r + 2M\log \vert r-M \vert - \frac{2M^2}{r-M} \]
and in the fast case
\[ r_* = r+M \log (r^2+a^2-2Mr) + \frac{2M^2}{\sqrt{a^2-M^2}} \arctan \left( \frac{r-M}{\sqrt{a^2-M^2}} \right) \, .\]
We see in particular that in all three case, for $r$ large enough, we have $r_* > r$ and
\[ \lim_{r\rightarrow +\infty} \frac{r_*}{r} =1 \, .\]
\end{remark}
The Goldberg-Sachs theorem \cite{GS} states that multiple principle null directions of the Weyl spinor generate shear-free null geodesic congruences. The integral lines of $V^+$ and $V^-$ are therefore null geodesics (referred to as principal null geodesics) and their congruences are shear-free, but they have a non trivial twist~; this means that principle null geodesics are not surface forming, in other words, the principle null vectors $V^+$ and $V^-$ are not proportional to gradient fields. The simple null geodesics (see \cite{FleLu2003} and also \cite{Ha2009}) are another family of null geodesics that forms two congruences, an incoming and an outgoing one, that are non twisting, i.e. they are surface forming. The principle null geodesics are used to construct Eddington-Finkelstein-like coordinate systems, star-Kerr and Kerr-star coordinates, which allow to understand event horizons as smooth, totally geodesic, null hypersurfaces (see for example the remarkable book by B. O'Neill \cite{O95}). In Subsection \ref{StarKerr}, we present the Penrose compactification of Kerr spacetime in terms of star-Kerr and Kerr-star coordinates. We shall use the simple null geodesics in Subsection \ref{Neighbourhood} in order to give an explicit example of a null hypersurface going out to future null infinity.

\subsection{Penrose compactification of Kerr spacetime} \label{StarKerr}

In order to construct future null infinity ($\scri^+$), we introduce the so-called star-Kerr coordinates $({^*}t,r,\theta,{^*\varphi})$ based on the family of outgoing principal null geodesics. The new coordinates $^*t$ and $^*\varphi$ are defined as follows
$${^*t} = t- r_* \; , \; {^*\varphi} = \varphi- \Lambda(r) \, ,$$
where the function $r_*$ is defined by \eqref{rstar} and $\Lambda$ satisfies
\[ \frac{\d \Lambda}{\d r}=\frac{a}{\Delta} \, .\]
Denoting by $(\partial_r)_{BL}$ and $(\partial_r)_{^*K}$ the $r$ coordinate vector fields in Boyer-Lindquist and star-Kerr coordinates respectively, we have
\begin{eqnarray*}
(\partial_r)_{^*K} &=& V^+ \, , \\
\partial_{^*t} &=& \partial_t \, ,\\
\partial_{^*\varphi} &=& \partial_\varphi \, , \\
(\partial_r)_{BL} &=& - \frac{r^2+a^2}{\Delta} \partial_{^*t}+(\partial_r)_{^*K} - \frac{a}{\Delta} \partial_{^*\varphi} \, .
\end{eqnarray*}
The outgoing principal null geodesics are therefore the $r$-coordinate lines in star-Kerr coordinates. The Kerr metric now takes the form
\[ g =\left(1-\frac{2Mr}{\rho^2}\right) \d {^*t}^2 + \frac{4aMr\sin^2\theta}{\rho^2} \d {^*t}\d {^*\varphi} -\frac{\sigma^2}{\rho^2}\sin^2\theta \d {^*\varphi^2}  - \rho^2 \d \theta^2 + 2 \d {^*t} \d r - 2 a \sin^2\theta \d {^*\varphi} \d r \, .\]
\begin{remark}
Note that we can see at this point that the horizons, if they are present, are not singularities of the metric. In the slow case, this form of the metric allows us to glue block II in the future of block I, the junction being the future outer horizon ${\mathbb R}_{^*t}\times \left\{r=r_+\right\}\times S_{\theta,{^*\varphi}}^2$ and to glue block III in the future of block II with junction the future inner horizon ${\mathbb R}_{^*t}\times \left\{r=r_-\right\}\times S_{\theta,{^*\varphi}}^2$, thus constructing a black hole. A similar picture is obtained in the extreme case with two blocks and only one horizon.
\end{remark}
Then, we perform an inversion in the $r$ variable, i.e. we work with the coordinates $^*t$, $R=1/r$, $\theta$, $^*\varphi$, and rescale the Kerr metric with the conformal factor $\Omega^2 = R^2$~; we obtain the metric
\begin{eqnarray}
\hat{g}: = R^2 g&=& R^2\left(1-\frac{2Mr}{\rho^2}\right)\d {^*t}^2+\frac{4MaR\sin^2\theta}{\rho^2}\d {^*t} \d {^*\varphi} \nonumber\\
&& -(1+a^2R^2\cos^2\theta)\d {\theta^2} -\left(1+a^2R^2+\frac{2Ma^2R\sin^2\theta}{\rho^2}\right)\sin^2\theta \d {^*\varphi^2}\nonumber\\
&&-2 \d {^*t} \d R+2a\sin^2\theta \d {^*\varphi} \d R \, . \label{RescMet}
\end{eqnarray}
The rescaled metric extends smoothly to the hypersurface $\{ R=0 \}$ which cas be added as a boundary to our spacetime. The metric $\hat{g}$ does not degenerate at this boundary since its determinant,
\[ \det \hat{g} = -R^4 \rho^4 \sin^2 \theta = -\left( 1 + a^2 R^2 \cos^2 \theta \right)^2 \sin^2 \theta \, ,\]
does not vanish for $R=0$. However, the restriction of $\hat{g}$ to the $\{ R=0 \}$ hypersurface
\[  \hat{g} \vert_{\{ R=0 \}} = -\d \theta^2 - \sin^2 \theta \d \, ^*\varphi^2 \, ,\]
is a $2$-metric (in fact it is minus the euclidean metric on the unit $2$-sphere) and is therefore degenerate. This shows that $\{ R=0 \}$ is a smooth null hypersurface. It is the set of limit points of outgoing principal null geodesics as $r\rightarrow +\infty$~; it is denoted $\scri^+$ and called future null infinity. We give here the form of the inverse rescaled metric $\hat{g}^{-1}$, which will be useful to us~:
\begin{eqnarray}
{\hat g}^{-1}&=&-\frac{1}{\rho^2}\left(r^2a^2\sin^2\theta\partial^2_{^*t}+2(r^2+a^2)\partial_{^*t}\partial_R+2ar^2\partial_{^*t}\partial_{^*\varphi}+2a\partial_R\partial_{^*\varphi} \right) \nonumber\\
&&-\frac{1}{\rho^2} \left(R^2\Delta\partial^2_R+r^2\partial_{\theta}^2+\frac{r^2}{\sin^2\theta}\partial^2_{^*\varphi}\right) \; . \label{inverse-metric}
\end{eqnarray}

Similarly, Kerr-star coordinates $(t^*,r,\theta,\varphi^*)$ are constructed using the incoming principal null geodesics. We have
\[ {t^*} = t + r_* \; , \; {\varphi^*} = \varphi + \Lambda \, ,\]
with the same functions $r_*$ and $\Lambda$ as before. Denoting by $(\partial_r)_{BL}$ and $(\partial_r)_{K^*}$ the $r$ coordinate vector fields in Boyer-Lindquist and Kerr-star coordinate respectively, we have
\begin{eqnarray*}
V^- &=& (\partial_r)_{K^*} \, , \\
\partial_{t^*} &=& \partial_t \, ,\\
\partial_{\varphi^*} &=& \partial_{\varphi} \, ,\\
(\partial_r)_{BL} &=& \frac{r^2+a^2}{\Delta} \partial_{t^*}+(\partial_r)_{K^*} + \frac{a}{\Delta} \partial_{\varphi^*} \, .
\end{eqnarray*}
The incoming principal null geodesics are the $r$-coordinate lines in Kerr-star coordinates. The Kerr metric then takes the form
\[ g =\left(1-\frac{2Mr}{\rho^2}\right) \d {t^*}^2 + \frac{4aMr\sin^2\theta}{\rho^2} \d {t^*}\d {\varphi^*} -\frac{\sigma^2}{\rho^2}\sin^2\theta \d {\varphi^*}^2  - \rho^2 \d \theta^2 - 2 \d {t^*} \d r + 2 a \sin^2\theta \d {\varphi^*} \d r \, . \]
\begin{remark}
This coordinate system also allows us to understand horizons as smooth null hypersurfaces at the boundary of blocks, but the hypersurfaces ${\mathbb R}_{t^*}\times \left\{r=r_+\right\}\times S_{\theta,{\varphi^*}}^2$ and ${\mathbb R}_{t^*}\times \left\{r=r_-\right\}\times S_{\theta,{\varphi^*}}^2$ are now the past outer and inner horizons and the extended spacetime describes a white hole. By combining the black hole and the white hole construction and using Kruskal-Szekeres-type coordinates to describe the spheres where past and future horizons meet, we can construct the maximally extended Kerr spacetime in the slow and extreme cases. For the complete construction, see \cite{O95}.
\end{remark}
Now performing a similar construction as in star-Kerr coordinates, we obtain the following expression of the rescaled metric
\begin{eqnarray}
\hat{g}&=& R^2\left(1-\frac{2Mr}{\rho^2}\right)\d {t^*}^2+\frac{4MaR\sin^2\theta}{\rho^2}\d {t^*} \d {\varphi^*} \nonumber\\
&&-\left(1+a^2R^2+\frac{2Ma^2R\sin^2\theta}{\rho^2}\right)\sin^2\theta \d {{\varphi^*}^2} -(1+a^2R^2\cos^2\theta)\d {\theta^2} \nonumber\\
&&+2 \d {t^*} \d R - 2a\sin^2\theta \d {\varphi^*} \d R \, .
\end{eqnarray}
This reveals that $\hat{g}$ extends smoothly to the hypersurface $\scri^-:= \mathbb{R}_{t^*}\times \left\{R=0\right\}\times S^2_{\theta,{\varphi^*}}$, which can be added as a boundary and is the set of past limit points of incoming principle null geodesics~; $\scri^-$ (pass null infinity) is a smooth null hypersurface in this extended spacetime.

We denote by $\bar{\cal M}$ the compactified spacetime made of the union of the exterior of the black hole and $\scri^+$ and $\scri^-$ in the slow and extreme cases, of the whole spacetime with $\scri^+$ and $\scri^-$ in the fast case.

Spacelike infinity, denoted $i^0$, can be described as the set of limit points of uniformly spacelike curves as $r\rightarrow +\infty$ or as the collection of infinities for $t=$constant slices. It is reached from $\scri^+$ in the limit $^*t \rightarrow -\infty$ and from $\scri^-$ in the limit $t^*\rightarrow +\infty$. It is not part of the compactified spacetime.

We shall work exclusively with the rescaled metric $\hat{g}$~; we denote by $\nabla$ the Levi-Civita connection for $\hat{g}$.

\subsection{A neighbourhood of spacelike infinity} \label{Neighbourhood}

We work in a neighbourhood of spacelike infinity where we shall establish energy estimates between the $t=0$ slice and null infinities. We do the calculations explicitely for $\scri^+$, the ones for past null infinity are identical. Therefore, we define a neighbourhood of $i^0$ only in the part of $\bar{\cal M}$ where $t$ is non negative. What matters is that the whole neighbourhood be sufficiently far out towards $i^0$, its exact definition is not so important, in particular we could be vague with the details of its boundary where it differs from $\scri^+$ or the $t=0$ slice. However, for the sake of clarity and in order that all energy fluxes in the estimates be precisely defined, we chose to work with an explicit example of such a neighbourhood.

Our neighbourhood of $i^0$ will be defined as the part of $\bar{\cal M}$ lying in the future of the $t=0$ slice, denoted $\Sigma_0$ and in the past of a certain null hypersurface. This null hypersurface is foliated by outgoing simple null geodesics, the definition of which we shall now recall.
\begin{remark}
One might think that a natural choice of hypersurface to close the boundary of our neighbourhood would be a level hypersurface of $^* t$. However, as the expression \eqref{inverse-metric} of the inverse metric in $(^*t , R, \theta , ^*\varphi )$ coordinates shows, such hypersurfaces are timelike since
\[ \hat{g}(\nabla ^*t , \nabla ^*t ) = \hat{g}^{-1} (\d t , \d t ) =  -\frac{r^2a^2\sin^2\theta}{\rho^2} <0 \, .\]
It is much better for our estimates to choose a spacelike or null hypersurface so that the energy flux across it will be non negative.
\end{remark}

The simple null geodesics are the null geodesics of Kerr spacetime with zero Carter constant and zero total angular momentum around the axis of symmetry~; they are defined as the $r$ coordinate lines in a generalised Bondi-Sachs coordinate system introduced in \cite{FleLu2003} and described also in details in \cite{Ha2009}. We denote this coordinate system by $(\hat{t},r,\hat{\theta} , \varphi )$. The variable $\hat{t}$ is defined by
\[ \hat{t} := t- \hat{r} \, ,\]
where
\begin{equation} \label{rhat}
\hat r := r_* + \int_{0}^{r_*} \left( \sqrt{1-\frac{a^2\Delta(s)}{(r(s)^2+a^2)^2}} - 1 \right) \d s + a\sin\theta \, ,
\end{equation}
$r(r_*)$ being the reciprocal function of $r\mapsto r_*$ and $\Delta (r_* ) = (r(r_*))^2 -2M r(r_*) + a^2$. The function $\hat{\varphi}$ is defined modulo a choice of constant of integration by
\[ \hat{\varphi} := \varphi - \int \frac{2aMr}{\Delta \sqrt{(r^2+a^2)^2-\Delta a^2}} \d r \, .\]
As for the variable $\hat{\theta} \in [0,\pi]$, it is defined in an implicit manner by the following equation
\[ \frac{1+\tanh \alpha \sin \hat{\theta}}{\tanh \alpha + \sin \hat{\theta}} = \sin \theta\, ,\]
where $\alpha$ is a primitive of $a / \sqrt{(r^2+a^2)^2-\Delta a^2}$.

One can perform the Penrose compactification of Kerr spacetime by inverting the variable $r$ in these generalized Bondi-Sachs coordinates (for details, see D. Häfner \cite{Ha2009}). The level hypersurfaces of $\hat{t}$ are null hypersurfaces with topology $\R \times S^2$ foliated by outgoing simple null geodesics~; they intersect the $t=0$ slice and $\scri^+$ at $2$-spheres. Note that the cuts of $\scri^+$ obtained by intersection with the level hypersurfaces of $\hat{t}$ as not the same as the ones obtained with $^*t$. Indeed, we have
\[ \hat{t} = \, ^* t  - \int_{0}^{r_*} \left( \sqrt{1-\frac{a^2\Delta(s)}{(r(s)^2+a^2)^2}} - 1 \right) \d s - a\sin\theta \, ; \]
the integral has a fixed finite value at infinity but the last term gives an extra $\theta$ dependence to $\hat{t}$.

For $^*t_0 <<-1$, we define the neighbourhood $\Omega^+_{^*t_0}$ of $i^0$ as the intersection in $\bar{\cal M}$ of the past of the null hypersurface
\[ \mathcal{S}_{^*t_0} = \left\{ \hat{t} = {^*t_0} \, , ~ t\geq 0 \right\} \, ,\]
with the future of the $t=0$ slice, i.e.
\[ \Omega^+_{^*t_0} = {\cal I}^- ( \mathcal{S}_{^*t_0} ) \cap \{ t \geq 0\} \, .\]
In $\Omega^+_{^*t_0} \setminus \scri^+$, we have
\begin{equation} \label{epsilonestimates}
1 < \frac{r_*}{r} < 1+\varepsilon \, \mbox{ and } ~ 0 < -^*t R \leq \frac{r_*}{r} < 1+\varepsilon \, ,
\end{equation}
where $\varepsilon$ can be made arbitrarily small by choosing $\vert ^*t_0\vert$ arbitrarily large.

We foliate $\Omega_{{^*t}_0}^+$ by the hypersurfaces
$${\cal H}_s \, = \, \left\{^*t = -sr_*; \, \hat{t} \leq {^*t_0} \right\} , \;  0 \leq s \leq 1 \, .$$
\begin{itemize}
\item[$\bullet$] For $s=1$ we have $t = 0$, from which we get that ${\cal H}_1$ is the part of the hypersurface $\Sigma_0 = \left\{t=0\right\}$ inside $\Omega_{{^*t}_0}^+$.
\item[$\bullet$] In order to understand the hypersurface ${\cal H}_0$, let us fix $^*t = ^*t_1$, $\theta = \theta_1$ and $^*\varphi = ^*\varphi_1$ and take $s$ to zero while imposing $^*t = -sr_*$. Then $r_*$ must tend towards infinity and we see that we are going out to infinity on the outgoing principal null geodesic corresponding to $^*t_1$, $\theta_1$ and $^*\varphi_1$. Hence, the corresponding point of ${\cal H}_0$ is the point $(^*t= ^*t_1,\, R=0, \, \theta = \theta_1, \, ^*\varphi = ^*\varphi_1)$ of $\scri^+$. It follows that ${\cal H}_0 = \scri^+ \cap \Omega_{{^*t}_0}^+$, which we also denote by $\scri_{^*t_0}^+$.
\end{itemize}
\begin{lemma}
Provided $-^*t_0$ is large enough, the hypersurfaces ${\cal H}_s$ are spacelike for $0<s\leq 1$.
\end{lemma}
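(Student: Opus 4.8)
The plan is to realise $\mathcal{H}_s$ as a level set $\{\phi_s = 0\}$ of the function $\phi_s := {}^*t + s\,r_*$ and to show that its conormal $\d\phi_s$ is timelike everywhere on $\mathcal{H}_s$. With the mostly-minus signature of $\hat g$, a hypersurface is spacelike precisely when its conormal is timelike, i.e. when $\hat g^{-1}(\d\phi_s,\d\phi_s) > 0$, so the whole lemma reduces to a sign computation. I would work in the star-Kerr coordinates $({}^*t,R,\theta,{}^*\varphi)$, in which the inverse metric \eqref{inverse-metric} is explicit. First I would compute $\d\phi_s$: since $r_*$ is a function of $r=1/R$ alone, \eqref{rstar} together with $\d r = -r^2\,\d R$ gives $\d r_* = -\tfrac{(r^2+a^2)r^2}{\Delta}\,\d R$, so that
\[
\d\phi_s = \d{}^*t - s\,\frac{(r^2+a^2)r^2}{\Delta}\,\d R \, ;
\]
in particular the $\d\theta$ and $\d{}^*\varphi$ components vanish.

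Contracting with \eqref{inverse-metric}, only the coefficients of $\partial_{{}^*t}^2$, $\partial_{{}^*t}\partial_R$ and $\partial_R^2$ contribute. After substituting $R^2 r^4 = r^2$ and factoring, I expect the closed form
\[
\hat g^{-1}(\d\phi_s,\d\phi_s) = \frac{r^2}{\rho^2}\left[\frac{s(2-s)(r^2+a^2)^2}{\Delta} - a^2\sin^2\theta\right] \, .
\]
As $r^2/\rho^2 > 0$, spacelikeness is equivalent to positivity of the bracket. Where $s$ is bounded away from $0$ this is immediate once $r$ is large: there $\Delta > 0$ and $\Delta < r^2+a^2$, so $\tfrac{(r^2+a^2)^2}{\Delta} > r^2+a^2$, while $a^2\sin^2\theta \le a^2$ stays bounded, and the first term dominates. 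This already handles the endpoint $s=1$, which recovers that $\Sigma_0 = \{t=0\}$ is spacelike near $i^0$.

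The hard part is the regime $s\to 0$, i.e. near $\scri^+$, where the factor $s(2-s)$ degenerates and the two terms can no longer be estimated separately; here one must invoke the geometry of the neighbourhood. The key observation is that on $\mathcal{H}_s$ the defining constraint $\hat t \le {}^*t_0$ forces $r$ and $s$ to be correlated. Indeed, from \eqref{rhat} the correction integral in $\hat t = {}^*t - \int_0^{r_*}(\cdots)\,\d s - a\sin\theta$ is nonpositive and convergent, and $a\sin\theta$ is bounded, so $\hat t \ge {}^*t - |a|$; hence $\hat t \le {}^*t_0$ yields ${}^*t \le {}^*t_0 + |a|$, i.e. $s\,r_* = -{}^*t \ge -{}^*t_0 - |a|$. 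Together with $r_* < (1+\varepsilon)r$ from \eqref{epsilonestimates} this gives the lower bound $s \ge \tfrac{-{}^*t_0-|a|}{(1+\varepsilon)r}$, and also $r \ge \tfrac{-{}^*t_0-|a|}{1+\varepsilon}$ (large). Using $2-s\ge 1$ and $\tfrac{(r^2+a^2)^2}{\Delta}\ge r^2+a^2 \ge r^2$, the bracket is then bounded below by
\[
r^2 s - a^2 \ \ge\ \frac{r\,(-{}^*t_0-|a|)}{1+\varepsilon} - a^2 \, ,
\]
which is strictly positive once $-{}^*t_0$ is chosen large enough, uniformly in $s\in(0,1]$. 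I would close the argument by fixing such a threshold for $-{}^*t_0$, after which every $\mathcal{H}_s$ is spacelike. The main obstacle, as indicated, is precisely controlling the near-$\scri^+$ behaviour: one cannot conclude by a naive pointwise bound and must exploit that the cutoff $\{\hat t \le {}^*t_0\}$ drives $r$ to infinity fast enough to compensate the vanishing of $s(2-s)$.
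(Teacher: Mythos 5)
Your proof is correct and follows essentially the same route as the paper: same level-set function $f={}^*t+s\,r_*$, same conormal $\d f=\d{}^*t-\frac{s(r^2+a^2)}{\Delta R^2}\d R$, and the same closed form $\hat g^{-1}(\d f,\d f)=\frac{r^2}{\rho^2}\bigl[\frac{s(2-s)(r^2+a^2)^2}{\Delta}-a^2\sin^2\theta\bigr]$. The only cosmetic difference is the final positivity step: the paper locates the roots $s_1<1<s_2$ of the quadratic in $s$ and checks $s_1<s$ via \eqref{epsilonestimates}, whereas you bound the bracket below directly by $r^2s-a^2$ using the same geometric input $s\,r_*=-{}^*t\geq -{}^*t_0-|a|$ and $r_*\leq(1+\varepsilon)r$ (and you are in fact slightly more careful than the paper in justifying the lower bound on $-{}^*t$ from the cutoff $\hat t\leq{}^*t_0$).
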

\begin{proof} They are level hypersurfaces of the function
\[ f({^*t}, \, R,\, \theta,\, {^*\varphi}) = {^*t} + sr_* \, ,\]
so the co-normal $1-$form to ${\cal H}_s$ is
\[ \d f = \d ^*t + s \d r_* = \d {^*t} + s \frac{a^2+r^2}{\Delta}\d r = \d ^* t - \frac{s(a^2+r^2)}{\Delta R^2}\d R \]
and we have
\begin{eqnarray*}
-\rho^2{\hat g}^{-1}( \d f, \d f) &=& r^2a^2\sin^2\theta - 2\frac{(a^2+r^2)^2r^2}{\Delta}s + \frac{(a^2+r^2)^2r^2}{\Delta}s^2 \\
& =& r^2 \left( a^2\sin^2\theta -  2\frac{(a^2+r^2)^2}{\Delta}s + \frac{(a^2+r^2)^2}{\Delta}s^2   \right) \, .
\end{eqnarray*}
We consider the equation
$$a^2\sin^2\theta -  2\frac{(a^2+r^2)^2}{\Delta}s + \frac{(a^2+r^2)^2}{\Delta}s^2 = 0 \, ;$$
for $r$ such that $(a^2+r^2)^2 > \Delta$ ($r>1$ is enough), this equation has two solutions
$$0 < s_1 = 1 - \sqrt{1 - \frac{a^2\Delta\sin^2\theta}{(a^2+r^2)^2}} < 1 < s_2 = 1 + \sqrt{1 - \frac{a^2\Delta\sin^2\theta}{(a^2+r^2)^2}} \, .$$
In $\Omega_{{^*t}_0}^+$, provided $-^*t_0$ is large enough, we can assume that
\[ s_1 \leq \frac{a^2\Delta \sin^2 \theta}{(r^2+a^2)^2}  \]
and using \eqref{epsilonestimates} we obtain
\begin{eqnarray*}
\frac{s_1}{s} &=& \frac{r_* \left( 1 - \sqrt{1 - \frac{a^2\Delta\sin^2\theta}{(a^2+r^2)^2}}\right)}{-^*t} \\
&\leq & \frac{(1+\varepsilon)r}{-^*t_0} \frac{a^2\Delta\sin^2\theta}{(a^2+r^2)^2} \\
&\leq & r \frac{a^2}{r^2+a^2} < 1 \, .
\end{eqnarray*}
It follows that in $\Omega_{{^*t}_0}^+$, provided $-^*t_0$ is large enough, for $0<s\leq 1$ we have $s_1<s<s_2$. Therefore $-\rho^2 {\hat g}^{-1}( \d f, \d f) <0$ and ${\hat g}^{-1}( \d f, \d f) >0$. This concludes the proof.
\end{proof}
With this foliation, we choose an identifying vector field $\nu$ that satisfies $\nu(s)=1$~:
\[ \nu = r_*^2R^2 \frac{\Delta}{r^2+a^2}\vert ^*t \vert^{-1}\partial_R \, . \]
This will allow the decomposition of the $4-$volume measure $\mathrm{dVol}^4 = -(1+a^2R^2\cos^2\theta)\d{^*t}\d R\d^2\omega$ into the product of $\d s$ along the integral lines of $\nu^a$ and the $3$-volume measure
\[ \nu \hook \mathrm{dVol}^4 |_{{\cal H}_s}= -r_*^2 R^2 \frac{\Delta}{r^2+a^2} \frac{1}{|^*t|} (1+a^2R^2\cos^2\theta) \d ^*t \d^2\omega |_{{\cal H}_s} \simeq -\frac{1}{|{^*t}|} \d ^*t \d^2\omega |_{{\cal H}_s}\]
on each slice ${\cal H}_s$.

\subsection{The Morawetz vector field}
In order to obtain a timelike energy current (which will ensure that the energy flux across spacelike slices is a positive definite quantity), we need to contract the stress-energy tensor for the wave equation with a vector field that is timelike everywhere in $\Omega_{{^*t}_0}^+$. Following the same approach as in \cite{MaNi2009}, we adapt the Morawetz vector field to the Kerr background. Recall that the Morawetz vector field (see \cite{Mo1961}) is a conformal Killing vector for Minkowski spacetime that is Killing for the Minkowski metric rescaled using the conformal factor $\Omega^2 = r^{-2}$. In the coordinates $(r=t-r,R=1/r,\theta,\varphi)$, it has the form
\begin{equation} \label{MorawetzMink}
u^2 \partial_u -2(1+uR) \partial_R \, .
\end{equation}
Its more usual expression in terms of $(t,r,\theta,\varphi)$ coordinates is
\[ (r^2+t^2)\partial_t +2tr \partial_r \, .\]
We brutally translate the expression \eqref{MorawetzMink} in terms of star-Kerr coordinates~:
\begin{equation} \label{Morawetz}
T^{a}:= {^*t}^2 \partial_{^*t} - 2(1+{^*t}R)\partial_R \, .
\end{equation}
This vector field is not a conformal symmetry of Kerr spacetime and in fact, contrary to the Schwarzschild case, its Killing form does not even tend to zero at infinity as we can see in the next lemma, however, the part that does not tend to zero will turn out to be harmless for our estimates.
\begin{lemma} \label{Killing-form}
The vector field $T^a$ is timelike and future-oriented in $\Omega_{{^*t}_0}^+$. Moreover its Killing form is given by
\begin{align*}
&\nabla_{(a}T_{b)} \d x^a \d x^b \\
&= 4\left\{ (1+{^*t}R)M \frac{\partial}{\partial R}\left(\frac{R}{\rho^2}\right)-\frac{2M{^*t}R}{\rho^2}\right \}\d {^*t}^2-4a\sin^2 \theta {^*t} \d R \d {^*\varphi}\\
&- 4a\sin^2\theta \left\{ 2(1+{^*t}R)M \frac{\partial}{\partial R}\left(\frac{R}{\rho^2}\right) - 2 M{^*t} \frac{R}{\rho^2}+ R \right\} \d ^*t \d {^*\varphi} \\
&+ 4a^2\cos^2\theta (1+{^*t}R)R \d \theta^2 + 4a^2\sin^2\theta (1+{^*t}R)\left\{ R + \sin^2\theta M \frac{\partial}{\partial R}\left( \frac{R}{\rho^2}\right) \right\}\d {^*\varphi}^2 \, .
\end{align*}
\end{lemma}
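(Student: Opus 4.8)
The plan is to work throughout in the star-Kerr coordinates $({}^*t,R,\theta,{}^*\varphi)$, using the explicit metric \eqref{RescMet} and the components $T^{{}^*t}={}^*t^2$, $T^R=-2(1+{}^*t R)$, $T^\theta=T^{{}^*\varphi}=0$ of \eqref{Morawetz}. Since $\hat g$ has signature $(+,-,-,-)$, being timelike means $\hat g(T,T)>0$. The decisive simplification is that \eqref{RescMet} has no $\d R^2$ term, so $\hat g_{RR}=0$, while $\hat g_{{}^*t R}=-1$; only two terms then survive in $\hat g(T,T)$ and I expect to obtain
\[ \hat g(T,T) = {}^*t^2\Big[({}^*t R+2)^2 - 2M\,({}^*t R)^2\,\tfrac{r}{\rho^2}\Big]. \]
Into this I would feed $r/\rho^2\le 1/r=R$ (from $\rho^2\ge r^2$) and the bounds \eqref{epsilonestimates}, which give $-{}^*t R\in(0,1+\varepsilon)$ and hence $({}^*t R+2)^2\ge(1-\varepsilon)^2$. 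The bracket is then at least $(1-\varepsilon)^2-2M(1+\varepsilon)^2R$, strictly positive once $R$ is uniformly small on $\Omega^+_{{}^*t_0}$, i.e. once $-{}^*t_0$ is large; at $\scri^+$, where $R=0$, the bracket equals $4$, so $T$ stays timelike up to and including null infinity. For the orientation I would pair $T$ with $\partial_{{}^*t}=\partial_t$, which is future-directed causal here (timelike for large $r$, null on $\scri^+$): a parallel computation gives $\hat g(T,\partial_{{}^*t}) = (y-1)^2+1-2M({}^*t R)^2\,r/\rho^2$ with $y=-{}^*t R$, again positive for $-{}^*t_0$ large.

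For the Killing form I would bypass the Christoffel symbols and compute the Lie derivative of the metric, using the identity $\nabla_a T_b+\nabla_b T_a=(\mathcal{L}_T\hat g)_{ab}$ together with the coordinate expression
\[ (\mathcal{L}_T\hat g)_{ab} = T^c\partial_c\hat g_{ab} + \hat g_{cb}\,\partial_a T^c + \hat g_{ac}\,\partial_b T^c. \]
Because $T$ carries only ${}^*t$- and $R$-components, with $\partial_{{}^*t}T^{{}^*t}=2{}^*t$, $\partial_R T^R=-2{}^*t$, $\partial_{{}^*t}T^R=-2R$ and every other derivative vanishing, each entry collapses to a few terms. I would keep $\rho^2=r^2+a^2\cos^2\theta$ implicit and, since it depends on both $R$ and $\theta$, record the $R$-derivatives of the $\rho^2$-dependent entries of \eqref{RescMet} in the un-expanded form $\partial_R(R/\rho^2)$; this is precisely what generates the $M\,\partial_R(R/\rho^2)$ terms of the statement. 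Running through the pairs, the $\d{}^*t^2$, $\d{}^*t\,\d{}^*\varphi$, $\d R\,\d{}^*\varphi$, $\d\theta^2$ and $\d{}^*\varphi^2$ entries should reproduce the five listed coefficients, whereas the $\d{}^*t\,\d R$, $\d R^2$, $\d{}^*t\,\d\theta$, $\d R\,\d\theta$ and $\d\theta\,\d{}^*\varphi$ entries must cancel; I would check these cancellations explicitly, for instance $(\mathcal{L}_T\hat g)_{{}^*t R}=-2{}^*t+2{}^*t=0$, as they are the remnant of the conformal-Killing property that $T$ has in the flat model.

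The computation is mechanical but long, so the genuine effort is careful bookkeeping rather than any single hard idea. The one delicate point is the timelike/future-pointing claim near $\Sigma_0$: there $-{}^*t R$ is close to $r_*/r>1$, so $1+{}^*t R$ is slightly negative and the apparently positive contribution $2(1+{}^*t R)$ in $\hat g(T,\partial_{{}^*t})$ actually has the wrong sign. What rescues the estimate is the near-perfect square $(y-1)^2+1$ combined with the smallness of the curvature correction $2M({}^*t R)^2\,r/\rho^2=O(R)$; both are controlled uniformly only after invoking \eqref{epsilonestimates} and taking $-{}^*t_0$ large enough, which explains that proviso in the statement. I would finally note that all bounds are uniform in $(M,a)$ on compact sets, since $M$ and $a$ enter only through this harmless $O(R)$ correction and through $\rho^2\ge r^2$, matching the uniformity announced in the introduction.
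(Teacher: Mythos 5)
Your proposal is correct and follows essentially the same route as the paper: the Killing form is obtained as $\mathcal{L}_T\hat g$ in star--Kerr coordinates (the paper writes this via $\mathcal{L}_T \d {^*t}=2\,{^*t}\,\d {^*t}$, $\mathcal{L}_T\d R=-2R\,\d{^*t}-2\,{^*t}\,\d R$, which is exactly your $\partial_a T^c$ bookkeeping), and the timelike claim rests on the same two surviving terms $\hat g(T,T)={^*t}^2\{4(1+{^*t}R)+{^*t}^2R^2(1-2Mr/\rho^2)\}$ together with ${^*t}R\in[-1-\varepsilon,0]$. The only cosmetic differences are that the paper locates the two roots of this quadratic in ${^*t}R$ near $-2$ rather than completing the square, and certifies future-orientation by checking $T(t)>0$ rather than pairing with $\partial_{^*t}$; both variants are equivalent in substance.
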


\begin{proof}
First, we have 
\begin{align*}
\hat g_{ab}T^aT^b &= R^2\left(1-\frac{2Mr}{\rho^2}\right){^*t}^4 + 4{^*t}^2(1+{^*t}R)\\
&= {^*t}^2 \left\{ 4(1+{^*t}R) + {^*t}^2 R^2 \left( 1- \frac{2Mr}{\rho^2} \right)  \right\} \, ,
\end{align*}
which vanishes for the two values of ${^*t}R$~:
$$({^*t}R)_{\pm} = -2\frac{1 \mp \sqrt {\frac{2Mr}{\rho^2}}}{1- \frac{2Mr}{\rho^2}} \, .$$
In $\Omega_{{^*t}_0}^+$, these two values are arbitrarily close to $-2$ and ${^*t}R \in \left[-1-\varepsilon, 0\right]$. Consequently, provided we have chosen $\varepsilon $ small enough, $T^a$ is uniformly timelike in $\Omega_{{^*t}_0}^+$. We also have that $T^a$ is future-oriented since
\begin{eqnarray*}
T^a(t) &=& {^*t}^2\partial_{^*t} t - 2(1+{^*t}R)\partial_R t \, , \\
&=& {^*t}^2\partial_{^*t} ({^*}t+r_*) - 2(1+{^*t}R)\partial_R ({^*}t+r_*) \, , \\
&=& {^*t}^2 + 2(1+{^*t}R)\frac{a^2+r^2}{\Delta R^2} >0  \mbox{ in } \Omega_{^*t_0}^+ \, .
\end{eqnarray*}

Let us now calculate the Killing form of $T^a$. We have
$$\nabla_{(a}T_{b)} \d x^a \d x^b= {\cal L}_T \left(\hat g_{ab}\d x^a \d x^b \right) = T\hat g_{ab}\d x^a \d x^b + \hat g_{ab} \left( {\cal L}_T\d x^a \otimes \d x^b + {\cal L}_T\d x^b \otimes \d x^a \right)$$
and
$${\cal L}_T \d {^*t} = 2 {^*t} \d {^*t}, \; {\cal L}_T \d R = - 2R \d {^*t} - 2 {^*t} \d R, \; {\cal L}_T \d \theta = {\cal L}_T \d {^*\varphi} = 0 \, .$$
So that
\begin{align*}
&\nabla_{(a}T_{b)} \d x^a \d x^b\\
&= 4\left\{ (1+{^*t}R)M \frac{\partial}{\partial R}\left(\frac{R}{\rho^2}\right) - \frac{2M{^*t}R}{\rho^2}\right\} \d {^*t}^2-4a\sin^2 \theta {^*t} \d R \d {^*\varphi}\\
&- 4a\sin^2\theta \left\{ 2(1+{^*t}R)M \frac{\partial}{\partial R}\left(\frac{R}{\rho^2}\right) - 2 M{^*t} \frac{R}{\rho^2}+ R \right\} \d ^*t \d {^*\varphi} \\
&+ 4a^2\cos^2\theta (1+{^*t}R)R \d \theta^2 + 4a^2\sin^2\theta (1+{^*t}R)\left\{ R + \sin^2\theta M \frac{\partial}{\partial R}\left(\frac{R}{\rho^2}\right) \right\} \d {^*\varphi}^2 \, ,
\end{align*}
which completes the proof.
\end{proof}

\subsection{The wave equation}

We study the wave equation on Kerr spacetime
\begin{equation} \label{WaveEq}
\square_g \psi + \lambda \psi^3 = 0 \, ,
\end{equation}
in the cases $\lambda =0$, where the equation is linear, and $\lambda=1$. The operator $\square_g$ is the d'Alembertian for the metric $g$ whose expression in local coordinates is given by
\begin{equation} \label{DAlembertian}
\square_g = \frac{1}{\sqrt{|\det g|}} \frac{\partial}{\partial x^{\ba}} \left(\sqrt{|\det g|}g^{\ba\bb}\frac{\partial}{\partial x^{\bb}}\right) \, .
\end{equation}
Equation \eqref{WaveEq} is conformally invariant in the following sense~: a smooth function $\psi$ on Kerr spacetime satisfies \eqref{WaveEq} if any only if the rescaled function
\begin{equation} \label{RescField}
\hat\psi=\Omega^{-1}\psi=r \psi
\end{equation}
satisfies the wave equation
\begin{equation} \label{RescWaveEq}
\left(\square_{\hat g}+\frac{1}{6}\mathrm{Scal}_{\hat g}\right)\hat\psi + \lambda \hat\psi^3 = 0 \, .
\end{equation}
We calculate the expression of \eqref{RescWaveEq} in star-Kerr coordinates. The scalar curvature for $\hat{g}$ is given by
\[  {\mathrm {Scal}}_{\hat g} = \Omega^{-3} \square_g \Omega =\frac{r^3}{\sqrt {\vert g \vert}}\partial_r \sqrt {\vert g \vert}g^{rr}\partial_r\frac{1}{r} = 12\frac{Mr-a^2}{\rho^2} \]
and the d'Alembertian for $\hat g$ reads
\begin{align*}
\square_{\hat g}& = \frac{1}{\sqrt{\vert \hat g \vert}}\frac{\partial}{\partial x^\ba}\sqrt{\vert \hat g \vert} \hat g^{\ba\bb}\frac{\partial}{\partial x^\bb}  \\
&= -\frac{r^2a^2\sin^2\theta}{\rho^2}\partial_{^*t}^2 - \frac{r^2+a^2}{\rho^2}\partial_{^*t}\partial_R - \frac{2a r^2}{\rho^2}\partial_{^*t}\partial_{^*\varphi}-\frac{a}{\rho^2}\partial_R\partial_{^*\varphi} - \frac{r^2}{\rho^2\sin^2\theta} \partial_{^*\varphi}^2\\
&-\frac{r^2}{\rho^2\sin\theta}\left(\partial_R R^2(r^2+a^2)\sin\theta \partial_{^*t} + \partial_R R^2 \sin\theta R^2 \Delta \partial_R + \partial_R aR^2 \sin\theta \partial_{^*\varphi} + \partial_\theta \sin\theta \partial_\theta \right)\, ,
\end{align*}
All calculations done and after some simplifications, we obtain
\begin{align*}
\square_{\hat g}
&=-\frac{r^2a^2\sin^2\theta}{\rho^2}\partial_{^*t}^2 - \frac{r^2+a^2}{\rho^2}\partial_{^*t}\partial_R - \frac{2a r^2}{\rho^2}\partial_{^*t}\partial_{^*\varphi}-\frac{a}{\rho^2}\partial_R\partial_{^*\varphi}\\
&-\frac{r^2}{\rho^2} \left( (1+a^2R^2) \partial_{^*t}\partial_R + (R^2-2MR^3+a^2R^4) \partial_R^2 + aR^2 \partial_R\partial_{^*\varphi} \right)\\
&-\frac{r^2}{\rho^2}\left( 2a^2R \partial_{^*t}+(2R-6MR^2+4aR^3) \partial_R + 2aR \partial_{^*\varphi}\right)-\frac{r^2}{\rho^2} \Delta_{S^2} \, ,
\end{align*}
where $\Delta_{S^2}$ is the Laplacian on the euclidean $2$-sphere
$$ \Delta_{S^2} = \frac{1} {\sin\theta} \partial_\theta \sin\theta \partial_\theta +\frac{1}{\sin^2\theta}\partial_{^*\varphi}^2 = \partial_\theta^2 + \cot\theta \partial_\theta + \frac{1}{\sin^2\theta}\partial_{^*\varphi}^2 \; .$$
So the wave equation \eqref{RescWaveEq} has the expression
\begin{gather}
-\frac{r^2a^2\sin^2\theta}{\rho^2}\partial_{^*t}^2\hat\psi - \frac{r^2+a^2}{\rho^2}\partial_{^*t}\partial_R\hat\psi - \frac{2a r^2}{\rho^2}\partial_{^*t}\partial_{^*\varphi}\hat\psi-\frac{a}{\rho^2}\partial_R\partial_{^*\varphi}\hat\psi \nonumber\\
-\frac{r^2}{\rho^2} \left( (1+a^2R^2) \partial_{^*t}\partial_R\hat\psi + (R^2-2MR^3+a^2R^4) \partial_R^2\hat\psi + aR^2 \partial_R\partial_{^*\varphi}\hat\psi \right) \nonumber\\
-\frac{r^2}{\rho^2}\left( 2a^2R \partial_{^*t}\hat\psi+(2R-6MR^2+4aR^3) \partial_R\hat\psi + 2aR \partial_{^*\varphi}\hat\psi\right) \nonumber\\
- \frac{r^2}{\rho^2}\Delta_{S^2}\hat\psi + 2\frac{Mr-a^2}{\rho^2}\hat\psi + \lambda \psi^3 = 0 \, . \label{resc-wave}
\end{gather}
The Cauchy problem for \eqref{WaveEq} is not known to be well-posed in general for all values of $M$ and $a$. For $\lambda=0$, provided $M\geq \vert a \vert$, it will be well-posed in block I by Leray's general theory of hyperbolic equations \cite{Le1953}, since block I is globally hyperbolic~: smooth compactly supported initial data will give rise to unique smooth solutions. In the nonlinear case ($\lambda =1$), the well-posedness of the Cauchy problem in block I in the subextremal case has been established in \cite{Ni2002}, for minimum regularity (i.e. solutions that are in $H^1$ on the spacelike slices). The analogous result in the extreme case in not known but is expected to hold. For $M< \vert a \vert$, the singularity is naked and with the time machine surrounding it, the spacetime is no longer globally hyperbolic~; the Cauchy problem does not make sense anymore.

However, we are restricting our study to a neighbourhood of spacelike infinity which is globally hyperbolic. It is easy to solve the Cauchy problem directly for the rescaled equation \eqref{RescWaveEq} on $\Omega_{{^*t}_0}^+$ and using the conformal invariance, this implies the well-posedness of the Cauchy problem for \eqref{WaveEq} in $\Omega_{{^*t}_0}^+$. For $\lambda =0$ we simply use the general theory by Leray, smooth compactly supported data on ${\cal H}_1$ give rise to unique smooth solutions on $\Omega_{{^*t}_0}^+$. For $\lambda =1$, the well-posedness of the Cauchy problem in $\Omega_{{^*t}_0}^+$ for smooth compactly supported data on ${\cal H}_1$ is a consequence of the results by F. Cagnac and Y. Choquet-Bruhat \cite{CaCho} using an extension of $\Omega_{{^*t}_0}^+$, minus a neighbourhood of $i^0$ that is outside the domain of influence of the support of the data, into a regularly sliced cylinder (a similar procedure has been used by J. Joudioux in \cite{Jo2012} and by L.J. Mason and JPN in \cite{MaNi2004}). Note that in both cases, the solutions are smooth right up to the boundary $\scri^+_{^*t_0}$. The solutions in $\Omega_{{^*t}_0}^+$ are the restrictions to $\Omega_{{^*t}_0}^+$ of a large class of solutions in the exterior of the black home in the slow and extremal cases. In the fast case however, the question of extending the solutions in $\Omega_{{^*t}_0}^+$ to a solution on the whole spacetime is much more delicate~; we shall not address this question in this paper.

Throughout the paper, we shall work with smooth solutions of \eqref{RescWaveEq} in $\Omega_{{^*t}_0}^+$ arising from smooth compactly supported data on ${\cal H}_1$, for which we shall prove energy estimates. The estimates in the linear case will automatically extend the validity of the Cauchy problem to the functions spaces obtained by completion of smooth compactly supported functions in the energy norms. In the non linear case, a similar extension requires more care but essentially also follows from the estimates~; this shall be explained in more details in Section \ref{NLCase}.

\section{The linear case} \label{Lin}

We study the case $\lambda =0$ first, \eqref{resc-wave} is then the conformal linear wave equation on the rescaled Kerr metric. We start this section by calculating some commutators between the conformal d'Alembertian $\square_{\hat{g}} + \frac16 \mathrm{Scal}_{\hat{g}}$ with some vector fields, from which we infer some approximate conservation laws. We then calculate and simplify the energy fluxes across the hypersurfaces ${\cal H}_s$ and ${\cal S}_{^*t_0}$. Putting all this together gives us the basic and higher order estimates which then entail our result concerning the peeling of linear scalar fields.

\subsection{Commutation with vector fields} \label{CommutLin}

We use the set of five vector fields $\mathcal{A} = \{ X_i \, ,~ i=0,1,2,3,4 \}$ where the $X_i$'s are defined as follows~:
\begin{gather*}
X_0 = \partial_{^*t} \, , ~ X_1 = \partial_{^*\varphi} \, ,~ X_2 = \sin{^*\varphi}\, \partial_\theta + \cot{\theta}\cos{^*\varphi}\, \partial_{^*\varphi} \, , \\
X_3 = \cos{^*\varphi}\, \partial_\theta - \cot{\theta}\sin{^*\varphi}\, \partial_{^*\varphi} \, ,~ X_4 = \partial_R \, .
\end{gather*}
The first vector field, $X_0$, is timelike and future oriented on $\Omega^+_{^*t_0}$ except at $\scri^+$ where it is nul and future oriented~; it is the null normal to $\scri^+$ and is therefore also tangent to $\scri^+$. The next three, $X_1$, $X_2$ and $X_3$, are the generators of rotations, they are tangent to the $2-$spheres $S^2_{\theta,^*\varphi}$ and generate their tangent planes. The last vector field $X_4$ is null and future oriented in $\Omega^+_{^*t_0}$ and is transverse to $\scri^+$. The vector fields $X_0$ and $X_1$ are Killing but unlike in the Schwarzschild situation, $X_2$ and $X_3$ are not Killing on the Kerr background. Also $X_4$ is not tangent to the hypersurface ${\cal S}^+_{^*t_0}$.
The three generators of rotations commute with $\Delta_{S^2}$ and they satisfy $[ X_1,X_2]=X_3$, $[X_2,X_3]=X_1$, $[X_3,X_1]=X_2$, or equivalently
\[ \left[ {\cal L}_{X_1}, {\cal L}_{X_2}\right] = {\cal L}_{X_3} \, ,~ \left[ {\cal L}_{X_2}, {\cal L}_{X_3}\right] = {\cal L}_{X_1} \, ,~ \left[ {\cal L}_{X_3}, {\cal L}_{X_1}\right] = {\cal L}_{X_2} \, ,\]
and as differential operators acting on scalar fields,
\[ X_1^2 + X_2^2 +X_3^3 = -\Delta_{S^2} \, .\]
Let $\hat{\psi}$ be a smooth solution of \eqref{resc-wave} with $\lambda =0$ in $\Omega^+_{^*t_0}$. First, since $X_0$ and $X_1$ are Killing vector fields, $\mathcal{L}_{X_0}$ and $\mathcal{L}_{X_1}$ commute with $\square_{\hat g}+\frac{1}{6}\mathrm{Scal}_{\hat g}$ and we get
\begin{equation}\label{resc-wave0} 
\left( {\square}_{\hat g} + \frac{1}{6}\mathrm{Scal}_{\hat g}\right) \mathcal{L}_{X_i}\hat\psi = 0 ~ ( i = 0,1) \, .
\end{equation}
Commuting ${\cal L}_{X_2} $ into \eqref{resc-wave} after multiplying it by $\rho^2/r^2$ gives
\begin{eqnarray*}
0={\cal L}_{X_2}\left(\frac{\rho^2}{r^2}\left(\square_{\hat g}+\frac{1}{6}\mathrm{Scal}_{\hat g}\right)\hat\psi \right) & = &\frac{\rho^2}{r^2}\left(\square_{\hat g}+\frac{1}{6}\mathrm{Scal}_{\hat g}\right){\cal L}_{X_2}\hat\psi - a^2\sin 2\theta \sin{^*\varphi}\partial_{{^*t}}^2\hat\psi
  \\
&& + 2a \partial_{^*t} {\cal L}_{X_3}\hat\psi + 2aR^2\partial_R {\cal L}_{X_3} \hat\psi + 2aR {\cal L}_{X_3}\hat\psi \, ,
\end{eqnarray*}
which entails the equation
\begin{equation} \label{resc-wave2}
\left(\square_{\hat g}+\frac{1}{6}\mathrm{Scal}_{\hat g}\right){\cal L}_{X_2}\hat\psi = \frac{r^2}{\rho^2} a^2\sin 2\theta\sin{^*\varphi}\partial_{{^*t}}^2 \hat\psi - \frac{r^2}{\rho^2} \left( 2a \partial_{^*t}  +2 aR^2\partial_R  + 2aR \right) {\cal L}_{X_3}\hat\psi \, .
\end{equation}
Similarly for ${\cal L}_{X_3}$
\begin{equation} \label{resc-wave3}
\left(\square_{\hat g}+\frac{1}{6}\mathrm{Scal}_{\hat g}\right){\cal L}_{X_3}\hat\psi = \frac{r^2}{\rho^2} a^2\sin 2\theta \cos {^*\varphi}\partial_{{^*t}}^2 \hat\psi  +  \frac{r^2}{\rho^2} \left( 2a \partial_{^*t} + 2aR^2\partial_R + 2aR  \right) {\cal L}_{X_2}\hat\psi \, .
\end{equation}
Finally we commute the Lie derivative along $X_4$ into \eqref{resc-wave} multiplied by $\rho^2 / r^2$. Denoting $\mathcal{L}_{X_4}(\hat\psi) = \hat\psi_R$, we have
\begin{eqnarray*}
0= \partial_R\left(\frac{\rho^2}{r^2}\left(\square_{\hat g}+\frac{1}{6}\mathrm{Scal}_{\hat g}\right)\hat\psi \right) &= &\frac{\rho^2}{r^2}\left(\square_{\hat g}+\frac{1}{6}\mathrm{Scal}_{\hat g}\right)\partial_R\hat\psi - 4a^2R \partial_{^*t}\partial_R\hat\psi - 4 aR \partial_R\partial_{^*\varphi}\hat\psi\\
&&- 2a^2\partial_{^*t}\hat\psi - 2a \partial_{^*\varphi}\hat\psi -(4a^2R^3-6MR^2+2R)\partial_R^2\hat\psi \\
&&-(12aR^2-12MR+2)\partial_R\hat\psi + 2(M-2a^2R)\hat\psi \, ,
\end{eqnarray*}
which gives the equation
\begin{eqnarray}
\left(\square_{\hat g}+\frac{1}{6}\mathrm{Scal}_{\hat g}\right)\partial_R\hat\psi &=& \frac{r^2}{\rho^2}\left( 4a^2R \partial_{^*t}\partial_R\hat\psi + 4 aR \partial_{^*\varphi}\partial_R\hat\psi + 2a^2\partial_{^*t}\hat\psi + 2a \partial_{^*\varphi}\hat\psi \right) \nonumber\\
&&+\frac{r^2}{\rho^2} (4a^2R^3-6MR^2+2R)\partial_R^2\hat\psi \nonumber \\
&& + \frac{r^2}{\rho^2} \left( (12aR^2-12MR+2)\partial_R\hat\psi - 2(M-2a^2R)\hat\psi \right) \, . \label{resc-wave1}
\end{eqnarray}

\subsection{Approximate conservation laws}

We use the stress-energy tensor for the wave equation on $\hat g$
\[ T_{ab}(\hat\psi) = T_{(ab)}(\hat\psi) = \partial_a \hat\psi \partial_b \hat\psi - \frac{1}{2} \langle \nabla \hat\psi \, ,\nabla \hat\psi \rangle_{\hat{g}}  \hat g_{ab} \]
and the Morawetz vector field $T^a$ to define the energy current
\[ J_b (\hat{\psi}) := T^a T_{ab} (\hat{\psi}) \, .\]
For any scalar field $\hat{\psi}$, we have
\begin{equation}\label{DivSET}
\nabla^a T_{ab}(\hat\psi) = (\square_{\hat g} \hat\psi ) \partial_b \hat\psi \, ,
\end{equation}
which, for solutions of \eqref{RescWaveEq} with $\lambda=0$, becomes
\begin{equation}\label{AppDivFree}
\nabla^a T_{ab}(\hat\psi) = -2\frac{Mr-a^2}{r^2} \hat\psi \partial_b \hat\psi \, .
\end{equation}
This gives the approximate conservation law for solutions of \eqref{RescWaveEq}~:
\begin{align}\label{ACL0}
\nabla^a J_a (\hat{\psi}) &= (\square_{\hat g} \hat\psi ) \nabla_T \hat\psi + ( \nabla^a T^b ) T_{ab}(\hat\psi) \nonumber\\
&= -2\frac{Mr-a^2}{r^2} \hat\psi \nabla_T \hat\psi + \left(\nabla_{(a} T_{b)}  \right) T^{ab} (\hat\psi) \, ,
\end{align}
where $\nabla_{(a}T_{b)}$ is the Killing form for $T^a$ and is given in Lemma \ref{Killing-form}.

Using the equations satisfied by the Lie derivatives along the vectors $X_i$ of a solution $\hat{\psi}$ of \eqref{RescWaveEq} with $\lambda=0$, we obtain approximate conservation laws for these derivatives. For $i=0,1$, we have equation \eqref{resc-wave0}, which says exactly that $\mathcal{L}_{X_i} \hat{\psi}$ is a solution of \eqref{RescWaveEq}. We therefore have the same approximate conservation laws for $\mathcal{L}_{X_i} \hat{\psi}$ as for $\hat{\psi}$~:
\begin{equation} \label{ACL1}
\nabla^a  J_{a} (\mathcal{L}_{X_i}\hat\psi) = -2\frac{Mr-a^2}{r^2} (\mathcal{L}_{X_i}\hat\psi) \nabla_T (\mathcal{L}_{X_i}\hat\psi) + \left(\nabla_{(a} T_{b)}  \right) T^{ab} (\mathcal{L}_{X_i}\hat\psi) \, , ~ i=0,1 \, .
\end{equation}
For $i=2$, putting equality \eqref{DivSET} for $\mathcal{L}_{X_2}\hat\psi$ and equation \eqref{resc-wave2} together gives the approximate conservation law for ${\cal L}_{X_2} \hat{\psi}$~:
\begin{eqnarray}
\nabla^a J_{a} ({\cal L}_{X_2}\hat\psi) &=& (\square_{\hat g}({\cal L}_{X_2} \hat\psi)) \nabla_T ({\cal L}_{X_2}\hat\psi) + \nabla_{(a} T_{b)} T^{ab} ({\cal L}_{X_2} \hat\psi)\nonumber\\
&=&\frac{r^2}{\rho^2} \left( a^2\sin 2\theta \sin{^*\varphi} \partial_{{^*t}}^2 \hat\psi - 2a \partial_{^*t} {\cal L}_{X_3}\hat\psi  - 2aR^2\partial_R {\cal L}_{X_3} \hat\psi \right.\nonumber\\
&& \left.  - 2aR {\cal L}_{X_3}\hat\psi - 2 (MR-a^2R^2 ){\cal L}_{X_2}\hat\psi \right) \nabla_T ({\cal L}_{X_2} \hat\psi) \nonumber \\
&&+ (\nabla_{(a} T_{b)}) T^{ab} ({\cal L}_{X_2} \hat\psi) \, . \label{ACL2}
\end{eqnarray}
Similarly for ${\cal L}_{X_3}\hat\psi$~:
\begin{eqnarray}
\nabla^a J_{a} ({\cal L}_{X_3}\hat\psi) &=& \square_{\hat g}({\cal L}_{X_3} \hat\psi) \nabla_T ({\cal L}_{X_3}\hat\psi) + \nabla
_{(a} T_{b)} T^{ab} ({\cal L}_{X_3} \hat\psi)\nonumber\\
&=&\frac{r^2}{\rho^2} \left( a^2\sin 2\theta \cos{^*\varphi} \partial_{{^*t}}^2 \hat\psi + 2a \partial_{^*t} {\cal L}_{X_2}\hat\psi + 2aR^2\partial_R {\cal L}_{X_2} \hat\psi \right.\nonumber\\
&&\left.  + 2aR {\cal L}_{X_2}\hat\psi - 2 (MR-a^2R^2 ){\cal L}_{X_3}\hat\psi\right) \nabla_T({\cal L}_{X_3} \hat\psi) \nonumber \\
&& + (\nabla_{(a} T_{b)}) T^{ab} ({\cal L}_{X_3} \hat\psi) \, . \label{ACL3}
\end{eqnarray}
Finally, associating equality \eqref{AppDivFree} for $\mathcal{L}_{X_4}\hat\psi = \partial_R\hat\psi$ and equation \eqref{resc-wave1}, we obtain the approximate conservation law for  $\mathcal{L}_{X_4} \hat\psi= \partial_R \hat\psi$~:
\begin{eqnarray}
\nabla^a J_{a} (\partial_R \hat\psi) & = & \frac{r^2}{\rho^2}\left( 4a^2R \partial_{^*t}\partial_R\hat\psi + 4 aR \partial_{^*\varphi}\partial_R\hat\psi + 2a^2\partial_{^*t}\hat\psi + 2a \partial_{^*\varphi}\hat\psi \right) \nabla_T (\partial_R\hat\psi) \nonumber\\
&& +\frac{r^2}{\rho^2}\left((4a^2R^3-6MR^2+2R)\partial_R^2\hat\psi+(12aR^2-12MR+2)\partial_R\hat\psi \right) \nabla_T (\partial_R\hat\psi) \nonumber\\
&&- \left(2(M-2a^2R)\hat\psi + 2\frac{Mr-a^2}{r^2} \partial_R \hat\psi \right)  \nabla_T (\partial_R\hat\psi) +\nabla_{(a} T_{b)} T^{ab} (\partial_R \hat\psi)  \, . \label{ACL4}
\end{eqnarray}
The Killing form $\nabla_{(a} T_{b)} T^{ab}$ has a rather long and cumbersome expression, but the details of its coefficients is unimportant for our estimates. The following lemma gives a more synthetic expression of this Killing form that focuses on its useful properties (the proof can be found in Appendix \ref{AppMorawetz}).
\begin{lemma}\label{short-morawetz}
For a given scalar field $\hat\psi$, denoting by $\hat\psi_{^*t}$, $\hat\psi_{^*\varphi}$, $\hat\psi_R$ and $\hat\psi_\theta$ the derivatives of $\hat\psi$ with respect to the variables $^*t$, $^*\varphi$, $R$ and $\theta$, we can express $\nabla_{(a} T_{b)}  T^{ab} (\hat\psi)$ under the following form
\begin{align*}
\left(\nabla_a T_b  \right) T^{ab} (\hat\psi) &= A_1 \hat\psi_{^*t}^2 + A_2 \hat\psi_{^*t} \hat\psi_R + A_3 \hat\psi_{^*t} \hat\psi_{^*\varphi} +A_4 R \hat\psi_R^2 + A_5 \hat\psi_R \hat\psi_{^*\varphi}\\
&+ A_6 \sin^2\theta \hat\psi_\theta^2 + A_7 \hat\psi_{^*\varphi}^2 + A_8 \vert \nabla_{S^2} \hat\psi \vert^2 \, ,
\end{align*}
where the functions $A_i \; (i=1,2...8)$ are bounded on $\Omega^+_{*t_0}$.
\end{lemma}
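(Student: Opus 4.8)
The plan is to obtain the claimed expression by direct computation, exploiting the explicit formula for the Killing form $\nabla_{(a}T_{b)}\,\d x^a\d x^b$ already established in Lemma \ref{Killing-form} together with the explicit inverse rescaled metric \eqref{inverse-metric}. First I would write $\nabla_{(a}T_{b)}T^{ab}(\hat\psi)=(\nabla_{(a}T_{b)})\,\hat g^{ac}\hat g^{bd}\,\partial_c\hat\psi\,\partial_d\hat\psi$, so that the quantity is a quadratic form in the four partial derivatives $\hat\psi_{^*t},\hat\psi_R,\hat\psi_\theta,\hat\psi_{^*\varphi}$ whose coefficients are obtained by contracting the components of the symmetric tensor $\nabla_{(a}T_{b)}$ (read off from Lemma \ref{Killing-form}) with the components of $\hat g^{-1}$ (read off from \eqref{inverse-metric}). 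This is a finite, purely algebraic contraction producing ten possible coefficients for the ten products $\hat\psi_c\hat\psi_d$; the content of the lemma is to organise and bound these coefficients rather than to record them exactly.

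Next I would identify which of the ten products actually appear. The Killing form in Lemma \ref{Killing-form} has no $\d R^2$, no $\d\theta\,\d R$, no $\d\theta\,\d{^*\varphi}$ and no $\d{^*t}\,\d R$ component, and $\hat g^{-1}$ has a sparse structure as well (in particular $\hat g^{\theta\theta}$ and $\hat g^{^*\varphi ^*\varphi}$ are the only nonzero entries involving $\theta$), so several products collapse. I would then regroup the surviving terms to match the eight prescribed monomials: the $\hat\psi_{^*t}^2$, $\hat\psi_{^*t}\hat\psi_R$, $\hat\psi_{^*t}\hat\psi_{^*\varphi}$, $\hat\psi_R\hat\psi_{^*\varphi}$ and $\hat\psi_{^*\varphi}^2$ terms are immediate once the contraction is done; the $\theta$-derivative contributions should be split, separating the genuinely angular part $|\nabla_{S^2}\hat\psi|^2=\hat\psi_\theta^2+\sin^{-2}\theta\,\hat\psi_{^*\varphi}^2$ with coefficient $A_8$ from a remaining $\sin^2\theta\,\hat\psi_\theta^2$ piece with coefficient $A_6$ and from the extra $\hat\psi_{^*\varphi}^2$ contribution absorbed into $A_7$. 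The explicit factors of $R$ pulled out (as in $A_4 R\hat\psi_R^2$) are chosen to make the boundedness manifest.

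The only real point to verify is the boundedness of the eight coefficients $A_i$ on $\Omega^+_{^*t_0}$. Here I would use the estimates \eqref{epsilonestimates}, which control $r_*/r$ and $-{^*t}R$, together with the fact that the Killing form coefficients in Lemma \ref{Killing-form} are all products of bounded trigonometric functions with the manifestly bounded quantities $R/\rho^2$, $R$, $\partial_R(R/\rho^2)$, $M{^*t}R/\rho^2$ and $(1+{^*t}R)$, while the entries of $\hat g^{-1}$ carry factors $r^2/\rho^2$, $(r^2+a^2)/\rho^2$, $r^2/\rho^2$ etc. The potentially dangerous factors are those that grow like $r^2$, but after contraction they are always compensated: for instance the $r^2/\rho^2$ coming from $\hat g^{^*\varphi ^*\varphi}$ or from $\hat g^{^*t ^*\varphi}$ meets a coefficient of the Killing form that carries an overall $R=1/r$ or $R^2$, and the products $R^2\Delta$ appearing in $\hat g^{RR}$ stay bounded on the neighbourhood. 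I expect the main obstacle to be precisely this bookkeeping of growth versus decay in $R$ and $r$ across the contraction, i.e. checking that every term where an $r^2/\rho^2$ or $(r^2+a^2)/\rho^2$ factor appears is matched by enough powers of $R$ in the corresponding Killing-form entry so that nothing blows up as $r\to\infty$. Since this is uniform over $\Omega^+_{^*t_0}$ once $\varepsilon$ is fixed by \eqref{epsilonestimates}, the boundedness follows, establishing the claimed form of the lemma.
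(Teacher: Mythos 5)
Your overall strategy --- contract the explicit Killing form of Lemma \ref{Killing-form} with the stress-energy tensor using \eqref{inverse-metric}, regroup into the eight prescribed monomials, and check boundedness on $\Omega^+_{^*t_0}$ --- is exactly the paper's. But your starting identity is wrong, and the error propagates: you write $\nabla_{(a}T_{b)}\,T^{ab}(\hat\psi)=(\nabla_{(a}T_{b)})\,\hat g^{ac}\hat g^{bd}\,\partial_c\hat\psi\,\partial_d\hat\psi$, which replaces $T^{ab}$ by $\partial^a\hat\psi\,\partial^b\hat\psi$ and silently drops the trace part $-\frac12\langle\nabla\hat\psi,\nabla\hat\psi\rangle_{\hat g}\,\hat g^{ab}$ of the stress-energy tensor. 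On Kerr the Morawetz field is not divergence-free, so $\hat g^{ab}\nabla_{(a}T_{b)}=\nabla_aT^a$ is a nonzero $O(R)$ quantity; the omitted piece therefore contributes $-\frac12(\nabla_aT^a)\langle\nabla\hat\psi,\nabla\hat\psi\rangle_{\hat g}$, which feeds into every one of the $A_i$ and must itself be checked to respect the claimed structure (in particular that its $\hat\psi_R^2$ part, with coefficient $O(R)\cdot R^2\Delta/\rho^2=O(R^3)$, can be absorbed into $A_4R\hat\psi_R^2$). The paper handles this by splitting $T^{ab}=A^{ab}+B^{ab}\langle\hat\psi,\hat\psi\rangle_{\hat g}$ with $B^{ab}=-\frac12\hat g^{ab}$ and verifying $\nabla_{(a}T_{b)}B^{ab}=O(R)$ separately; your proof needs the analogous step.

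A second, lesser issue is that your boundedness discussion aims at the wrong target. The entries of $\hat g^{-1}$ such as $r^2/\rho^2$ and $(r^2+a^2)/\rho^2$ do not grow --- they are uniformly bounded --- so there is no ``growth versus decay'' bookkeeping of the kind you describe. The one genuinely delicate point, which your write-up does not address, is that the total coefficient of $\hat\psi_R^2$ must vanish to first order in $R$, since the lemma asserts the form $A_4R\hat\psi_R^2$ with $A_4$ bounded (this extra decay is what later lets the term be controlled by the degenerate flux $\frac{R}{|^*t|}\hat\psi_R^2$). The paper's proof is devoted almost entirely to this: it isolates the four contributions to $\hat\psi_R^2$ coming from $(\hat g^{01})^2$, $\hat g^{01}\hat g^{31}$, $\hat g^{11}\hat g^{31}$ and $(\hat g^{31})^2$ and shows they are respectively $O(R^2)$, $O(R^3)$, $O(R)$ and $O(R^5)$, using $^*tR=O(1)$ from \eqref{epsilonestimates} for the third one. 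Saying the factor of $R$ is ``chosen to make the boundedness manifest'' inverts the logic: you must prove that the coefficient actually carries that factor.
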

\begin{proof}
First, recall that
\begin{eqnarray*}
\nabla_{(a}T_{b)} \d x^a \d x^b 
&=& 4\left( (1+{^*t}R)M \frac{\partial}{\partial R}\left(\frac{R}{\rho^2}\right)-\frac{2M{^*t}R}{\rho^2} \right) \d {^*t}^2  -4a\sin^2 \theta {^*t} \d R \d {^*\varphi}\\
&&- 4a\sin^2\theta \left( 2(1+{^*t}R)M \frac{\partial}{\partial R}\left(\frac{R}{\rho^2}\right) - 2 M{^*t} \frac{R}{\rho^2}+ R \right) \d ^*t \d {^*\varphi} \\
&&+ 4a^2\cos^2\theta (1+{^*t}R)R \d \theta^2 \\
&&+ 4a^2\sin^2\theta (1+{^*t}R)\left( R + \sin^2\theta M \frac{\partial}{\partial R}\left(\frac{R}{\rho^2}\right) \right) \d {^*\varphi}^2 \,.
\end{eqnarray*} 
Hence, when calculating $T^{ab} \nabla_{(a}T_{b)}$, we are only interested in the components of $T^{ab}$ along $\partial_{^*t}^2$, $\partial_{^*t}\partial_{^*\varphi}$, $\partial_R\partial_{^*\varphi}$, $\partial_{\theta}^2$ and $\partial_{^*\varphi}^2$. These can be written as
\[ A^{ab} + B^{ab} \langle \hat\psi , \hat\psi \rangle_{\hat{g}} \, ,\]
where, denoting $\hat\psi_a = \frac{\partial}{\partial x^a} \hat\psi$,
\begin{eqnarray*}
A^{ab} \partial_{x^a} \partial_{x^b} &=& (\hat g^{0c}\hat g^{0d}\hat\psi_c \hat\psi_d)\partial_{^*t}^2 + 2(\hat g^{0c}\hat g^{3d}\hat\psi_c \hat\psi_d)\partial_{^*t}\partial_{^*\varphi} + 2(\hat g^{1c}\hat g^{3d}\hat\psi_c \hat\psi_d)\partial_R\partial_{^*\varphi} \\
&&+ (\hat g^{22}\hat g^{22}\hat\psi_\theta^2)\partial_\theta^2 + (\hat g^{3c}\hat g^{3d}\hat\psi_c \hat\psi_d)\partial_{^*\varphi}^2 \, ,\\
B^{ab} \partial_{x^a} \partial_{x^b} &=& \frac{1}{2}\left(\frac{r^2a^2\sin^2\theta}{\rho^2}\partial_{^*t}^2 + 2\frac{ar^2}{\rho^2}\partial_{^*t}\partial_{^*\varphi} + 2 \frac{a}{\rho^2}\partial_R\partial_{^*\varphi} + \frac{r^2}{\rho^2} \partial_{\theta}^2 + \frac{r^2}{\rho^2\sin^2\theta}\partial_{^*\varphi}^2 \right) \, .
\end{eqnarray*}
%
%
The first term of $\left(\nabla_a T_b  \right) T^{ab}$ is~:
\begin{eqnarray*}
\nabla_{(a}T_{b)}A^{ab} &=& 4\left( (1+{^*t}R)M \frac{\partial}{\partial R}\left(\frac{R}{\rho^2}\right)-\frac{2M{^*t}R}{\rho^2}\right) (\hat g^{0c}\hat g^{0d}\hat\psi_c \hat\psi_d)\\
&&- 8a\sin^2\theta \left( 2(1+{^*t}R)M \frac{\partial}{\partial R}\left(\frac{R}{\rho^2}\right) - \frac{2 M{^*t} R}{\rho^2}+ R \right) (\hat g^{0c}\hat g^{3d}\hat\psi_c \hat\psi_d) \\
&&-8a\sin^2 \theta {^*t} (\hat g^{1c}\hat g^{3d}\hat\psi_c \hat\psi_d) + 4a^2\cos^2\theta (1+{^*t}R)R (\hat g^{22}\hat g^{22}\hat\psi_\theta^2)\\
&&+ 4\sin^2\theta (1+{^*t}R)\left( a^2R + a^2\sin^2\theta M \frac{\partial}{\partial R}\left(\frac{R}{\rho^2}\right) \right) (\hat g^{3c}\hat g^{3d}\hat\psi_c \hat\psi_d) \, ,
\end{eqnarray*}
in which all the coefficients of $\hat\psi_c \hat\psi_d$ are bounded. The terms involving $\hat\psi_R^2$ are
\begin{gather*}
4\left( (1+{^*t}R)M \frac{\partial}{\partial R}\left(\frac{R}{\rho^2}\right)-\frac{2M{^*t}R}{\rho^2}\right) (\hat g^{01})^2\hat\psi_R^2 \\
= 4\left( (1+{^*t}R)M \frac{\partial}{\partial R}\left(\frac{R}{\rho^2}\right)-\frac{2M{^*t}R}{\rho^2}\right) \left( \frac{r^2+a^2}{\rho^2}\right)^2\hat\psi_R^2 = O (R^2) \hat\psi_R^2 \, ,
\end{gather*}
\begin{gather*}
- 8a\sin^2\theta \left( 2(1+{^*t}R)M \frac{\partial}{\partial R}\left(\frac{R}{\rho^2}\right) - \frac{2 M{^*t} R}{\rho^2}+ R \right) \hat g^{01}\hat g^{31}\hat\psi_R^2 \\
= - 8a\sin^2\theta \left( 2(1+{^*t}R)M \frac{\partial}{\partial R}\left(\frac{R}{\rho^2}\right) - \frac{2 M{^*t} R}{\rho^2}+ R \right) \frac{a(r^2+a^2)}{\rho^4} \hat\psi_R^2 = O (R^3) \hat\psi_R^2 \, ,
\end{gather*}
\[ -8a\sin^2 \theta {^*t} \hat g^{11}\hat g^{31}\hat\psi_R^2 = -8a\sin^2 \theta {^*t}  \frac{a(r^2+a^2)}{\rho^4}\hat\psi_R^2 = O (\, ^*tR^2) \hat\psi_R^2 = O(R) \hat\psi_R^2 \, , \]
\begin{gather*}
4\sin^2\theta (1+{^*t}R)\left( a^2R + a^2\sin^2\theta M \frac{\partial}{\partial R}\left(\frac{R}{\rho^2}\right) \right) (\hat g^{31})^2 \hat\psi_R^2 \\
= 4\sin^2\theta (1+{^*t}R)\left( a^2R + a^2\sin^2\theta M \frac{\partial}{\partial R}\left(\frac{R}{\rho^2}\right) \right) \left( \frac{a}{\rho^2} \right)^2 \hat\psi_R^2 = O(R^5)\hat\psi_R^2 \, .
\end{gather*}
The second term is $\nabla_{(a}T_{b)}B^{ab} \langle \hat\psi , \hat\psi \rangle_{\hat{g}}$. We have
\begin{eqnarray*}
\nabla_{(a}T_{b)}B^{ab} & = &\frac{1}{2} \left(-\frac{r^2a^2\sin^2\theta}{\rho^2}4(1+ {^*t}R)\left(R - M\frac{\partial}{\partial R}\left(\frac{R}{\rho^2}\right)\right)  + 4{^*t}R^2\left(1- \frac{2Mr}{\rho^2}\right) + 4R \right) \\
&&+\frac{ar^2}{\rho^2} \left( -8(1+{^*t}R)Ma\sin^2\theta \frac{\partial}{\partial R}\left(\frac{R}{\rho^2}\right) + 8{^*t}Ma\sin^2\theta\frac{R}{\rho^2}- 4aR\sin^2\theta \right) \\
&&-\frac{4a^2{^*t}\sin^2\theta}{\rho^2} + \frac{2r^2(1+{^*t}R) }{\rho^2}a^2R\cos^2\theta \\
&&+ 2a^2\frac{r^2}{\rho^2}(1+{^*t}R) \left( R + M\sin^2\theta \left( R - M \frac{\partial}{\partial R}\left(\frac{R}{\rho^2}\right) \right) \right) = O(R)
\end{eqnarray*}
and
\begin{eqnarray*}
\langle \hat\psi , \hat\psi \rangle_{\hat{g}} &=&-\frac{1}{\rho^2}\left(r^2a^2\sin^2\theta\hat\psi^2_{^*t}+2(r^2+a^2)\hat\psi_{^*t}\hat\psi_R+2ar^2\hat\psi_{^*t}\hat\psi_{^*\varphi}+2a\hat\psi_R\hat\psi_{^*\varphi} \right)\\
&& -\frac{1}{\rho^2} \left(R^2\Delta\hat\psi^2_R+r^2\hat\psi_{\theta}^2+\frac{r^2}{\sin^2\theta}\hat\psi^2_{^*\varphi}\right)\, .
 \end{eqnarray*}
All the coefficients in $\nabla_{(a}T_{b)}B^{ab} \langle \hat\psi , \hat\psi \rangle_{\hat{g}}$ are bounded and the one in front $\hat{\psi}_R^2$ is or the order of $R^3$. The lemma is therefore proved.
\end{proof}

\subsection{Energy fluxes}

Given a $1$-form $\alpha$ on $\bar{\cal M}$, its Hodge dual is given as (see \eqref{Star})
\[ * \alpha =  \mathrm{dVol}\, \lrcorner \, \alpha \]
where $\dvol$ is the $4$-volume form associated with the metric $\hat{g}$~:
\begin{eqnarray*}
\mathrm{dVol} &=& \sqrt{\vert \det (\hat{g})\vert } \, \d \,^*t \wedge \d R \wedge \d \theta \wedge \d \,^*\varphi \\
&=& R^2\rho^2 \sin \theta \, \d \,^*t \wedge \d R \wedge \d \theta \wedge \d \,^*\varphi \\
&=& R^2\rho^2 \d \,^*t \wedge \d R \wedge \d^2 \omega
\end{eqnarray*}
with $\d^2 \omega = \sin \theta \, \d \theta \wedge \d \,^*\varphi$. If $\beta$ is another $1$-form, we have the following equality
\[ \alpha \wedge *\beta = -\frac14 \alpha_a \beta^a \dvol \, .\]
Let $S$ be an oriented hypersurface in $\bar{\cal M}$, $l^a$ a vector field transverse to $S$ with the same orientation as $S$ and $n^a$ a normal vector field to $S$ such that $\hat{g} (l,n)=1$, then, denoting by $n^\flat$ the $1$-form associated with $n$, we have
\begin{eqnarray*}
\int_S * \alpha &=& \int_S \langle l , n \rangle_g *\alpha \\
&=& \int_S l \lrcorner ( n^\flat \wedge *\alpha ) - \underset{=0 \mathrm{~since~} n \perp S}{\underbrace{\int_S n^\flat \wedge (l\lrcorner *\alpha )}} \\
&=& \int_S l \lrcorner (-\frac14) \alpha_a n^a \dvol \, . 
\end{eqnarray*}
Because of the choice of orientation of the vector field $l$, the quantity $l \lrcorner \dvol$ is a positive measure on $S$ (i.e. it has the same orientation as $S$) and
\[ \int_S \alpha_a n^a l \, \lrcorner \, \dvol \]
defines the flux of $\alpha$ across $S$. We apply this to our energy current $1$-form $J$ in order to calculate energy fluxes. We define the energy 3-form $E(\hat\psi)$ associated with $T^a$ as minus four times the Hodge dual of the energy current $J$~:
\[ E(\hat\psi) = -4 *J (\hat\psi )= 4 T^a T_a^b (\hat\psi) \partial_b \hook \mathrm{ dVol} \, .\]
Then, for an oriented hypersurface $S$, the energy flux of the scalar field $\hat\psi$ across $S$ will be
\[ {\cal E}_{S} (\hat\psi) := \int_{S} E(\hat \psi) \, .\]
\begin{remark}
The exterior derivative of the Hodge dual of $\alpha$ is related to the divergence of $\alpha$ as follows
\[ \d *\alpha = -\frac14 \nabla_a \alpha^a \mathrm{dVol} \, .\]
Therefore, is $S$ is the boundary of a bounded open subset $\Omega$ of $\bar{\cal M}$ and has outgoing orientation, using Stokes theorem, we see that
\[ -4 \int_S * \alpha = \int_{S} \alpha_a n^a ( l \lrcorner \dvol ) = \int_\Omega \nabla_a \alpha^a \dvol \, .\]
\end{remark}
The following lemma gives us simplified equivalent expressions of the energy fluxes across the leaves of the foliation ${\cal H}_s$ of $\Omega^+_{^*t_0}$~:
\begin{lemma}\label{short-energy}
For $\vert ^*t_0\vert$ large enough, the energy fluxes of $\hat\psi$ across the hypersurfaces ${\cal H}_s$, $0<s\leq 1$ and ${\cal H}_0 = \scri^+_{^*t_0} $ have the following simpler equivalent expressions~:
\begin{eqnarray}
{\cal E}_{{\cal H}_s} (\hat\psi) &\simeq& \int_{{\cal H}_s} \left({^*t}^2 \hat\psi_{^*t}^2 +  \frac{R}{\vert {^*t} \vert}\hat\psi_R^2 + \vert \nabla_{S^2}\hat\psi \vert^2 \right) \d {^*t} \wedge \d^2 \omega \, , \label{equivEnHs} \\
{\cal E}_{\scri^+_{^*t_0}} (\hat\psi) &\simeq & \int_{\scri^+_{^*t_0}} \left({^*t}^2 \hat\psi_{^*t}^2 + \vert \nabla_{S^2}\hat\psi \vert^2 \right) \d {^*t} \wedge \d^2 \omega \, , \label{equivEnH0}
\end{eqnarray}
where
\[ \vert \nabla_{S^2} \hat\psi \vert^2 = \hat\psi_\theta^2 + \frac{1}{\sin^2\theta} \hat\psi_{^*\varphi}^2 \, .\]
We do not need the details of the energy flux ${\cal E}_{{\cal S}_{^*t}} (\hat\psi)$ so we do not give a simplified form for it. All that matters here is that, thanks to the dominant energy condition, it is non negative.
\end{lemma}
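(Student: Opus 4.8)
The plan is to compute the flux $\mathcal{E}_{\mathcal{H}_s}(\hat\psi) = \int_{\mathcal{H}_s} J_a n^a\,(l\hook\dvol)$ explicitly and then match it, up to uniform equivalence, with the stated diagonal quadratic form. I would take the transverse field to be the identifying vector field $l = \nu$, for which the induced measure was already computed as $\nu\hook\dvol \simeq -\tfrac{1}{|{^*t}|}\d{^*t}\,\d^2\omega$ on each leaf, and let $n$ be the future-directed normal normalised by $\hat g(\nu,n)=1$. Writing $J_a = T^b T_{ba}(\hat\psi)$ and using the form of the stress-energy tensor, the integrand splits as
\[
J_a n^a = (\nabla_T\hat\psi)(\nabla_n\hat\psi) - \tfrac{1}{2}\langle\nabla\hat\psi,\nabla\hat\psi\rangle_{\hat g}\,\hat g(T,n).
\]
Each ingredient is then evaluated to leading order on $\mathcal{H}_s$ using ${^*t}=-sr_*$, $r_*/r\simeq 1$ and $-{^*t}R\simeq s$ from \eqref{epsilonestimates}: one finds $\hat g(T,n)\simeq|{^*t}|$, while $\nabla_T\hat\psi$ and $\nabla_n\hat\psi$ are dominated by their $\hat\psi_{^*t}$-components, of respective sizes ${^*t}^2$ and $|{^*t}|$. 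Multiplying by the measure $\tfrac{1}{|{^*t}|}\d{^*t}\,\d^2\omega$ produces exactly the diagonal coefficients ${^*t}^2$ for $\hat\psi_{^*t}^2$, the coefficient $1$ for $|\nabla_{S^2}\hat\psi|^2$ (this one coming from the angular part of $\langle\nabla\hat\psi,\nabla\hat\psi\rangle_{\hat g}$), and $\tfrac{R}{|{^*t}|}$ for $\hat\psi_R^2$.

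The real content is then to show that all remaining terms, namely the cross terms $\hat\psi_{^*t}\hat\psi_R$, $\hat\psi_{^*t}\hat\psi_{^*\varphi}$, $\hat\psi_R\hat\psi_{^*\varphi}$ and the subleading diagonal corrections, are controlled by these three diagonal terms, uniformly in $s\in(0,1]$ and in $(a,M)$ on any compact set. For the upper bound this amounts to bounding each coefficient and absorbing the cross terms by Young's inequality; for the lower bound one invokes the dominant energy condition, which gives $J_a n^a\geq 0$ since $T^a$ is uniformly timelike and future-oriented by Lemma \ref{Killing-form} and $n$ is future-directed timelike for $s>0$. The delicate point is the coupling between $\hat\psi_{^*t}$ and $\hat\psi_R$: a naive estimate gives an $O(1)$ coefficient for $\hat\psi_{^*t}\hat\psi_R$ in the flux integrand, which would render the $2\times 2$ block in $(\hat\psi_{^*t},\hat\psi_R)$ indefinite as $s\to 0$, precisely because the $\hat\psi_R^2$ coefficient $\tfrac{R}{|{^*t}|}$ degenerates there.

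I therefore expect the main obstacle to be establishing that this cross term is in fact of size $s^2$ rather than $O(1)$: the contributions coming from $(\nabla_T\hat\psi)(\nabla_n\hat\psi)$ and from $-\tfrac{1}{2}\langle\nabla\hat\psi,\nabla\hat\psi\rangle_{\hat g}\,\hat g(T,n)$ must be expanded one order beyond the leading term and shown to cancel down to $O(s^2)$, which is the algebraic reflection of the fact, announced in the introduction, that the non-decaying part of the Killing form of $T^a$ is harmless for the estimates. Granting this cancellation, Young's inequality yields $|s^2\hat\psi_{^*t}\hat\psi_R|\lesssim {^*t}^2\hat\psi_{^*t}^2 + \tfrac{R}{|{^*t}|}\hat\psi_R^2$ uniformly in $s$ (using $R|{^*t}|\simeq s$ together with $s^3\leq 1$), and the analogous absorption of the $\hat\psi_{^*\varphi}$ cross terms, each carrying a factor $a$ and extra powers of $R$, into the angular and $\hat\psi_{^*t}$ terms then closes both inequalities and proves \eqref{equivEnHs}.

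Finally, the case \eqref{equivEnH0} of $\scri^+_{^*t_0}$ follows either by letting $s\to 0$, where $\tfrac{R}{|{^*t}|}\hat\psi_R^2\to 0$ leaves only ${^*t}^2\hat\psi_{^*t}^2 + |\nabla_{S^2}\hat\psi|^2$, or directly, since on the null hypersurface $\scri^+$ the normal $\partial_{^*t}$ is tangent and the transverse derivative $\hat\psi_R$ drops out of the flux. The stated non-negativity of $\mathcal{E}_{\mathcal{S}_{^*t_0}}(\hat\psi)$ requires no computation: it is again an immediate consequence of the dominant energy condition, $\mathcal{S}_{^*t_0}$ being an outgoing null hypersurface across which the flux of the future-causal current $J$ is non-negative.
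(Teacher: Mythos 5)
Your overall strategy coincides with the paper's: write the flux as an explicit quadratic form in $(\hat\psi_{^*t},\hat\psi_R,\hat\psi_\theta,\hat\psi_{^*\varphi})$ on each leaf, read off the diagonal coefficients ${^*t}^2$, $R/\vert{^*t}\vert$, $1$, and absorb the cross terms by Young's inequality. (The paper does not introduce a normal $n$; it contracts the energy $3$-form directly with the tangent directions of ${\cal H}_s$ using $\d R = \frac{\Delta R^2}{s(r^2+a^2)}\d{^*t}$ on the leaf, but this is only a difference of bookkeeping.) You also correctly identify the crux: the $\hat\psi_{^*t}\hat\psi_R$ coefficient must be of size $({^*t}R)^2\simeq s^2$, since an $O(1)$ coefficient could not be absorbed against the degenerate weight $R/\vert{^*t}\vert$.

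The gap is that you do not establish this; you explicitly ``grant the cancellation'' and your proposed mechanism for it is not the right one. There is in fact no cancellation between $(\nabla_T\hat\psi)(\nabla_n\hat\psi)$ and $-\frac12\langle\nabla\hat\psi,\nabla\hat\psi\rangle_{\hat g}\,\hat g(T,n)$ to be exhibited: the dominant contribution to the cross term is the single term ${^*t}^2\,\hat g^{RR}\,\hat\psi_{^*t}\hat\psi_R$ with $\hat g^{RR}=-R^2\Delta/\rho^2=O(R^2)$, so the smallness comes directly from the structure of the inverse rescaled metric \eqref{inverse-metric}, and the computation consists in checking that all the other contributions (several of which carry factors $a^2\sin^2\theta$ with no $R$) are also subleading. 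Moreover, ``$\vert s^2\hat\psi_{^*t}\hat\psi_R\vert\lesssim{^*t}^2\hat\psi_{^*t}^2+\frac{R}{\vert{^*t}\vert}\hat\psi_R^2$'' with an unspecified implicit constant does not suffice for the two-sided equivalence: one needs the Young constant to beat the actual diagonal constants, and the margin is finite, not large. In the paper the $\hat\psi_R^2$ coefficient is computed to be $\frac{R}{\vert{^*t}\vert}\cdot\frac{3x^2-6x+4}{2}+\dots$ with $x=\vert{^*t}\vert R$, whose lower bound $\frac12\frac{R}{\vert{^*t}\vert}$ (from $3(x-1)^2+1\geq1$) forces the parameter window $1<\lambda<\sqrt2$ in \eqref{Parameterlambda}; without these explicit constants the lower bound does not close. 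Finally, the dominant energy condition only gives $J_an^a\geq0$, which is not the coercive lower bound ${\cal E}_{{\cal H}_s}\gtrsim\int(\dots)$ you need -- indeed $n$ degenerates to a null direction as $s\to0$, which is exactly why the $\hat\psi_R^2$ weight collapses to $R/\vert{^*t}\vert$ and why the positivity must be verified quantitatively rather than cited.
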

\begin{proof}
The energy $3$-form reads
\begin{eqnarray*}
E &=& 4 T^a T_a^b (\hat\psi) \partial_b \hook \mathrm{ dVol} \\
&=& 4 R^2\rho^2 \left( T^a T_a^0 (\hat{\psi}) \d \, ^*t - T^a T_a^1 (\hat{\psi}) \d R \right) \wedge \d^2 \omega \\
&&+ 4 R^2\rho^2 \d \, ^*t \wedge \d R \wedge \left( T^a T_a^2 (\hat{\psi}) \d \theta - T^a T_a^3 (\hat{\psi}) \d \, ^*\varphi \right) \, .
\end{eqnarray*}
Only the first two terms are tangent to ${\cal H}_s$ and therefore
\[{\cal E}_{{\cal H}_s} (\hat\psi) = 4\int_{{\cal H}_s}  - T^a T^1_a R^2 \rho^2\, \d {^*t} \wedge \d ^2 \omega + T^a T_a^0 R^2 \rho^2\, \d R \wedge \d^2 \omega \, . \]
Since on the hypersurface ${ \cal H}_s$, for $0<s\leq 1$, we have
\[ \d R = \frac{\Delta R^2}{s(r^2 + a^2)} \d {^*t} \, ,\]
it follows that
\[{\cal E}_{{\cal H}_s} (\hat\psi) = 4 \int_{{\cal H}_s} \left(  - T^a T^1_a + \frac{\Delta R^2}{s(r^2 + a^2)}T^a T_a^0  \right) R^2 \rho^2\, \d {^*t} \wedge \d ^2 \omega \, .\]
We have
\[ T^aT_a^0 = {^*t}^2 T_0^0 - 2(1+ {^*t}R) T_1^0 \, ,\]
with
\[ T_0^0 = \hat g^{0\ba}T_{0\ba}= -\frac{1}{\rho^2} \left(\frac{r^2a^2\sin^2\theta}{2}\hat\psi_{^*t}^2 -\frac{R^2\Delta}{2}\hat\psi_{R}^2 - \frac{r^2}{2} \hat\psi_{\theta}^2 - \frac{r^2}{2\sin^2\theta}\hat\psi_{^*\varphi}^2 - a\hat\psi_R \hat\psi_{^*\varphi}\right) \]
and 
\[ T_1^0 = \hat g^{0\ba}T_{1\ba} = -\frac{1}{\rho^2} \left((r^2+a^2)\hat\psi_R^2 + r^2a^2\sin^2\theta \hat\psi_{^*t}\hat\psi_R + ar^2 \hat\psi_R \hat\psi_{^*\varphi}  \right) \, . \]
Also,
\[ T^aT_a^1 = {^*t}^2 T_0^1 - 2(1+ {^*t}R) T_1^1 \, , \]
with
\begin{eqnarray*}
T_0^1 &=& -\frac{1}{\rho^2} \left((r^2 + a^2)\hat\psi_{^*t}^2 + R^2\Delta \hat \psi_{^*t} \hat\psi_R + a\hat\psi_{^*t}\hat\psi_{^*\varphi} \right)\\
&&+ MRa^2\sin^2\theta\frac{1}{\rho^4} \left(1 - \frac{1}{\rho^2}\right) \left(r^2a^2\sin^2\theta \hat\psi_{^*t}^2 + R^2\Delta \hat\psi_R^2 + r^2 \hat\psi_\theta^2 + \frac{r^2}{\sin^2\theta} \hat\psi_{^*\varphi}^2 \right)\\
&&+ MRa^2\sin^2\theta\frac{1}{\rho^4} \left(1 - \frac{1}{\rho^2}\right) \left(2(r^2+a^2)\hat\psi_{^*t}\hat\psi_R + 2ar^2 \hat\psi_{^*t}\hat\psi_{^*\varphi} + 2a\hat\psi_R  \hat\psi_{^*\varphi}  \right)
\end{eqnarray*}
and
\[ T_1^1 =  -\frac{1}{\rho^2} \left(\frac{R^2\Delta}{2}\hat\psi_R^2 - \frac{r^2a^2\sin^2\theta}{2}\hat\psi_{^*t}^2 - \frac{r^2}{2}\hat\psi_{\theta}^2 - \frac{r^2}{2\sin^2\theta}\hat\psi_{^*\varphi}^2 - ar^2 \hat\psi_{^*t}\hat\psi_{^*\varphi}  \right) \, . \]
Using these expressions and noting that $R. \rho^{-4} = O (R^5)$, we have~:
\begin{gather}
- T^a T^1_a + \frac{\Delta R^2}{s(r^2 + a^2)}T^a T_a^0 \nonumber \\
= \frac{1}{\rho^2} \left({^*t}^2(r^2+a^2) + (1+ {^*t}R)r^2a^2\sin^2\theta - \frac{{^*t}^2r^2a^2\sin^2\theta}{2}\frac{\Delta R^2}{s(r^2 + a^2)} + O(R^3 \sin^4 \theta ) \right)\hat\psi_{^*t}^2 \nonumber \\
+ \frac{1}{\rho^2} \left( -(1+{^*t}R)R^2\Delta + \frac{\Delta R^2}{s(r^2 + a^2)} \left(\frac{{^*t}^2R^2\Delta}{2} + 2(1+ {^*t}R)(r^2 + a^2) \right) + O(R^5 \sin^2\theta) \right)\hat \psi_R^2 \nonumber \\
+ \frac{1}{\rho^2} \left( (1 + {^*t}R)r^2 + \frac{{^*t}^2r^2}{2}\frac{\Delta R^2}{s(r^2 + a^2)} +O(R^3 \sin^2 \theta )\right)\hat\psi_\theta^2 \nonumber \\
+ \frac{1}{\rho^2} \left(\frac{1}{\sin^2\theta} (1 + {^*t}R)r^2 + \frac{1}{\sin^2\theta}\frac{{^*t}^2r^2}{2}\frac{\Delta R^2}{s(r^2 + a^2)} + O(R^3) \right)\hat\psi_{^*\varphi}^2 \nonumber \\
+ \frac{1}{\rho^2} \left( {^*t}^2R^2 \Delta + 2(1 + {^*t}R)r^2a^2\sin^2\theta\frac{\Delta R^2}{s(r^2 + a^2)}  + O(R^3 \sin^2 \theta )\right)\hat\psi_{^*t} \hat\psi_R \nonumber \\
+ \frac{1}{\rho^2} \left( {^*t}^2a + 2(1 + {^*t}R) ar^2 + O(R^3 \sin^2 \theta ) \right)\hat\psi_{^*t}\hat\psi_{^*\varphi} \nonumber \\
+ \frac{1}{\rho^2} \left( \left( {^*t}^2a + 2(1 + {^*t}R)ar^2 \right)\frac{\Delta R^2}{s(r^2 + a^2)} + O(R^5 \sin^2 \theta ) \right) \hat\psi_R \hat\psi_{^*\varphi} \, . \label{EnergyDensityLimExp}
\end{gather}
Recalling from \eqref{epsilonestimates} that in $\Omega^+_{^*t_0}$, we have
\[ -r_*<{^*}t<<-1 \, , \, {^*t}R \in [-1-\varepsilon,0] \, , \, r_*R \in [1, 1+\varepsilon ] \, ,\]
and using the fact that $s = -\,^*t /r_* = \vert ^*t \vert/r_*$, let us obtain equivalents or estimates for the different terms in \eqref{EnergyDensityLimExp}. We shall not distinguish between the various small quantities that appear in the estimates below~; they shall all be denoted $\varepsilon$ since, assuming $\vert ^*t_0 \vert$ large enough they can all be assumed to be the same small $\varepsilon$.
\begin{itemize}
\item The coefficient of $\hat{\psi}_{^*t}^2$ is
\begin{eqnarray*}
A &=& {^*t}^2\left( \frac{r^2+a^2}{\rho^2} + \frac{(1+ {^*t}R)r^2a^2\sin^2\theta}{^*t^2 \rho^2} - \frac{\Delta a^2\sin^2\theta}{2s(r^2 + a^2)\rho^2} + O(R^5 \sin^4 \theta \,^*t^{-2} ) \right) \\
&=& {^*t}^2\left( \frac{r^2+a^2}{\rho^2} + \frac{(1+ {^*t}R)r^2a^2\sin^2\theta}{^*t^2 \rho^2} + \frac{\Delta a^2\sin^2\theta \, r_*}{2\,^*t(r^2 + a^2)\rho^2} + O(R^5 \sin^4 \theta \,^*t^{-2} ) \right)
\end{eqnarray*}
and therefore
\begin{equation} \label{EstimA}
(1-\varepsilon) ^*t^2 \leq A \leq (1+\varepsilon) ^*t^2 \, .
\end{equation}
\item We then turn to the coefficient of $\hat{\psi}_R^2$~:
\begin{eqnarray*}
B&=& -\frac{(1+{^*t}R)R^2\Delta}{\rho^2} + \frac{\Delta R^2}{s(r^2 + a^2)\rho^2} \left(\frac{{^*t}^2R^2\Delta}{2} + 2(1+ {^*t}R)(r^2 + a^2) \right) + O(R^7 \sin^2\theta) \\
&=& \frac{R}{\vert ^*t \vert} \left( -\frac{(1+{^*t}R)R \vert ^*t\vert \Delta}{\rho^2} + \frac{\Delta Rr_*}{(r^2 + a^2)\rho^2} \left(\frac{{^*t}^2R^2\Delta}{2} + 2(1+ {^*t}R)(r^2 + a^2) \right) \right. \\
&& + O(R^8 \sin^2\theta \vert ^*t\vert^{-1}) \bigg) \, .
\end{eqnarray*}
Putting $x = \vert ^*t\vert R$, we have
\begin{eqnarray*}
B&=& \frac{R}{\vert ^*t \vert} \left( \frac{\Delta}{\rho^2} \left( -(1-x)x + \frac{Rr_*}{(r^2 + a^2)} \left(\frac{x^2\Delta}{2} + 2(1-x)(r^2 + a^2) \right) \right) \right. \\
&& + O(R^8 \sin^2\theta \vert ^*t\vert^{-1}) \bigg) \\
&=& \frac{R}{\vert ^*t \vert} \left( -(1-x)x + \frac{x^2}{2} + 2(1-x) + O(R) + O(R^8 \sin^2\theta \vert ^*t\vert^{-1}) \right) \\
&=& \frac{R}{\vert ^*t \vert} \left( \frac{3x^2-6x+4}{2}+ O(R) + O(R^8 \sin^2\theta \vert ^*t\vert^{-1}) \right) \, .
\end{eqnarray*}
Noting that
\[ 3x^2-6x+4 = 3 (x-1)^2 +1 \geq 1 \, ,\]
we conclude that
\begin{equation} \label{EstimB1}
B \geq \frac{(1-\varepsilon)}{2} \frac{R}{\vert ^*t\vert} \, .
\end{equation}
Moreover, since $x \in [0,1+\varepsilon] \subset[0,2]$, we also have
\begin{equation} \label{EstimB2}
B \leq (2+\varepsilon) \frac{R}{\vert ^*t\vert} \, .
\end{equation}
\item The factor in front of
\[ \vert \nabla_{S^2} \hat{\psi}^2 \vert^2 = \hat{\psi}_\theta^2 + \frac{1}{\sin ^2 \theta } \hat{\psi}_{^*\varphi}^2 \]
is
\begin{eqnarray*}
C &=& \frac{1}{\rho^2} \left( (1 + {^*t}R)r^2 + \frac{{^*t}^2r^2}{2}\frac{\Delta R^2}{s(r^2 + a^2)} +O(R^3 \sin^2 \theta )\right) \\
&=& (1 + {^*t}R)\frac{r^2}{\rho^2} + \frac{\vert ^*t \vert r_* r^2}{2}\frac{\Delta R^2}{(r^2 + a^2)\rho^2} +O(R^5 \sin^2 \theta )
\end{eqnarray*}
which entails
\begin{equation} \label{EstimC}
1-\varepsilon \leq C \leq 1+\varepsilon \, .
\end{equation}
\item The term involving $\hat\psi_{^*t} \hat\psi_R$ is
\begin{eqnarray*}
D &=& \frac{1}{\rho^2} \left( {^*t}^2R^2 \Delta + 2(1 + {^*t}R)r^2a^2\sin^2\theta\frac{\Delta R^2}{s(r^2 + a^2)}  + O(R^3 \sin^2 \theta )\right) \hat\psi_{^*t} \hat\psi_R \\
&=& \left( {^*t}^2R^2 \frac{\Delta}{\rho^2} + 2(1 + {^*t}R)r^2a^2\sin^2\theta\frac{\Delta R^2r_*}{\vert^*t\vert (r^2 + a^2)\rho^2}  + O(R^5 \sin^2 \theta )\right) \hat\psi_{^*t}\hat\psi_R \, .
\end{eqnarray*}
Writing
\begin{eqnarray*}
\vert D \vert &=& \vert ^*t \hat\psi_{^*t} \vert \sqrt{\frac{R}{\vert ^* t \vert }} \vert \hat\psi_R \vert \left( \vert ^*t R\vert^{3/2} \frac{\Delta}{\rho^2} + 2(1 + {^*t}R)r^2a^2\sin^2\theta\frac{\Delta R^{3/2}r_*}{\vert^*t\vert^{3/2} (r^2 + a^2)\rho^2}  \right. \\
&& \left. + O(R^{9/2} \vert ^* t\vert^{-1/2} \sin^2 \theta )\right) \, ,
\end{eqnarray*}
we can estimate $\vert D \vert$ as follows
\begin{equation} \label{EstimD}
\vert D \vert \leq (1+ \varepsilon ) \left( \frac{\lambda^2}{2} (^* t)^2 \hat{\psi}_{^*t}^2 + \frac{1}{2\lambda^2}\frac{R}{\vert ^* t\vert} \hat{\psi}_R^2 \right) \, ,
\end{equation}
where $\lambda >0$ can be chosen arbitrarily. In order to control $D$ by the terms involving $\hat{\psi}_{^*t}^2 $ and $\hat{\psi}_R^2$, we need to ensure that
\begin{equation} \label{Parameterlambda}
\frac{\lambda^2}{2}<1 \mbox{ and } \frac{1}{2\lambda^2} < \frac{1}{2}\, ,
\end{equation}
which is equivalent to $1<\lambda<\sqrt 2$.
\item The term involving $\hat{\psi}_{^*t}\hat{\psi}_{^*\varphi}$ is (putting again $x = \vert ^*t\vert R$)
\begin{eqnarray*}
E &=& \frac{1}{\rho^2} \left( {^*t}^2a + 2(1 + {^*t}R) ar^2 + O(R^3 \sin^2 \theta ) \right)\hat\psi_{^*t}\hat\psi_{^*\varphi} \\
&=& \frac{a}{R^2\rho^2} \left( x^2 + 2 - 2x + O(R^5 \sin^2 \theta ) \right)\hat\psi_{^*t}\hat\psi_{^*\varphi} \\
&=& \frac{a}{R^2\rho^2} \left( (x -1)^2+1 + O(R^5 \sin^2 \theta ) \right)\hat\psi_{^*t}\hat\psi_{^*\varphi} \, .
\end{eqnarray*}
Since $x \in [0,1+\varepsilon ] \subset [0,2]$, we have
\begin{eqnarray}
\vert E \vert &\leq & \vert a \vert (2+\varepsilon) \vert \hat\psi_{^*t}\vert \vert \hat\psi_{^*\varphi} \vert \nonumber \\
&\leq & \frac{\vert a \vert \sin \theta (2+\varepsilon)}{\vert ^*t\vert} \vert ^* t \vert \vert \hat\psi_{^*t}\vert \frac{1}{\sin \theta} \vert \hat\psi_{^*\varphi} \vert \nonumber \\
&\leq & \varepsilon \left( ^* t ^2\hat\psi_{^*t}^2 + \frac{1}{\sin^2 \theta} \hat\psi_{^*\varphi}^2 \right) \, . \label{EstimE}
\end{eqnarray}
\item Finally, the term involving $\hat{\psi}_{R}\hat{\psi}_{^*\varphi}$ is
\begin{eqnarray*}
F &=& \frac{1}{\rho^2} \left( \left( {^*t}^2a + 2(1 + {^*t}R)ar^2 \right)\frac{\Delta R^2}{s(r^2 + a^2)} + O(R^5 \sin^2 \theta ) \right) \hat\psi_R \hat\psi_{^*\varphi} \\
&=& \frac{a}{R^2 \rho^2} \left( \left( x^2+2 -2x \right)\frac{\Delta R^4r_*}{\vert ^*t \vert (r^2 + a^2)} + O(R^5 \sin^2 \theta ) \right) \hat\psi_R \hat\psi_{^*\varphi} \, .
\end{eqnarray*}
We can estimate this term as follows
\begin{eqnarray}
\vert F \vert &\leq & \frac{\vert a\vert R^3}{\vert ^* t \vert} (2+\varepsilon)  \vert \hat\psi_R \vert \vert \hat\psi_{^*\varphi} \vert \nonumber \\
&\leq & \frac{\vert a\vert R^{5/2}\sin\theta}{\vert ^* t \vert^{1/2}} (2+\varepsilon) \sqrt{\frac{R}{\vert ^*t\vert}} \vert \hat\psi_R \vert \frac{1}{\sin \theta} \vert \hat\psi_{^*\varphi} \vert \nonumber \\
&\leq & \varepsilon \left( \frac{R}{\vert ^*t \vert} \hat\psi_R^2 + \frac{1}{\sin \theta} \hat\psi_{^*\varphi}^2 \right) \, . \label{EstimF}
\end{eqnarray}
\end{itemize}
Putting together the estimates \eqref{EstimA}-\eqref{EstimF}, the equivalence \eqref{equivEnHs} follows.

On ${\cal H}_0$, we have $R=0$ so the term in ${\cal E}_{{\cal H}_0} (\hat\psi)$ involving $\d R \wedge \d^2\omega$ drops out. Moreover, the quantity $R^2\rho^2$ tends to $1$ on ${\cal H}_0$. Hence, we get the simpler expression~:
\[ {\cal E}_{{\cal H}_0} (\hat\psi) = 4\int_{{\cal H}_s}  - T^a T^1_a \d {^*t} \wedge \d ^2 \omega \, .\]
The equivalence \eqref{equivEnH0} then follows by similar but simpler calculations.

Finally, the energy of $\hat\psi$ on $\mathcal{S}_{^*t_0}$ is non negative since~:
\begin{itemize}
\item $\mathcal{S}_{^*t_0}$ is a null hypersurface, oriented by its future-pointing null normal vector field~;
\item the stress-energy tensor $T_{ab}$ satisfies the dominant energy condition and the Morawetz vector field $T^a$ is timelike and oriented towards the future, whence $T^a_bT^b$ is a causal and future-pointing vector field.
\end{itemize}
\end{proof}

\subsection{Energy estimates and  peeling}

\subsubsection{Basic estimate}

We start by a technical lemma giving a Poincaré-type estimate.
\begin{lemma}\label{ine-Poin}
Given $^*t_0 < 0$, there exists a constant $C>0$ such that for any $f\in {\mathcal C}_0^\infty(\mathbb R)$
$$\int_{-\infty}^{^*t_0} (f(^*t))^2 \d {^*t} \leq C\int_{-\infty}^{^*t_0} {^*t}^2 (f'(^*t))^2 \d {^*t} \; .$$
\end{lemma}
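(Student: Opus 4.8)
The plan is to recognise this as a one-dimensional \emph{Hardy-type} inequality and to establish it by a single integration by parts followed by the Cauchy--Schwarz inequality, exactly in the spirit of the classical derivation of Hardy's inequality. The favourable structural feature here is that the integration is over the half-line $(-\infty, {^*t_0})$, on which $^*t$ is not only bounded away from zero but, crucially, is negative; this sign is precisely what will make the relevant boundary term harmless.

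First I would record that, since $f\in\mathcal{C}_0^\infty(\mathbb R)$, the function vanishes for $^*t$ sufficiently negative, so every integral below is finite and the boundary contribution at $-\infty$ automatically drops out. The key step is then to integrate $\int_{-\infty}^{^*t_0} f^2\,\d{^*t}$ by parts, using $^*t$ as a primitive of the constant $1$:
$$\int_{-\infty}^{^*t_0} f(^*t)^2\,\d{^*t} = \Big[\,{^*t}\,f(^*t)^2\,\Big]_{-\infty}^{^*t_0} - 2\int_{-\infty}^{^*t_0} {^*t}\,f(^*t)\,f'(^*t)\,\d{^*t}.$$
The boundary term at $-\infty$ is zero by compact support, while the boundary term at $^*t_0$ equals ${^*t_0}\,f(^*t_0)^2\le 0$ because $^*t_0<0$; it may therefore be discarded while preserving the direction of the inequality.

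It remains to control the bulk term. Writing ${^*t}\,f\,f' = f\cdot({^*t}\,f')$ and applying Cauchy--Schwarz yields
$$\int_{-\infty}^{^*t_0} f^2\,\d{^*t} \le 2\left(\int_{-\infty}^{^*t_0} f^2\,\d{^*t}\right)^{1/2}\left(\int_{-\infty}^{^*t_0} {^*t}^2\,(f')^2\,\d{^*t}\right)^{1/2}.$$
Dividing by $\left(\int_{-\infty}^{^*t_0} f^2\,\d{^*t}\right)^{1/2}$ (the inequality being trivial if this quantity vanishes) gives the claim, in fact with the explicit constant $C=4$, uniform in $^*t_0$.

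The only point demanding genuine care — and the one I would regard as the main obstacle, albeit a mild one — is the justification of the two boundary terms in the integration by parts: one must invoke the compact support of $f$ to annihilate the contribution at $-\infty$ and exploit the sign $^*t_0<0$ to drop the boundary term at $^*t_0$ with the correct inequality direction. Once these signs are settled, the rest is a routine use of Cauchy--Schwarz with no delicate estimate involved.
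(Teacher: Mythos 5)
Your proof is correct: the integration by parts, the sign argument discarding the boundary term $^*t_0\,f(^*t_0)^2\le 0$ (which is exactly where the hypothesis $^*t_0<0$ enters), and the Cauchy--Schwarz step are all valid, and they give the explicit constant $C=4$ uniformly in $^*t_0$. The paper does not reproduce a proof of this lemma but defers to \cite{MaNi2009}, where the argument is the same standard Hardy-type computation, so your write-up matches the intended approach and is, if anything, more self-contained.
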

This has the following consequence~:
\begin{lemma} \label{L2Control}
For ${^*t}_0 < 0$, $\vert {^*t}_0 \vert$ large enough and for any smooth compactly supported initial data at $t=0$, the associated rescaled solution $\hat\psi$ satisfies
$$ \int_{{\cal H}_s} \hat\psi^2 \d {^*t} \d^2 \omega \lesssim  {\cal E}_{{\cal H}_s} (\hat\psi) \, .$$
\end{lemma}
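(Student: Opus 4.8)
The plan is to reduce the estimate to a one-dimensional Hardy--Poincar\'e inequality in the $^*t$ variable, applied fibrewise over the sphere, and then to recognise the resulting weighted gradient integral as the energy flux via Lemma~\ref{short-energy}.

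First I fix $s\in(0,1]$, use $(^*t,\theta,^*\varphi)$ as coordinates on ${\cal H}_s$, and set $\Psi(^*t,\theta,^*\varphi):=\hat\psi|_{{\cal H}_s}$. Since the data are smooth and compactly supported, finite propagation speed ensures that, for each fixed $(\theta,^*\varphi)$, the map $^*t\mapsto\Psi$ is smooth and vanishes as $^*t\to-\infty$ (that is, towards $i^0$). Lemma~\ref{ine-Poin} then applies on the relevant half-line and gives $\int\Psi^2\,\d {^*t}\lesssim\int(^*t)^2(\partial_{^*t}\Psi)^2\,\d {^*t}$ on each fibre, with a constant that can be taken independent of $(\theta,^*\varphi)$ and of $s$ (the constant in such a one-dimensional inequality does not depend on the negative upper endpoint).

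The crucial step is to express the derivative $\partial_{^*t}\Psi$ tangent to ${\cal H}_s$ in terms of the partial derivatives of $\hat\psi$. On ${\cal H}_s$ the coordinate $R$ is a function of $^*t$ with $\d R=\frac{\Delta R^2}{s(r^2+a^2)}\,\d {^*t}$ (as already used in the proof of Lemma~\ref{short-energy}), so that $\partial_{^*t}\Psi=\hat\psi_{^*t}+\frac{\Delta R^2}{s(r^2+a^2)}\hat\psi_R$ and hence $(^*t)^2(\partial_{^*t}\Psi)^2\leq 2(^*t)^2\hat\psi_{^*t}^2+2(^*t)^2\big(\frac{\Delta R^2}{s(r^2+a^2)}\big)^2\hat\psi_R^2$. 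Using $s=|^*t|/r_*$ together with $\Delta\simeq r^2\simeq r^2+a^2$ and $r_*\simeq r\simeq R^{-1}$, the coefficient of $\hat\psi_R^2$ is of order $R^2$; the bound $-^*tR\leq 1+\varepsilon$ from \eqref{epsilonestimates} then turns $R^2$ into a quantity $\lesssim R/|^*t|$. Thus $(^*t)^2(\partial_{^*t}\Psi)^2\lesssim(^*t)^2\hat\psi_{^*t}^2+\frac{R}{|^*t|}\hat\psi_R^2$ uniformly on $\Omega^+_{^*t_0}$, provided $|^*t_0|$ is large.

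Integrating the fibrewise inequality against $\d^2\omega$ and inserting this bound yields $\int_{{\cal H}_s}\hat\psi^2\,\d {^*t}\,\d^2\omega\lesssim\int_{{\cal H}_s}\big((^*t)^2\hat\psi_{^*t}^2+\frac{R}{|^*t|}\hat\psi_R^2+|\nabla_{S^2}\hat\psi|^2\big)\d {^*t}\,\d^2\omega$, and by the equivalence \eqref{equivEnHs} of Lemma~\ref{short-energy} the right-hand side is $\lesssim{\cal E}_{{\cal H}_s}(\hat\psi)$, which is the claim. The only genuine obstacle is the third step: one must dominate the transverse contribution $\hat\psi_R$ entering the tangential derivative along ${\cal H}_s$, and this is exactly where the geometry of the foliation ($s=|^*t|/r_*$) and the neighbourhood estimate $R|^*t|\lesssim 1$ are essential; the remaining steps are routine.
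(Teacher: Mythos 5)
Your proof is correct and takes essentially the same route as the paper's (which is deferred to \cite{MaNi2009} and is identical in spirit): a fibrewise one-dimensional Hardy--Poincar\'e inequality in $^*t$ via Lemma~\ref{ine-Poin}, with the derivative tangent to ${\cal H}_s$ decomposed as $\hat\psi_{^*t}+\frac{\Delta R^2}{s(r^2+a^2)}\hat\psi_R$ and the transverse contribution absorbed into $\frac{R}{\vert ^*t\vert}\hat\psi_R^2$ using $\vert ^*t\vert R\lesssim 1$ from \eqref{epsilonestimates}, then Lemma~\ref{short-energy} to identify the right-hand side with ${\cal E}_{{\cal H}_s}(\hat\psi)$. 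The details (endpoint-independence of the one-dimensional constant, vanishing of the solution near $i^0$ by finite propagation speed) are handled correctly.
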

Both results are already present in \cite{MaNi2009} in the Schwarzschild case~; the proofs in the present framework are identical to those in \cite{MaNi2009}, therefore we do not repeat them here.

Integrating the approximate conservation law \eqref{ACL0} on the domain
\begin{equation} \label{Omegs1s2}
\Omega_{{^*t}_0}^{s_1, s_2} := \Omega^+_{{^*t}_0} \cap \{ s_1 < s < s_2 \} \mbox{ with } 0 \leq s_1 < s_2 \leq 1 \, ,
\end{equation}
we get
\begin{align*}
&\left| {\cal E}_{{\cal H}_{s_1}} (\hat\psi) + {\cal E}_{{\cal S}_{{^*t}_0}^{s_1, s_2}}(\hat\psi) - {\cal E}_{{\cal H}_{s_2}}(\hat\psi) \right| \\
&\simeq \left|\int_{s_1}^{s_2} \int_{{\cal H}_s}\left( -2\frac{Mr-a^2}{r^2} \hat\psi \nabla_T \hat\psi + \left(\nabla_a T_b  \right) T^{ab} (\hat\psi)\right) \frac{1}{|{^*t}|} \d {^*t} \d^2\omega \d s\right| \\
&\lesssim \int_{s_1}^{s_2} \int_{{\cal H}_s}\left| -2\frac{Mr-a^2}{r^2} \hat\psi \nabla_T \hat\psi + \left(\nabla_a T_b  \right) T^{ab} (\hat\psi)\right| \frac{1}{|{^*t}|} \d {^*t} \d^2 \omega \d s\, ,
\end{align*}
where
\begin{align*}
\left(\nabla_a T_b  \right) T^{ab} (\hat\psi) &= A_1 \hat\psi_{^*t}^2 + A_2 \hat\psi_{^*t} \hat\psi_R + A_3 \hat\psi_{^*t} \hat\psi_{^*\varphi} +A_4 R \hat\psi_R^2 + A_5 \hat\psi_R \hat\psi_{^*\varphi}\\
&+ A_6 \sin^2\theta \hat\psi_\theta^2 + A_7 \hat\psi_{^*\varphi}^2 + A_8 \vert \nabla_{S^2} \hat\psi \vert^2 \, , 
\end{align*}
the $A_i$'s being bounded functions on $\Omega^+_{{^*t}_0}$. We note that $s=-^*t / r_* \simeq - ^*t R$ entails
\[ \frac{1}{|^*t|} \simeq \frac{1}{\sqrt s} \sqrt{\frac{R}{|^*t|}} \, .\]
Hence,
\begin{align*}
E &:=  \left|\left( -2\frac{Mr-a^2}{r^2} \hat\psi \nabla_T \hat\psi + \left(\nabla_a T_b  \right) T^{ab} (\hat\psi)\right)\right| \frac{1}{|{^*t}|} \\
& \lesssim \left( 2R\vert{^*t}^2\hat\psi \partial_{^*t}\hat\psi \vert + 4R \vert\hat\psi \partial_R \hat\psi\vert +  A_1\vert \hat\psi_{^*t}^2 \vert+ A_2\vert \hat\psi_{^*t} \hat\psi_R \vert + A_3\vert \hat\psi_{^*t} \hat\psi_{^*\varphi}\vert \right)   \frac{1}{|{^*t}|} \\
&+ \left( A_4\vert R \hat\psi_R^2 \vert + A_5 \vert \hat\psi_R \hat\psi_{^*\varphi} \vert + A_6 \vert \sin^2\theta \hat\psi_\theta^2 \vert + A_7\vert \hat\psi_{^*\varphi}^2 \vert + A_8\vert \nabla_{S^2} \hat\psi \vert^2  \right) \frac{1}{|{^*t}|}\\ 
& \lesssim 2R \left( \hat\psi^2 + {^*t}^2\hat\psi_{^*t}^2 \right)  + \frac{4}{\sqrt s} \left( \hat\psi^2 + \frac{R}{|{^*t}|} \hat\psi_R^2 \right) +  A_1 {^*t}^2\hat\psi_{^*t}^2 + \frac{A_2}{\sqrt s} \left( \hat\psi_{^*t}^2 + \frac{R}{|{^*t}|} \hat\psi_R^2 \right) \\
& + A_3 \left( {^*t}^2\hat\psi_{^*t}^2 + \hat\psi_{^*\varphi}^2 \right) + A_4 \frac{R}{|{^*t}|} \hat\psi_R^2 + A_5 \frac{R}{\sqrt s} \left( \hat\psi_{^*\varphi}^2 + \frac{R}{|{^*t}|}\hat\psi_R^2 \right) \\
& + \left( \max\left\{ A_6,A_7 \right\}\sin^2\theta + A_8 \right)\vert \nabla_{S^2} \hat\psi \vert^2  \\
&\lesssim \frac{1}{\sqrt s} \left( {^*t}^2 \hat\psi_{^*t}^2 +  \frac{R}{\vert {^*t} \vert}\hat\psi_R^2 + \vert \nabla_{S^2}\hat\psi \vert^2 + \hat\psi^2 \right) \, .
\end{align*}
Using Lemma \ref{L2Control}, we obtain
\begin{align*}
&\left| {\cal E}_{{\cal H}_{s_1}} (\hat\psi) + {\cal E}_{{\cal S}_{{^*t}_0}^{s_1, s_2}}(\hat\psi) - {\cal E}_{{\cal H}_{s_2}}(\hat\psi) \right| \\
& \lesssim \int_{s_1}^{s_2} \frac{1}{\sqrt s} \int_{{\cal H}_s}\left( {^*t}^2 \hat\psi_{^*t}^2 +  \frac{R}{\vert {^*t} \vert}\hat\psi_R^2 + \vert \nabla_{S^2}\hat\psi \vert^2 + \hat\psi^2 \right)\d {^*t} \d^2 \omega \d s\\
&\lesssim \int_{s_1}^{s_2} \frac{1}{\sqrt s}{\cal E}_{{\cal H}_s}(\hat\psi) \d s \, .
\end{align*}
Since the function $1/\sqrt s$ is integrable on $[0,1]$, Gronwall's inequality entails the following result~:
\begin{theorem}\label{theorem_0_linear}
For ${^*t}_0 < 0, \, \vert {^*t}_0 \vert$ large enough and for any smooth compactly supported initial data at $t = 0$, the associated rescaled solution $\hat\psi$ satisfies
$${\cal E}_{\scri_{{^*t}_0}^+}(\hat\psi) \lesssim {\cal E}_{{\cal H}_1} (\hat\psi),$$
$${\cal E}_{{\cal H}_1} (\hat\psi) \lesssim {\cal E}_{\scri_{{^*t}_0}^+}(\hat\psi) + {\cal E}_{{\cal S}_{{^*t}_0}}(\hat\psi) \, .$$
\end{theorem}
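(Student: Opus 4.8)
The plan is to derive both inequalities at once by integrating the approximate conservation law \eqref{ACL0} between two leaves of the foliation $({\cal H}_s)_{0\le s\le 1}$ and then closing the resulting integral inequality with Gronwall's lemma.

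First I would integrate \eqref{ACL0} over the region $\Omega_{{^*t}_0}^{s_1,s_2}$ of \eqref{Omegs1s2}, whose boundary is made up of ${\cal H}_{s_1}$, ${\cal H}_{s_2}$ and the piece ${\cal S}_{{^*t}_0}^{s_1,s_2}$ of the null hypersurface. The Stokes identity recalled in the remark following the definition of the energy $3$-form converts the left-hand side into the signed sum of the three boundary fluxes, while the right-hand side becomes a bulk integral of $\nabla^a J_a(\hat\psi)$. Splitting the $4$-volume measure as $\d s$ against the induced $3$-volume $\simeq \frac{1}{|{^*t}|}\d{^*t}\,\d^2\omega$ on each slice, this bulk term takes the shape $\int_{s_1}^{s_2}\int_{{\cal H}_s}(\cdots)\frac{1}{|{^*t}|}\d{^*t}\,\d^2\omega\,\d s$, with integrand the right-hand side of \eqref{ACL0}.

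The heart of the argument is the pointwise bound on this integrand. I would insert the synthetic expression of Lemma \ref{short-morawetz}, in which the Killing-form contribution $(\nabla_a T_b)T^{ab}(\hat\psi)$ is a sum of the terms ${^*t}^2\hat\psi_{^*t}^2$, $\frac{R}{|{^*t}|}\hat\psi_R^2$, $|\nabla_{S^2}\hat\psi|^2$ and cross terms, all with coefficients bounded on $\Omega^+_{{^*t}_0}$, and treat the lower-order piece $-2\frac{Mr-a^2}{r^2}\hat\psi\nabla_T\hat\psi$ directly. The decisive geometric input is $s\simeq -{^*t}R$, which gives $\frac{1}{|{^*t}|}\simeq\frac{1}{\sqrt s}\sqrt{\frac{R}{|{^*t}|}}$; distributing the measure weight in this way lets me absorb every cross term, by Young's inequality, into ${^*t}^2\hat\psi_{^*t}^2$, $\frac{R}{|{^*t}|}\hat\psi_R^2$ and $|\nabla_{S^2}\hat\psi|^2$, at the cost of a remaining $\hat\psi^2$ and an overall factor $\frac{1}{\sqrt s}$. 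The $\hat\psi^2$ term is reabsorbed through the Poincar\'e-type control of Lemma \ref{L2Control}, and the three quadratic terms are then recognised, via the equivalence \eqref{equivEnHs} of Lemma \ref{short-energy}, as the energy density of ${\cal E}_{{\cal H}_s}(\hat\psi)$. This yields $|{\cal E}_{{\cal H}_{s_1}}+{\cal E}_{{\cal S}_{{^*t}_0}^{s_1,s_2}}-{\cal E}_{{\cal H}_{s_2}}|\lesssim\int_{s_1}^{s_2}\frac{1}{\sqrt s}{\cal E}_{{\cal H}_s}(\hat\psi)\,\d s$.

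Finally, since $1/\sqrt s$ is integrable on $[0,1]$ and the flux ${\cal E}_{{\cal S}_{{^*t}_0}^{s_1,s_2}}$ is non-negative by the dominant energy condition (Lemma \ref{short-energy}), I would specialise to $(s_1,s_2)=(0,1)$ and run Gronwall in both directions: dropping the non-negative ${\cal S}$-flux gives ${\cal E}_{\scri_{{^*t}_0}^+}\lesssim{\cal E}_{{\cal H}_1}$, while retaining it gives ${\cal E}_{{\cal H}_1}\lesssim{\cal E}_{\scri_{{^*t}_0}^+}+{\cal E}_{{\cal S}_{{^*t}_0}}$. The main obstacle, and the point genuinely specific to Kerr, is that the Morawetz Killing form does \emph{not} decay at infinity, unlike in the Schwarzschild case; the whole estimate rests on verifying that, once multiplied by the measure weight $\frac{1}{|{^*t}|}$ and rewritten through $s\simeq-{^*t}R$, these non-decaying contributions are still dominated by $\frac{1}{\sqrt s}$ times the energy density, so that the $s$-integral converges up to $\scri^+$ and Gronwall closes.
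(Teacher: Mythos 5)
Your proposal is correct and follows essentially the same route as the paper: integrating \eqref{ACL0} over $\Omega_{{^*t}_0}^{s_1,s_2}$, bounding the bulk term via Lemma \ref{short-morawetz}, the equivalence $\frac{1}{|{^*t}|}\simeq\frac{1}{\sqrt s}\sqrt{R/|{^*t}|}$, Young's inequality for the cross terms, and Lemma \ref{L2Control} for the $\hat\psi^2$ term, then closing with Gronwall using the integrability of $1/\sqrt{s}$ and the non-negativity of the flux across ${\cal S}_{{^*t}_0}$. No gaps.
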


\subsubsection{Higher order estimates and peeling}

In order to obtain estimates on successive derivatives of the solution, we use the approximate conservation laws \eqref{ACL1}, \eqref{ACL2}, \eqref{ACL3}, \eqref{ACL4}. First, since the vector fields $X_0 = \partial_t$ and $X_1 = \partial_{^*\varphi}$ are Killing, the approximate conservation laws \eqref{ACL1} for $\partial_{X_i} \hat\psi$, $i=0,1$, and \eqref{ACL0} for $\hat\psi$ are identical and we have an immediate corollary of Theorem \ref{theorem_0_linear}~:
\begin{corollary}\label{peeling_killing}
For ${^*t}_0 < 0, \, \vert {^*t}_0 \vert$ large enough and for any smooth compactly supported initial data at $t = 0$, the associated rescaled solution $\hat\psi$ satisfies for $i=0,1$~:
\begin{eqnarray*}
{\cal E}_{\scri_{{^*t}_0}^+}(\mathcal{L}_{X_i}\hat\psi) &\lesssim& {\cal E}_{{\cal H}_1} (\mathcal{L}_{X_i}\hat\psi)\, ,\\
{\cal E}_{{\cal H}_1} (\mathcal{L}_{X_i}\hat\psi) &\lesssim& {\cal E}_{\scri_{{^*t}_0}^+}(\mathcal{L}_{X_i}\hat\psi) + {\cal E}_{{\cal S}_{{^*t}_0}}(\mathcal{L}_{X_i}\hat\psi) \, .
\end{eqnarray*}
\end{corollary}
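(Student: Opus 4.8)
The plan is to observe that the corollary is nothing more than Theorem \ref{theorem_0_linear} applied not to $\hat\psi$ itself but to the fields $\mathcal{L}_{X_i}\hat\psi$, $i = 0,1$. The essential point, already recorded in \eqref{resc-wave0}, is that since $X_0 = \partial_{^*t}$ and $X_1 = \partial_{^*\varphi}$ are Killing for $\hat g$, the Lie derivatives $\mathcal{L}_{X_i}$ commute with the conformal d'Alembertian $\square_{\hat g} + \frac16 \mathrm{Scal}_{\hat g}$. Hence, whenever $\hat\psi$ solves \eqref{RescWaveEq} with $\lambda = 0$, so does $\mathcal{L}_{X_i}\hat\psi$; in particular the approximate conservation law \eqref{ACL1} for $\mathcal{L}_{X_i}\hat\psi$ coincides in form with \eqref{ACL0} for $\hat\psi$, with no additional source terms. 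Theorem \ref{theorem_0_linear} can therefore be invoked as a black box, provided its hypotheses are met by $\mathcal{L}_{X_i}\hat\psi$.

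The one thing I would verify is precisely that hypothesis: that $\mathcal{L}_{X_i}\hat\psi$ still arises from smooth, compactly supported Cauchy data on ${\cal H}_1 = \{ t = 0 \}$. For $X_1 = \partial_{^*\varphi}$ this is immediate, since $\partial_{^*\varphi}$ is tangent to $\{ t = 0 \}$ and the Cauchy data of $\mathcal{L}_{X_1}\hat\psi$ are just the $^*\varphi$-derivatives of those of $\hat\psi$, preserving smoothness and compact support. For $X_0 = \partial_{^*t} = \partial_t$, which is transverse to $\{ t = 0 \}$, the Cauchy data of $\mathcal{L}_{X_0}\hat\psi$ involve one further time derivative; the resulting second time derivative at $t = 0$ is reexpressed through the equation \eqref{resc-wave} as a combination of tangential derivatives of the original data, which are again smooth and compactly supported. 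In both cases $\mathcal{L}_{X_i}\hat\psi$ is an admissible solution for Theorem \ref{theorem_0_linear}, and the two inequalities of the corollary follow at once by applying that theorem to it.

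There is essentially no obstacle in this step: the content of the corollary is that the two Killing symmetries of Kerr commute cleanly through the equation and therefore generate no error terms, so that no analysis beyond the fundamental estimate is required. This is in sharp contrast with the non-Killing generators $X_2$, $X_3$ and $X_4$, whose commutators \eqref{resc-wave2}, \eqref{resc-wave3} and \eqref{resc-wave1} carry genuine source terms; controlling those sources --- and in particular absorbing the $\hat\psi_R$-type contributions by exploiting the smallness of the coefficients near $i^0$ afforded by \eqref{epsilonestimates} --- is where the real work of the higher-order estimates will lie.
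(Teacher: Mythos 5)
Your proposal is correct and follows exactly the paper's own argument: since $X_0$ and $X_1$ are Killing, the approximate conservation law \eqref{ACL1} for $\mathcal{L}_{X_i}\hat\psi$ is identical in form to \eqref{ACL0}, so Theorem \ref{theorem_0_linear} applies verbatim to $\mathcal{L}_{X_i}\hat\psi$. Your additional check that $\mathcal{L}_{X_i}\hat\psi$ still arises from smooth compactly supported data on ${\cal H}_1$ is a sound (if tacit in the paper) verification and does not change the route.
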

For $\mathcal{L}_{X_4} = \partial_R \hat\psi$, by integrating \eqref{ACL4} on $\Omega^{s_1,s_2}_{^*t_0}$, we get
\begin{align}\label{control}
&\left| {\cal E}_{{\cal H}_{s_1}} (\hat\psi_R) + {\cal E}_{{\cal S}_{{^*t}_0}^{s_1, s_2}}(\hat\psi_R) - {\cal E}_{{\cal H}_{s_2}}(\hat\psi_R) \right| \nonumber\\
& \lesssim \int_{s_1}^{s_2} \int_{{\cal H}_s} \left| \frac{r^2}{\rho^2}\left( 4a^2R \partial_{^*t}\partial_R\hat\psi + 4 aR \partial_{^*\varphi}\partial_R\hat\psi + 2a^2\partial_{^*t}\hat\psi + 2a \partial_{^*\varphi}\hat\psi \right) \nabla_T \partial_R\hat\psi \right| \frac{\d {^*t}}{|{^*t}|} \d^2\omega \d s  \nonumber\\
&+ \int_{s_1}^{s_2} \int_{{\cal H}_s} \left| \frac{r^2}{\rho^2}\left((4a^2R^3-6MR^2+2R)\partial_R^2\hat\psi +(12aR^2-12MR+2)\partial_R\hat\psi \right) \nabla_T \partial_R\hat\psi \right|  \frac{\d {^*t}}{|{^*t}|}  \d^2\omega \d s\nonumber\\
&+ \int_{s_1}^{s_2} \int_{{\cal H}_s} \left| \left(2(M-2a^2R)\hat\psi + 2\frac{Mr-a^2}{r^2} \partial_R \hat\psi \right)  \nabla_T \partial_R\hat\psi +\left(\nabla_a T_b  \right) T^{ab} \partial_R \hat\psi\right|  \frac{\d {^*t}}{|{^*t}|} \d^2\omega \d s
\end{align}
where
\begin{align*}
\left(\nabla_a T_b  \right) T^{ab} (\partial_R\hat\psi) &= A_1 (\partial_R \hat\psi)_{^*t}^2 + A_2 (\partial_R\hat\psi)_{^*t} (\partial_R\hat\psi)_R + A_3 (\partial_R \hat\psi)_{^*t} (\partial_R\hat\psi)_{^*\varphi} + A_4 R (\partial_R\hat\psi)_R^2 \\
& + A_5 (\partial_R\hat\psi)_R (\partial_R\hat\psi)_{^*\varphi} + A_6 \sin^2\theta (\partial_R\hat\psi)_\theta^2 + A_7 (\partial_R\hat\psi)_{^*\varphi}^2 + A_8 \vert \nabla_{S^2} (\partial_R\hat\psi) \vert^2 
\end{align*}
and 
$$ \nabla_T (\partial_R\hat\psi) = {^*t}^2 \partial_{^*t} (\partial_R \hat\psi) - 2(1+ {^*t}R) \partial_R (\partial_R \hat\psi) \, .$$
We can control all the terms on the right-hand side of inequality \eqref{control} in the same way as we have done in the basic estimate except for the term involving $\partial_R^2\hat\psi$. To control it, we need to use the fact that its coefficient $4a^2R^3-6MR^2+2R$ is of order one in $R$~:
\begin{align*}
& \int_{{\cal H}_s} \left| (4a^2R^3-6MR^2+2R)\partial_R^2\hat\psi \nabla_T (\partial_R\hat\psi) \right| \frac{1}{|{^*t}|} \d {^*t} \d^2\omega \\
&= \int_{{\cal H}_s} \left| (4a^2R^3-6MR^2+2R)\partial_R^2\hat\psi \left( {^*t}^2 \partial_{^*t} (\partial_R \hat\psi) - 2(1+ {^*t}R)  \partial_R (\partial_R \hat\psi)  \right) \right| \frac{1}{|{^*t}|} \d {^*t}\d^2\omega \\
&\lesssim \int_{{\cal H}_s} \left( 2R \left( (\partial_R\hat\psi_R)^2 + {^*t}^2 (\partial_{^*t}\hat\psi_R)^2 \right) + 4\frac{R}{|{^*t}|} (\partial_R \hat\psi_R)^2 \right) \d {^*t}\d^2\omega \\
& \lesssim {\cal E}_{{\cal H}_s} (\hat\psi_R) \; .
\end{align*}
Therefore the right-hand side of \eqref{control} can be controlled uniformly in $s_1,s_2$ by the quantity
\[ \int_{s_1}^{s_2}\frac{1}{\sqrt s} \left( \mathcal{E}_{\mathcal{H}_s}(\hat{\psi}_R) + \mathcal{E}_{\mathcal{H}_s}(\hat\psi) \right) \d s \, .\]
Using the Gronwall's inequality, we then get the result 
\begin{eqnarray*}
{\cal E}_{\scri_{{^*t}_0}^+}(\partial_R\hat\psi) + {\cal E}_{\scri_{{^*t}_0}^+}(\hat\psi) &\lesssim& {\cal E}_{{\cal H}_1} (\partial_R\hat\psi) + {\cal E}_{{\cal H}_1} (\hat\psi) ,\\
{\cal E}_{{\cal H}_1} (\partial_R\hat\psi)+ {\cal E}_{{\cal H}_1} (\hat\psi) &\lesssim& {\cal E}_{\scri_{{^*t}_0}^+}(\partial_R\hat\psi)+ {\cal E}_{\scri_{{^*t}_0}^+}(\hat\psi)  + {\cal E}_{{\cal S}_{{^*t}_0}}(\partial_R\hat\psi) + {\cal E}_{{\cal S}_{{^*t}_0}}(\hat\psi) \, .
\end{eqnarray*}
Iterating the process, we obtain similar estimates for $\mathcal{L}_{X_4}^k(\hat\psi)$, which we can summarize as follows~:
\begin{theorem}\label{theorem_1_linear}
For ${^*t}_0 < 0, \, \vert {^*t}_0 \vert$ large enough and for any smooth compactly supported initial data at $t = 0$, the associated rescaled solution $\hat\psi$ satisfies
\begin{eqnarray*}
{\cal E}_{\scri_{{^*t}_0}^+}(\mathcal{L}_{X_4}^k\hat\psi) &\lesssim& \sum_{p=0}^k{\cal E}_{{\cal H}_1} (\mathcal{L}_{X_4}^p\hat\psi) \, ,\\
{\cal E}_{{\cal H}_1} (\mathcal{L}_{X_4}^k\hat\psi) &\lesssim& \sum_{p=0}^k \left( {\cal E}_{\scri_{{^*t}_0}^+}(\mathcal{L}_{X_4}^p\hat\psi) + {\cal E}_{{\cal S}_{{^*t}_0}}(\mathcal{L}_{X_4}^p\hat\psi) \right) \, .
\end{eqnarray*}
\end{theorem}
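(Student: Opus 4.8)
The plan is to argue by induction on $k$, the base cases $k=0$ and $k=1$ being Theorem~\ref{theorem_0_linear} and the two-way estimate established immediately above. The single engine at every order is an approximate conservation law for $u_k := \mathcal{L}_{X_4}^k\hat\psi = \partial_R^k\hat\psi$, which I would integrate over the domain $\Omega^{s_1,s_2}_{{^*t}_0}$ of \eqref{Omegs1s2} and close with Gronwall's inequality, exactly as was done for $k=1$ starting from \eqref{ACL4}.

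First I would record the structure of the source. Commuting $\partial_R^k$ into \eqref{resc-wave} after multiplication by $\rho^2/r^2 = 1+a^2R^2\cos^2\theta$, and using that in the variable $R$ every coefficient of the rescaled operator is a smooth function, bounded on $\Omega^+_{{^*t}_0}$, Leibniz's rule produces an identity of the same shape as \eqref{ACL4},
\[ \nabla^a J_a(u_k) = G_k\,\nabla_T u_k + \nabla_{(a}T_{b)}T^{ab}(u_k)\,, \]
in which the scalar source $G_k$ is a combination, with coefficients bounded on $\Omega^+_{{^*t}_0}$, of three kinds of terms: (i) one \emph{dangerous} term $k\,c'(R)\,\partial_R u_k$, arising when a single $\partial_R$ falls on the coefficient $c(R) = -\frac{r^2}{\rho^2}(R^2-2MR^3+a^2R^4)$ of $\partial_R^2$; (ii) the remaining first-order derivatives of $u_k$, together with $u_k$ itself (the latter carrying the curvature-type coefficient $\tfrac16\mathrm{Scal}_{\hat g}$); and (iii) derivatives and values of the lower-order quantities $u_p$, $p<k$. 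The one fact that makes the scheme close is that $c(R)$ vanishes to second order at $\{R=0\}$, so $c'(R)=O(R)$; this is a property of $c$ alone and therefore persists uniformly in $k$, the only $k$-dependence being the harmless combinatorial factor.

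Next I would run the energy estimate for $u_k$. Every term of $G_k$ except the dangerous one is treated verbatim as in the basic estimate: Cauchy--Schwarz with a parameter $\lambda\in(1,\sqrt2)$ as in \eqref{Parameterlambda}, the synthetic bound of Lemma~\ref{short-morawetz} for the Killing-form term $\nabla_{(a}T_{b)}T^{ab}(u_k)$, the control of Lemma~\ref{L2Control} for the zeroth-order contributions, and the inequality $|{^*t}|\ge 1$ valid throughout $\Omega^+_{{^*t}_0}$ (see \eqref{epsilonestimates}) to absorb the unweighted ${^*t}$-derivatives of the $u_p$ into energy densities. The dangerous term is handled precisely as the $\partial_R^2\hat\psi$ term was for $k=1$: pairing $k\,c'(R)\partial_R u_k = O(R)\,\partial_R u_k$ against $\nabla_T u_k$ and using $R|{^*t}|=x\le 2$ (from \eqref{epsilonestimates}) controls it by the density of $\mathcal{E}_{{\cal H}_s}(u_k)$, the relevant weight being $\tfrac{R}{|{^*t}|}(\partial_R u_k)^2$ in \eqref{equivEnHs}. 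Altogether the flux defect is bounded by
\[ \int_{s_1}^{s_2}\frac{1}{\sqrt s}\Bigl(\mathcal{E}_{{\cal H}_s}(u_k)+\sum_{p<k}\mathcal{E}_{{\cal H}_s}(u_p)\Bigr)\,\d s\,. \]

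To conclude, I would pass to the cumulative energy $\Phi_k(s):=\sum_{p=0}^k \mathcal{E}_{{\cal H}_s}(u_p)$. Summing the previous bound over $0\le p\le k$ gives a right-hand side $\lesssim \int_{s_1}^{s_2} s^{-1/2}\Phi_k(s)\,\d s$, and since $s^{-1/2}$ is integrable on $[0,1]$, Gronwall's inequality controls $\Phi_k$ both ways between ${\cal H}_1$ (at $s=1$) and $\scri^+_{{^*t}_0}$ (at $s=0$). Discarding the non-negative flux $\mathcal{E}_{{\cal S}_{{^*t}_0}}$ (non-negativity coming from the dominant energy condition, as in Lemma~\ref{short-energy}) in the direction ${\cal H}_1\to\scri^+_{{^*t}_0}$ and retaining it in the reverse direction yields exactly the two stated inequalities. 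The genuine obstacle is the single term carrying one transverse derivative of $u_k$, namely $\partial_R^{k+1}\hat\psi = \partial_R u_k$, which sits at the borderline of what $\mathcal{E}_{{\cal H}_s}(u_k)$ controls (its weight $\tfrac{R}{|{^*t}|}$ degenerates at $\scri^+$); the whole induction hinges on verifying that its coefficient keeps the order-$R$ vanishing at every order, which the second-order vanishing of $c(R)$ guarantees, so that no derivative is ever lost. Finally, since all coefficient bounds are uniform with respect to $M$ and $a$ on compact sets, the implied constants are uniform as well, as claimed.
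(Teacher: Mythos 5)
Your proof is correct and follows essentially the same route as the paper: the paper establishes the case $k=1$ from \eqref{ACL4}, isolates the term $(4a^2R^3-6MR^2+2R)\partial_R^2\hat\psi$ as the only delicate one, and then simply states that the estimates for $\mathcal{L}_{X_4}^k\hat\psi$ follow by iteration. You have merely made that iteration explicit, and your structural observation --- that the sole source term carrying $\partial_R^{k+1}\hat\psi$ comes from one derivative falling on the coefficient of $\partial_R^2$, which vanishes to second order at $R=0$, so its $O(R)$ coefficient persists at every order --- is exactly the mechanism the paper exploits at $k=1$.
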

This is enough for controlling transverse regularity at $\scri^+$. But if we also wish to have information on tangential regularity, we need to control the energy of an arbitrary number of Lie derivatives along $X_1$, $X_2$ and $X_3$. The case of $X_1$ has been treated in Corollary \ref{peeling_killing} using the fact that $X_1$ is Killing. For $X_2$ and $X_3$, we integrate \eqref{ACL2} and \eqref{ACL3} on $\Omega^{s_1,s_2}_{^*t_0}$. We obtain
\begin{gather*}
\left| {\cal E}_{{\cal H}_{s_1}} ({\cal L}_{X_2}\hat\psi) + {\cal E}_{{\cal S}_{{^*t}_0}^{s_1, s_2}}({\cal L}_{X_2}\hat\psi) - {\cal E}_{{\cal H}_{s_2}}({\cal L}_{X_2}\hat\psi) \right| \\
\lesssim \int_{s_1}^{s_2} \int_{{\cal H}_s} \left|  \frac{r^2}{\rho^2} \left( a^2\sin 2\theta \sin{^*\varphi} \partial_{{^*t}}^2 \hat\psi - \left( 2a \partial_{^*t} + 2a R^2 \partial_R +2aR \right) {\cal L}_{X_3} \hat\psi \right) \nabla_T{\cal L}_{X_2} \hat\psi  \right| \frac{\d {^*t}}{|{^*t}|}  \d^2\omega \d s  \\
+ \int_{s_1}^{s_2} \int_{{\cal H}_s} \left| \frac{r^2}{\rho^2} \left( -2(MR-a^2R^2) {\cal L}_{X_2}\hat\psi\right) \nabla_T{\cal L}_{X_2} \hat\psi  \right|  \frac{ \d {^*t}}{|{^*t}|} \d^2\omega \d s \\
+ \int_{s_1}^{s_2} \int_{{\cal H}_s} \left| \left(\nabla_a T_b  \right) T^{ab} ({\cal L}_{X_2}\hat\psi)  \right|  \frac{\d {^*t}}{|{^*t}|}  \d^2\omega \d s \, ,
\end{gather*}
and
\begin{gather*}
\left| {\cal E}_{{\cal H}_{s_1}} ({\cal L}_{X_3}\hat\psi) + {\cal E}_{{\cal S}_{{^*t}_0}^{s_1, s_2}}({\cal L}_{X_3}\hat\psi) - {\cal E}_{{\cal H}_{s_2}}({\cal L}_{X_3}\hat\psi) \right| \\
\lesssim \int_{s_1}^{s_2} \int_{{\cal H}_s} \left|  \frac{r^2}{\rho^2} \left( a^2\sin 2\theta \cos{^*\varphi} \partial_{{^*t}}^2 \hat\psi + \left( 2a \partial_{^*t} + 2a R^2 \partial_R + 2aR \right) {\cal L}_{X_2} \hat\psi \right)  \nabla_T({\cal L}_{X_3} \hat\psi)  \right| \frac{\d {^*t} }{|{^*t}|} \d^2\omega \d s \\
+ \int_{s_1}^{s_2} \int_{{\cal H}_s} \left| \frac{r^2}{\rho^2} \left( -2(MR-a^2R^2) {\cal L}_{X_3}\hat\psi\right) \nabla_T ({\cal L}_{X_3} \hat\psi)   \right|  \frac{\d {^*t}}{|{^*t}|}  \d^2\omega \d s \\
+ \int_{s_1}^{s_2} \int_{{\cal H}_s} \left| \left(\nabla_a T_b  \right) T^{ab} ({\cal L}_{X_3}\hat\psi)  \right|  \frac{\d {^*t}}{|{^*t}|}  \d^2\omega \d s \, .
\end{gather*}
The terms involving $\partial_{^*t}^2\hat\psi$ can be estimated using the energy of $\mathcal{L}_{X_0}(\hat\psi) = \hat\psi_{^*t}$ since
\[ \int_{{\cal H}_s}\vert \partial_{^*t}^2\hat\psi \vert^2 \d {^*t} \d^2 \omega \leq {\cal E}_{{\cal H}_s} (\hat\psi_{^*t}) \, .\]
Therefore, we can control the energy of ${\cal L}_{X_2}(\hat\psi)$ by the energies of ${\cal L}_{X_3}(\hat\psi)$ and $\mathcal{L}_{X_0}\hat\psi$. Similarly, the energy of ${\cal L}_{X_3}(\hat\psi)$ can be controlled by the energies of ${\cal L}_{X_2}(\hat\psi)$ and $\mathcal{L}_{X_0}(\hat\psi)$. 
Let us denote
\[ {\cal B} = \{ X_0 , X_2 , X_3 \} \]
and for $p\in \N$ and $S$ an oriented hypersurface,
\[ {\cal E}_{S} ({\cal L}_{\cal B}^p) := \sum_{i_1,i_2,...,i_p \in \{ 0,2,3\}}  {\cal E}_{S}  \left( {\cal L}_{X_{i_1}}...{\cal L}_{X_{i_{p}}} \right) \, .\]
We have the following result~:
\begin{proposition}\label{theorem_2_linear}
For ${^*t}_0 < 0, \, \vert {^*t}_0 \vert$ large enough and for any smooth compactly supported initial data at $t = 0$, the associated rescaled solution $\hat\psi$ satisfies for any $k \in \N$
\begin{eqnarray*}
\sum_{p=0}^k {\cal E}_{\scri_{{^*t}_0}^+} ({\cal L}_{\cal B}^{p}\hat\psi) &\lesssim & \sum_{p=0}^k {\cal E}_{{\cal H}_1} ({\cal L}_{\cal B}^{p} \hat\psi) \, , \\
\sum_{p=0}^k{\cal E}_{{\cal H}_1} ({\cal L}_{\cal B}^{p} \hat\psi) &\lesssim & \sum_{p=0}^k \left( {\cal E}_{\scri_{{^*t}_0}^+} ({\cal L}_{\cal B}^{p} \hat\psi) + {\cal E}_{{\cal S}_{{^*t}_0}} ({\cal L}_{\cal B}^{p} \hat\psi) \right) \, ,
\end{eqnarray*}
\end{proposition}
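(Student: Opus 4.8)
The plan is to reproduce the scheme of the basic estimate and of Theorem~\ref{theorem_1_linear}—integrate an approximate conservation law over the slab $\Omega^{s_1,s_2}_{{^*t}_0}$ of \eqref{Omegs1s2} and close with Gronwall's inequality—but now \emph{simultaneously} for the entire family of fields
\[ u_{\vec{\imath}} := {\cal L}_{X_{i_1}}\cdots{\cal L}_{X_{i_p}}\hat\psi \, ,\qquad \vec{\imath}=(i_1,\dots,i_p),\ i_j\in\{0,2,3\},\ p\le k \, . \]
Since $X_2$ and $X_3$ are not Killing, the commutators \eqref{resc-wave2}--\eqref{resc-wave3} couple the $X_2$-- and $X_3$--derivatives to one another and to an $X_0$--derivative through the $\partial_{^*t}^2$ term; this is precisely why the statement bundles all multi-indices of a given length into ${\cal E}_S({\cal L}_{\cal B}^p)$. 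I therefore set $\mathcal{G}(s):=\sum_{p=0}^k{\cal E}_{{\cal H}_s}({\cal L}_{\cal B}^p\hat\psi)=\sum_{|\vec{\imath}|\le k}{\cal E}_{{\cal H}_s}(u_{\vec{\imath}})$ and aim for the master inequality
\[ \Big| \mathcal{G}(s_1)+\sum_{|\vec{\imath}|\le k}{\cal E}_{{\cal S}^{s_1,s_2}_{{^*t}_0}}(u_{\vec{\imath}})-\mathcal{G}(s_2)\Big| \lesssim \int_{s_1}^{s_2}\frac{1}{\sqrt s}\,\mathcal{G}(s)\,\d s \, . \]
Letting $s_1\to 0$ (so that ${\cal H}_{s_1}\to{\cal H}_0=\scri^+_{{^*t}_0}$) and $s_2=1$, and using the non-negativity of the flux across ${\cal S}_{{^*t}_0}$ (Lemma~\ref{short-energy}), Gronwall then produces both inequalities at once, exactly as in Theorem~\ref{theorem_0_linear}.

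The structural heart of the argument is a commutation lemma, proved by induction on $p$: for every multi-index $\vec{\imath}$ with entries in $\{0,2,3\}$,
\[ \Big(\square_{\hat g}+\tfrac16\mathrm{Scal}_{\hat g}\Big)u_{\vec{\imath}}=F_{\vec{\imath}} \, , \]
where $F_{\vec{\imath}}$ is a finite sum of terms $c\,\partial^\beta u_{\vec{\jmath}}$ with $c$ bounded on $\Omega^+_{{^*t}_0}$, $\partial^\beta\in\{\mathrm{Id},\partial_{^*t},\partial_R,\partial_{^*t}^2\}$ and $|\vec{\jmath}|\le|\vec{\imath}|$. The case $p=1$ is \eqref{resc-wave0}, \eqref{resc-wave2} and \eqref{resc-wave3}. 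For the induction I use that $X_0=\partial_{^*t}$ is Killing (so ${\cal L}_{X_0}$ commutes with the operator) and that $[X_0,X_2]=[X_0,X_3]=0$, while $X_2$ and $X_3$ commute with $\partial_{^*t}$ and $\partial_R$ because their coefficients depend on neither ${^*t}$ nor $R$. Hence applying ${\cal L}_{X_{i_0}}$ to a term $c\,\partial^\beta u_{\vec{\jmath}}$ yields $({\cal L}_{X_{i_0}}c)\,\partial^\beta u_{\vec{\jmath}}+c\,\partial^\beta u_{(i_0,\vec{\jmath})}$, with ${\cal L}_{X_{i_0}}c$ again bounded and $|(i_0,\vec{\jmath})|\le|\vec{\imath}|$; and—crucially—the only second-order derivative ever generated is $\partial_{^*t}^2$, never $\partial_R^2$, because $X_4=\partial_R\notin{\cal B}$ is never commuted. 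Writing $\partial_{^*t}^2u_{\vec{\jmath}}=\partial_{^*t}u_{(0,\vec{\jmath})}$ with $|(0,\vec{\jmath})|\le k$, every source term is in fact a bounded coefficient times a \emph{first}-order $\partial_{^*t}$-- or $\partial_R$--derivative, or the value, of a field of length $\le k$, all of whose energies lie inside $\mathcal{G}(s)$.

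With the lemma in hand I integrate $\nabla^aJ_a(u_{\vec{\imath}})=(\square_{\hat g}u_{\vec{\imath}})\nabla_T u_{\vec{\imath}}+(\nabla_{(a}T_{b)})T^{ab}(u_{\vec{\imath}})$ over $\Omega^{s_1,s_2}_{{^*t}_0}$ and sum over $|\vec{\imath}|\le k$. The Morawetz term is bounded by the energy density of $u_{\vec{\imath}}$ via Lemma~\ref{short-morawetz}, and the contribution $-2\tfrac{Mr-a^2}{r^2}u_{\vec{\imath}}\nabla_T u_{\vec{\imath}}$ arising from $\square_{\hat g}u_{\vec{\imath}}=-2\tfrac{Mr-a^2}{r^2}u_{\vec{\imath}}+F_{\vec{\imath}}$ is treated exactly as in the basic estimate, using Lemma~\ref{L2Control}, which applies verbatim to each $u_{\vec{\imath}}$ since these stay compactly supported in ${^*t}$ on every ${\cal H}_s$. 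The new contribution $F_{\vec{\imath}}\nabla_T u_{\vec{\imath}}$ is a sum of terms $c\,\partial^\beta u_{\vec{\jmath}}\cdot\nabla_T u_{\vec{\imath}}\cdot\tfrac{1}{|{^*t}|}$ of exactly the type already handled in the basic estimate and in Theorem~\ref{theorem_1_linear}: a Young inequality—using $\partial_{^*t}^2u_{\vec{\jmath}}=\partial_{^*t}u_{(0,\vec{\jmath})}$, the ${^*t}^2$--weight on $\partial_{^*t}$ and the $\tfrac{R}{|{^*t}|}$--weight on $\partial_R$ in the energy density, and the relation $\tfrac{1}{|{^*t}|}\simeq\tfrac{1}{\sqrt s}\sqrt{R/|{^*t}|}$ coming from $s\simeq-{^*t}R$—bounds each summand by $\tfrac{1}{\sqrt s}$ times energy densities of $u_{\vec{\imath}}$ and of the length-$\le k$ fields $u_{(0,\vec{\jmath})}$ or $u_{\vec{\jmath}}$. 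Summation and integration then yield the master inequality, and Gronwall closes the argument.

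The main obstacle is not any single computation but securing the \emph{closure} of this coupled hierarchy: one must check that no source term raises the differential order in the $X_i$, i.e. that $F_{\vec{\imath}}$ involves only products of length $\le|\vec{\imath}|$. This rests on the two delicate points isolated in the commutation lemma—that the sole second-order derivative produced, $\partial_{^*t}^2$, factors as $\partial_{^*t}$ of a field of the same length and is thus absorbed by the ${^*t}^2$--weighted part of the energy, and that $\partial_R^2$, which the energy controls only through the degenerate weight $\tfrac{R}{|{^*t}|}$ and could not absorb, never appears precisely because $X_4$ is excluded from ${\cal B}$. Once this order-counting and the boundedness of all coefficients generated by successive Lie derivatives are verified, the remaining estimates are routine repetitions of those in the basic estimate.
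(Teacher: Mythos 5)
Your proposal follows essentially the same route as the paper: the paper integrates the approximate conservation laws \eqref{ACL2}--\eqref{ACL3} over $\Omega^{s_1,s_2}_{^*t_0}$, absorbs the $\partial_{^*t}^2\hat\psi$ source via ${\cal E}_{{\cal H}_s}({\cal L}_{X_0}\hat\psi)$, closes the $X_0$--$X_2$--$X_3$ coupling by summing the energies over all multi-indices of length at most $k$, and applies Gronwall, exactly as you do. Your explicit commutation lemma and the order-counting argument (no $\partial_R^2$ since $X_4\notin{\cal B}$, and $\partial_{^*t}^2 u_{\vec\jmath}=\partial_{^*t}u_{(0,\vec\jmath)}$ absorbed by the ${^*t}^2$-weighted part of the energy) simply make precise the induction that the paper leaves implicit when it passes from the $p=1$ case to arbitrary $k$.
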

Putting together Corollary \ref{theorem_0_linear}, Theorem \ref{theorem_1_linear} and Proposition \ref{theorem_2_linear}, we obtain combined estimates involving all derivatives along the elements of ${\cal A} = \{ X_0,X_1,X_2,X_3,X_4\}$. Denoting for $p\in \N$ and for an oriented hypersurface $S$~:
\[ {\cal E}_{S} ({\cal L}_{\cal A}^p ) := \sum_{i_1,i_2,...,i_p \in \{ 0,12,3,4 \}}  {\cal E}_{S} \left( {\cal L}_{X_{i_1}}...{\cal L}_{X_{i_{p}}} \right) \,  ,\]
we have~:
\begin{theorem}\label{theorem-main}
For ${^*t}_0 < 0, \, \vert {^*t}_0 \vert$ large enough and for any smooth compactly supported initial data at $t = 0$, the associated rescaled solution $\hat\psi$ satisfies for any $k\in \N$
\begin{eqnarray*}
\sum_{p=0}^k {\cal E}_{\scri_{{^*t}_0}^+}(\mathcal{L}_{\cal A}^{p} \hat\psi) &\lesssim &\sum_{p=0}^k {\cal E}_{{\cal H}_1}(\mathcal{L}_{\cal A}^{p} \hat\psi) \, ,\\
\sum_{p=0}^k{\cal E}_{{\cal H}_1} (\mathcal{L}_{\cal A}^{p} \hat\psi) &\lesssim &\sum_{p=0}^k \left( {\cal E}_{\scri_{{^*t}_0}^+}(\mathcal{L}_{\cal A}^{p} \hat\psi) + {\cal E}_{{\cal S}_{{^*t}_0}}(\mathcal{L}_{\cal A}^{p} \hat\psi) \right) \, .
\end{eqnarray*}
\end{theorem}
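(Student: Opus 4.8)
The plan is to prove both inequalities simultaneously by induction on the total differentiation order $k$, using the estimates already obtained for pure strings as building blocks and the commutation identities \eqref{resc-wave0}, \eqref{resc-wave2}, \eqref{resc-wave3} and \eqref{resc-wave1} to treat arbitrary mixed strings. The base case $k=0$ is Theorem \ref{theorem_0_linear}. The structural facts I would exploit are twofold. First, every field of ${\cal A}$ commutes with the conformal d'Alembertian $\square_{\hat g} + \frac{1}{6}\mathrm{Scal}_{\hat g}$ either exactly (for the Killing fields $X_0, X_1$, by \eqref{resc-wave0}) or up to a correction that is at worst a second order operator with controllable coefficients (for $X_2, X_3, X_4$, by \eqref{resc-wave2}, \eqref{resc-wave3}, \eqref{resc-wave1}). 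Second, the fields of ${\cal A}$ commute pairwise except for the rotation generators $X_1, X_2, X_3$, whose brackets close within $\{X_1, X_2, X_3\}$ and hence never raise the differential order.

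For the inductive step, assuming the estimates at all orders $\leq k-1$, I would fix an arbitrary string $\Phi = \mathcal{L}_{X_{i_1}}\cdots\mathcal{L}_{X_{i_k}}\hat\psi$ and compute $(\square_{\hat g} + \frac{1}{6}\mathrm{Scal}_{\hat g})\Phi$ by commuting the $X_{i_j}$ through one at a time. Iterating \eqref{resc-wave0}, \eqref{resc-wave2}, \eqref{resc-wave3}, \eqref{resc-wave1} expresses the result as a source ${\cal G}$, a finite sum of terms each of which is a differential operator of total order $\leq k$ applied to $\hat\psi$ with bounded coefficients, the only second order contributions being of type $\partial_{^*t}^2$ (with bounded coefficient of order $a^2$) and $\partial_R^2$ (with coefficient $O(R)$). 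I would then form the approximate conservation law $\nabla^a J_a(\Phi) = {\cal G}\,\nabla_T \Phi + (\nabla_{(a}T_{b)})T^{ab}(\Phi)$ coming from \eqref{DivSET}, and integrate it over $\Omega_{{^*t}_0}^{s_1,s_2}$. The Killing-form term is controlled exactly as in the basic estimate through Lemma \ref{short-morawetz} and Lemma \ref{short-energy}, while the source term ${\cal G}\,\nabla_T\Phi$ is estimated term by term by Young's inequality and Lemma \ref{L2Control}: the $\partial_R^2$ contribution is absorbed thanks to its $O(R)$ coefficient, exactly as in the proof of Theorem \ref{theorem_1_linear}, and the $\partial_{^*t}^2$ contributions are handled using $\int_{{\cal H}_s}|\partial_{^*t}^2\chi|^2\,\d {^*t}\,\d^2\omega \leq {\cal E}_{{\cal H}_s}(\chi_{^*t})$, which keeps them inside the order-$k$ energy sum.

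Summing over all strings of length $k$ and adding the inductive bounds for orders $<k$, I expect to reach
\[ \left| \sum_{p=0}^k \left( {\cal E}_{{\cal H}_{s_1}} + {\cal E}_{{\cal S}_{{^*t}_0}^{s_1,s_2}} - {\cal E}_{{\cal H}_{s_2}} \right)(\mathcal{L}_{\cal A}^p\hat\psi) \right| \lesssim \int_{s_1}^{s_2} \frac{1}{\sqrt s} \sum_{p=0}^k {\cal E}_{{\cal H}_s}(\mathcal{L}_{\cal A}^p\hat\psi)\,\d s \, . \]
Since $1/\sqrt s$ is integrable on $[0,1]$, Gronwall's inequality then delivers both inequalities at order $k$; the non-negativity of ${\cal E}_{{\cal S}_{{^*t}_0}}$ from Lemma \ref{short-energy} is what lets us discard the flux across $\cal S$ in the first inequality and retain it with the correct sign in the second. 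The already established Corollary \ref{peeling_killing}, Theorem \ref{theorem_1_linear} and Proposition \ref{theorem_2_linear} cover the pure strings (all Killing, all $X_4$, all in ${\cal B}$) and serve as consistency checks for the induction.

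The hard part will be the bookkeeping of the iterated commutators, namely verifying that repeated application of \eqref{resc-wave2}, \eqref{resc-wave3}, \eqref{resc-wave1} never yields a term escaping the order-$k$ energy sum. The two dangerous features are the $\partial_R^2$ correction, whose coefficient must stay $O(R)$ rather than degenerate to $O(1)$ after iteration for the absorption to survive, and the non-commutativity of the rotations, which must be shown to introduce no spurious powers of $r$ or $1/R$ as the generators $X_1, X_2, X_3$ are shuffled past $X_0 = \partial_{^*t}$ and $X_4 = \partial_R$. Since all correction coefficients in \eqref{resc-wave2}, \eqref{resc-wave3}, \eqref{resc-wave1} are bounded or carry precisely the needed power of $R$, and since $X_0$ and $X_4$ commute with every field of ${\cal A}$ while the rotation brackets close, this structure is preserved under iteration; the genuinely new content is confined to checking this stability, the analytic estimates themselves being identical to those of the basic and first order cases.
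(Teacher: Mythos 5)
Your proposal is correct and follows exactly the strategy of the paper: commute the fields of $\mathcal{A}$ through the conformal d'Alembertian via \eqref{resc-wave0}, \eqref{resc-wave2}, \eqref{resc-wave3}, \eqref{resc-wave1}, form the approximate conservation laws, absorb the $\partial_R^2$ source thanks to its $O(R)$ coefficient and the $\partial_{^*t}^2$ sources via the energy of an extra $X_0$-derivative, and close with Gronwall using the integrability of $1/\sqrt{s}$. The only difference is one of explicitness — the paper obtains the pure strings (Corollary \ref{peeling_killing}, Theorem \ref{theorem_1_linear}, Proposition \ref{theorem_2_linear}) and then simply asserts the combined result, whereas you spell out the induction over mixed strings and the stability of the commutator structure under iteration, which is indeed the (routine but necessary) bookkeeping the paper leaves implicit.
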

Now we give the definition of the peeling at order $k\in \mathbb{N}^*$:
\begin{definition}
A solution $\psi$ of the wave equation in $\Omega^+_{^*t_0}$ peels at order $k \in \mathbb N$ if  the rescaled solution $\hat\psi$ satisfies
\[ \sum_{p=0}^k  {\cal E}_{\scri_{{^*t}_0}^+}(\mathcal{L}_{\cal A}^{p} \hat\psi) < +\infty \, , \]
\end{definition}
Theorem \ref{theorem-main} gives us a characterization of the class of initial data on ${\cal H}_1$ that guarantees that the corresponding solution peels at a given order $k$~; it is the completion of smooth compactly supported data on ${\cal H}_1$ in the norm
\begin{equation} \label{NormDatakPeel}
\left( \sum_{p=0}^k{\cal E}_{{\cal H}_1} (\mathcal{L}_{\cal A}^{p} \hat\psi) \right)^{1/2} \, .
\end{equation}
This norm however involves derivatives transverse to ${\cal H}_1$ (along $\partial_t$ and $\partial_R$) and appears therefore not to be well defined on a space of initial data. In order to replace these derivatives by other differential operators acting tangentially on ${\cal H}_1$, we use the wave equation satisfied by $\hat\psi$.
First, recall that
\[ \left( \partial_r \right)_{^*K} = V^+ = \frac{r^2+a^2}{\Delta} (\partial_t + \partial_{r_*} ) + \frac{a}{\Delta}\partial_\varphi \, ,\]
whence
\begin{equation} \label{dR}
\partial_R = -\frac{1}{R^2} \left( \partial_r \right)_{^*K} = -r^2 \left( \frac{r^2+a^2}{\Delta} (\partial_t + \partial_{r_*} ) + \frac{a}{\Delta}\partial_\varphi \right) \, .
\end{equation}
In this expression in terms of Boyer-Lindquist coordinates, the only derivative that is not tangent to ${\cal H}_1$ is $\partial_t$. Recall also that $\partial_t$ in Boyer-Lindquist coordinates and $\partial_{^*t}$ in $^*t,\,R,\,\theta,\, ^*\varphi$ coordinates are the same vector field. So all we need to do is to use the equation to express the action of $\partial_t = \partial_{^*t}$ on a solution $\hat\psi$ as a differential operator involving only derivatives acting tangentially on ${\cal H}_1$. It is better to work in Boyer-Lindquist coordinates (or in $t,\, r_*,\, \theta ,\, \varphi$ coordinates) because only $\partial_t$ is transverse to ${\cal H}_1$ whereas in $^*t,\,R,\,\theta,\, ^*\varphi$ coordinates, both $\partial_{^*t}$ and $\partial_R$ are transverse to ${\cal H}_1$. So we first need to express in $t,\, r_*,\, \theta ,\, \varphi$ coordinates the equation satisfied by $\hat\psi$. Rather than re-expressing \eqref{resc-wave} in terms of these coordinates, we give the expression of equation \eqref{WaveEq} with $\lambda =0$ in terms of $t,\, r_*,\, \theta ,\, \varphi$ and then use the fact that $\psi = \frac1r \hat\psi$. Equation \eqref{WaveEq} takes the form (see for example \cite{Ni2002})
\begin{eqnarray}
\frac{\partial^2 \psi}{\partial t^2} + \frac{4aMr}{\sigma^2}
\frac{\partial^2 \psi}{\partial t \partial \varphi} - \frac{\Delta}{\sigma^2}
\frac{\partial}{\partial r} \left( \Delta \frac{\partial \psi}{\partial
r} \right) - \frac{\Delta}{\sigma^2 \sin \theta}
\frac{\partial}{\partial \theta} \left( \sin \theta \frac{\partial
    \psi}{\partial \theta} \right) \nonumber \\
- \frac{\rho^2 -2Mr}{\sigma^2 \sin^2 \theta}\, \frac{\partial^2
  \psi}{\partial \varphi^2} + \lambda \frac{\Delta
  \rho^2}{\sigma^2} |\psi|^2 \psi = 0 \, . \label{WaveBL}
\end{eqnarray}
For $\lambda =0$ and in terms of $\hat\psi$, we obtain
\begin{eqnarray}
\frac{\partial^2 \psi}{\partial t^2} + \frac{4aMr}{\sigma^2}
\frac{\partial^2 \hat\psi}{\partial t \partial \varphi} - \frac{\Delta r}{\sigma^2}
\frac{\partial}{\partial r} \left( \Delta \frac{\partial}{\partial
r} \left( \frac1r \hat\psi \right) \right) - \frac{\Delta}{\sigma^2 \sin \theta}
\frac{\partial}{\partial \theta} \left( \sin \theta \frac{\partial
    \hat\psi}{\partial \theta} \right) \nonumber \\
- \frac{\rho^2 -2Mr}{\sigma^2 \sin^2 \theta}\, \frac{\partial^2
  \hat\psi}{\partial \varphi^2} = 0 \, . \label{WaveBLhatpsi}
\end{eqnarray}
Written as a system whose unknown is the pair $(\hat\psi \, ,~ \partial_ {t} \hat\psi )$, this becomes~:
\begin{eqnarray}
\partial_{t}\left(\begin{matrix} \hat\psi \\ \partial_{t} \hat\psi \end{matrix}\right) &=& \left( \begin{matrix} {0} & {1} \\ {\frac{\Delta r}{\sigma^2}
\partial_r \Delta \partial_r \frac1r + \frac{\Delta}{\sigma^2 \sin \theta}
\partial_\theta \sin \theta \partial_\theta + \frac{\rho^2 -2Mr}{\sigma^2 \sin^2 \theta}\, \partial^2_\varphi} & {-\frac{4aMr}{\sigma^2}
\partial_\varphi} \end{matrix} \right) \left(\begin{matrix} \hat\psi \\ \partial_{t} \hat\psi \end{matrix}\right) \label{WaveBLhatpsiHamil}
\end{eqnarray}
and we denote by $L$ the operator on the right-hand side
\begin{equation} \label{OpL}
L:=  \left( \begin{matrix} {0} & {1} \\ {\frac{\Delta r}{\sigma^2}
\partial_r \Delta \partial_r \frac1r + \frac{\Delta}{\sigma^2 \sin \theta}
\partial_\theta \sin \theta \partial_\theta + \frac{\rho^2 -2Mr}{\sigma^2 \sin^2 \theta}\, \partial^2_\varphi} & {-\frac{4aMr}{\sigma^2}
\partial_\varphi} \end{matrix} \right)  \, .
\end{equation}
Using \eqref{dR}, we also have
\begin{eqnarray}
\partial_R\left(\begin{matrix} \hat\psi \\ \partial_t\hat\psi \end{matrix}\right) &=& -r^2 \left( \frac{r^2+a^2}{\Delta} (\partial_t + \partial_{r_*} ) + \frac{a}{\Delta}\partial_\varphi \right) \left(\begin{matrix} \hat\psi \\ \partial_t\hat\psi \end{matrix}\right) \nonumber \\
&=& -r^2 \left( \frac{r^2+a^2}{\Delta} (L + \partial_{r_*} ) + \frac{a}{\Delta}\partial_\varphi \right) \left(\begin{matrix} \hat\psi \\ \partial_t\hat\psi \end{matrix}\right) =: N \left(\begin{matrix} \hat\psi \\ \partial_t\hat\psi \end{matrix}\right) \, . \label{OpN}
\end{eqnarray}
Let us consider the set of differential operators
\[ {\mathbb{A}} := \{ N \, ,~ {\cal L}_{X_1} \, ,~ {\cal L}_{X_2} \, ,~ {\cal L}_{X_3} \, ,~ L \} \, ,\]
all of them involving only derivatives tangent to ${\cal H}_1$. We can re-express the characterization of the class of data on ${\cal H}_1$ ensuring peeling at order $k$ from Theorem \ref{theorem-main} as follows~:
\begin{theorem} \label{LinearPeeling}
The space $\mathfrak{h}^k ({\cal H}_1 )$ of initial data whose associated solution peels at order $k$ is the completion of ${\cal C}_0^\infty( {\cal H}_1 ) $ in the norm~:
\[ \left|\left| \left(\begin{matrix}\hat {\psi}_0 \\ \hat {\psi}_1 \end{matrix}\right)\right|\right|^2_{\mathfrak{h}^k ({\cal H}_1 )} := \sum_{p=0}^k \sum_{(i_1,...,i_p) \, ;~ A_{i_j} \in \mathbb{A}} {\cal E}_{{\cal H}_1}\left( A_{i_1}...A_{i_p} \left(\begin{matrix}\hat {\psi}_0 \\ \hat{\psi}_1 \end{matrix}\right) \right) \]
where we have denoted by $\mathcal{E}_{\mathcal{H}_1}\left(\begin{matrix} \hat{\psi}_0\\ \hat{\psi}_1 \end{matrix}\right)$ the energy $\mathcal{E}_{\mathcal{H}_1}(\hat\psi)$ where $\hat\psi$ is replaced by $\hat\psi_0$ and $\partial_{t}\hat\psi = \hat\psi_{^*t}$ is replaced by $\hat\psi_1$.
\end{theorem}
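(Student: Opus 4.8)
The plan is to deduce the statement directly from Theorem \ref{theorem-main}, the only work being to rewrite the norm \eqref{NormDatakPeel} as a norm phrased purely in terms of operators tangential to ${\cal H}_1$. As already noted before the statement, Theorem \ref{theorem-main} identifies the class of data whose solution peels at order $k$ with the completion of ${\cal C}_0^\infty({\cal H}_1)$ in the norm $\big(\sum_{p=0}^k {\cal E}_{{\cal H}_1}({\cal L}_{\cal A}^{p}\hat\psi)\big)^{1/2}$, where ${\cal A}=\{X_0,X_1,X_2,X_3,X_4\}$. Since two equivalent norms have the same completion, it suffices to prove that, on ${\cal C}_0^\infty({\cal H}_1)$, this norm is equivalent to $\|\cdot\|_{\mathfrak{h}^k({\cal H}_1)}$, which involves only the operators $\mathbb{A}=\{N,{\cal L}_{X_1},{\cal L}_{X_2},{\cal L}_{X_3},L\}$ acting on the Cauchy data $(\hat\psi_0,\hat\psi_1)$.

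The first step is to observe that, for any smooth $\phi$, the flux ${\cal E}_{{\cal H}_1}(\phi)$ depends only on the pair $(\phi|_{{\cal H}_1},\partial_t\phi|_{{\cal H}_1})$. By the equivalent expression \eqref{equivEnHs} taken at $s=1$, this flux involves only $\partial_{^*t}\phi$, $\partial_R\phi$ and the angular derivatives of $\phi$ on ${\cal H}_1$; the sole transverse derivative, $\partial_R\phi$, is rewritten by means of \eqref{dR} as a combination of $\partial_t\phi$ and derivatives tangent to ${\cal H}_1$. This is exactly the functional denoted ${\cal E}_{{\cal H}_1}(\hat\psi_0,\hat\psi_1)$ in the statement, so the remaining task is to relate the Cauchy data of ${\cal L}_{X_{i_1}}\cdots {\cal L}_{X_{i_p}}\hat\psi$ to the operators of $\mathbb{A}$.

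Writing $D(\phi)=(\phi|_{{\cal H}_1},\partial_t\phi|_{{\cal H}_1})$, the rotations $X_1,X_2,X_3$ commute with $\partial_t=\partial_{^*t}$, so $D({\cal L}_{X_i}\phi)={\cal L}_{X_i}D(\phi)$ holds exactly for $i=1,2,3$; and by the very definitions \eqref{OpL}, \eqref{OpN} of $L$ and $N$ one has $D(\partial_t\phi)=LD(\phi)$ (this is \eqref{WaveBLhatpsiHamil}) and $D(\partial_R\phi)=ND(\phi)$ \emph{whenever $\phi$ solves the homogeneous equation}. Applying these intertwining relations successively starting from $\hat\psi$, I would identify $D({\cal L}_{X_{i_1}}\cdots {\cal L}_{X_{i_p}}\hat\psi)$ with $A_{i_1}\cdots A_{i_p}(\hat\psi_0,\hat\psi_1)$, the correspondence being $X_0\leftrightarrow L$, $X_4\leftrightarrow N$, $X_i\leftrightarrow {\cal L}_{X_i}$; this would yield the two norms term by term if every iterated derivative were a homogeneous solution.

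The main obstacle is precisely that caveat: as soon as some $X_{i_j}\in\{X_2,X_3,X_4\}$ occurs, the intermediate field $\phi={\cal L}_{X_{i_{j+1}}}\cdots {\cal L}_{X_{i_p}}\hat\psi$ no longer solves the homogeneous equation, so $D(\partial_t\phi)=LD(\phi)$ and the second component of $D(\partial_R\phi)=ND(\phi)$ acquire a correction proportional to $\big(\square_{\hat g}+\tfrac16\mathrm{Scal}_{\hat g}\big)\phi$. I would control these corrections with the commutation identities \eqref{resc-wave0}, \eqref{resc-wave2}, \eqref{resc-wave3}, \eqref{resc-wave1}, which express the conformal d'Alembertian of any such derivative as a combination, with bounded coefficients carrying a factor $a$ and favourable powers of $R$, of derivatives of $\hat\psi$ of the same or lower order. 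The weights are exactly those already shown to be harmless in the higher-order estimates; in particular the coefficient of $\partial_R^2\hat\psi$ is $O(R)$, handled just as in the proof of Theorem \ref{theorem_1_linear}. Each correction therefore contributes to ${\cal E}_{{\cal H}_1}$ a quantity dominated by energies of words of order at most $p$ in the operators of $\mathbb{A}$. Summing over all words of length $\le k$ and absorbing the corrections, by induction on $k$, yields the two-sided bound $\sum_{p=0}^k {\cal E}_{{\cal H}_1}({\cal L}_{\cal A}^{p}\hat\psi)\simeq \|(\hat\psi_0,\hat\psi_1)\|^2_{\mathfrak{h}^k({\cal H}_1)}$, whence the completions coincide and the claimed description of $\mathfrak{h}^k({\cal H}_1)$ follows.
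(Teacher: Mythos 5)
Your proposal follows essentially the same route as the paper: Theorem \ref{LinearPeeling} is obtained as a re-expression of Theorem \ref{theorem-main}, replacing the transverse derivatives $\partial_t$ and $\partial_R$ in the norm \eqref{NormDatakPeel} by the tangential operators $L$ and $N$ of \eqref{OpL} and \eqref{OpN} via the equation. You are in fact more explicit than the paper about the corrections that arise when the intertwining relations are iterated on Lie derivatives satisfying only inhomogeneous equations, and your absorption of these error terms by induction using the commutation identities \eqref{resc-wave0}, \eqref{resc-wave2}, \eqref{resc-wave3} and \eqref{resc-wave1} is sound.
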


\section{The nonlinear case} \label{NLCase}

We now turn to the study in $\Omega^+_{^*t_0}$ of equation \eqref{WaveEq} with $\lambda=1$, i.e.
\begin{equation}\label{nonlinear_wave}
\square_g \psi + \psi^3 = 0 \, ,
\end{equation}
and its conformally rescaled version
\begin{equation} \label{RescNLW}
{\square}_{\hat g} \hat\psi + \frac{1}{6}\mathrm{Scal}_{\hat g}\hat\psi + \hat{\psi}^3 = 0
\end{equation}
In star-Kerr coordinates, equation \eqref{RescNLW} has the expression~:
\begin{eqnarray}
-\frac{r^2a^2\sin^2\theta}{\rho^2}\partial_{^*t}^2\hat\psi - \frac{r^2+a^2}{\rho^2}\partial_{^*t}\partial_R\hat\psi - \frac{2a r^2}{\rho^2}\partial_{^*t}\partial_{^*\varphi}\hat\psi-\frac{a}{\rho^2}\partial_R\partial_{^*\varphi}\hat\psi \nonumber\\
-\frac{r^2}{\rho^2} \left( (1+a^2R^2) \partial_{^*t}\partial_R\hat\psi + (R^2-2MR^3+a^2R^4) \partial_R^2\hat\psi  \right) \nonumber\\
-\frac{r^2}{\rho^2}\left(  aR^2 \partial_R\partial_{^*\varphi}\hat\psi + 2a^2R \partial_{^*t}\hat\psi+(2R-6MR^2+4aR^3) \partial_R\hat\psi \right.\nonumber\\
\left.+ 2aR \partial_{^*\varphi}\hat\psi + \Delta_{S^2}\hat\psi - 2\frac{Mr-a^2}{r^2}\hat\psi\right) + \hat{\psi}^3 &=& 0 \,.\label{resc-nonwave}
\end{eqnarray}
which is equation \eqref{resc-wave} for $\lambda =1$. We use the same approach as in the linear case and insist mostly on the changes induced by the presence of the nonlinearity.

\subsection{Commutation with vector fields}

For a smooth solution $\hat\psi$ of \eqref{RescNLW}, straightforward modifications of the calculations of subsection \ref{CommutLin} give us the equations satisfied by ${\cal L}_{X_ i} \hat\psi$, $i\in \{ 0,1,2,3,4\}$~:
\begin{eqnarray}
\left(\square_{\hat g}+\frac{1}{6}\mathrm{Scal}_{\hat g}\right)\hat\psi_{^*t} + 3\hat\psi^2\hat\psi_{^*t} &=& 0 \, , \label{resc-nonwave-t} \\
\left(\square_{\hat g}+\frac{1}{6}\mathrm{Scal}_{\hat g}\right)\hat\psi_{^*\varphi} + 3\hat\psi^2\hat\psi_{^*\varphi} &=& 0 \, , \label{resc-nonwave-varphi}
\end{eqnarray}
\begin{eqnarray}
\left(\square_{\hat g}+\frac{1}{6}\mathrm{Scal}_{\hat g}\right){\cal L}_{X_2}\hat\psi &=& \frac{r^2}{\rho^2} a^2\sin 2\theta\sin{^*\varphi}\partial_{{^*t}}^2 \hat\psi  \nonumber \\
&& + \frac{r^2}{\rho^2} \left( 2a \partial_{^*t} + 2aR^2\partial_R  + 2aR  \right) {\cal L}_{X_3}\hat\psi \nonumber \\
&&- 3\hat\psi^2{\cal L}_{X_2}\hat\psi + \frac{a^2}{\rho^2}\sin 2\theta \sin{^*\varphi} \, \hat\psi^3 \, , \label{resc-nonwave2} \\
\left(\square_{\hat g}+\frac{1}{6}\mathrm{Scal}_{\hat g}\right){\cal L}_{X_3}\hat\psi &=& \frac{r^2}{\rho^2} a^2\sin 2\theta \cos {^*\varphi}\partial_{{^*t}}^2 \hat\psi  \nonumber \\
&& - \frac{r^2}{\rho^2} \left( 2a \partial_{^*t} +2aR^2\partial_R + 2aR \right) {\cal L}_{X_2} \hat\psi \nonumber \\
&&- 3\hat\psi^2{\cal L}_{X_3}\hat\psi + \frac{a^2}{\rho^2}\sin 2\theta \cos{^*\varphi}\, \hat\psi^3 \, , \label{resc-nonwave3} \\
\left(\square_{\hat g}+\frac{1}{6}\mathrm{Scal}_{\hat g}\right)\hat\psi_R &=& \frac{r^2}{\rho^2}\left( 4a^2R \partial_{^*t} \hat\psi_R + 4 aR \partial_{^*\varphi} \hat\psi_R + 2a^2\partial_{^*t}\hat\psi + 2a \partial_{^*\varphi}\hat\psi \right) \nonumber\\
&&+\frac{r^2}{\rho^2}\left((4a^2R^3-6MR^2+2R)\partial_R\hat\psi_R \right. \nonumber\\
&&\hspace{2cm}\left. + (12aR^2-12MR+2)\hat\psi_R  - 2(M-2a^2R)\hat\psi \right) \nonumber\\
&&- 3\hat\psi^2\hat\psi_R - \frac{ra^2\cos^2\theta}{\rho^2}\, \hat\psi^3 \, . \label{resc-nonwave1}
\end{eqnarray}

\subsection{Approximate conservation laws}

In the linear case, we have used the stress-energy tensor $T_{ab}$ for the wave equation and we have recovered a control on the $L^2$-norm of the field via a Poincaré-type estimate. Here we proceed slightly differently and we incorporate the $L^2$-control directly in the energy fluxes. We shall use both the stress-energy tensor for the linear Klein-Gordon equation $\square_{\hat g}\hat\psi + \hat\psi = 0$~:
\begin{eqnarray}
\T_{ab} (\hat\psi) = \T_{(ab)} (\hat\psi) &=& \partial_a\hat\psi\partial_b\hat\psi - \frac{1}{2}\langle \nabla \hat\psi\, ,\nabla \hat\psi\rangle_{\hat{g}} \hat g_{ab} + \frac12 \hat\psi^2 \hat g_{ab} \label{SETKG}\\
&=& T_{ab} (\hat\psi) + \frac12\hat\psi^2 \hat g_{ab} \nonumber
\end{eqnarray}
and that for the nonlinear equation $\square_{\hat g}\hat\psi + \hat\psi + \hat\psi^3 = 0$~:
\begin{equation} \label{SETNLKG}
\tilde \T_{ab} (\hat\psi) = \tilde \T_{(ab)}(\hat\psi) = \T_{ab} (\hat\psi) + \frac{1}{4}\hat\psi^4 \hat{g}_{ab}\, .
\end{equation}
For a solution $\hat{\psi}$ of equation \eqref{RescNLW}, we have
\begin{eqnarray}
T^b \nabla^a \T_{ab} (\hat{\psi}) = \left( \square_{\hat{g}} \hat{\psi} + \hat{\psi}  \right) \nabla_T \hat{\psi} &=& \left( \left( 1 - \frac16 \mathrm{Scal}_{\hat{g}} \right) \hat{\psi} -  \hat{\psi}^3 \right) \nabla_T \hat{\psi} \nonumber \\
&=& \left( \left( 1 - 2\frac{Mr-a^2}{r^2} \right) \hat{\psi} -  \hat{\psi}^3 \right) \nabla_T \hat{\psi}\, , \label{SETCL1}\\
T^b \nabla^a \T_{ab} (\hat{\psi}) = \left( \square_{\hat{g}} \hat{\psi} + \hat{\psi} + \hat{\psi}^3 \right) \nabla_T \hat{\psi} &=& \left( 1 - \frac16 \mathrm{Scal}_{\hat{g}} \right) \hat{\psi} \nabla_T \hat{\psi} \nonumber \\
&=& \left( 1 - 2\frac{Mr-a^2}{r^2} \right) \hat{\psi} \nabla_T \hat{\psi}\,  .\label{SETCL2}
\end{eqnarray}
The contraction of the Killing form of the Morawetz vector field $T^a$ with both tensors \eqref{SETKG} and \eqref{SETNLKG} are simple modifications of its contraction with $T_{ab}$
\begin{eqnarray*}
\nabla_{(a}T_{b)} \T^{ab} &=& \nabla_{(a}T_{b)} T^{ab} + \frac12 \hat\psi^2 \nabla_a T^a \, , \\
\nabla_{(a}T_{b)} \tilde{\T}^{ab} &=& \nabla_{(a}T_{b)} \T^{ab} + \frac{1}{4}\hat\psi^4 \nabla_a T^a
\end{eqnarray*}
and from Lemma \ref{Killing-form} and the expression \eqref{inverse-metric} of the inverse rescaled metric $\hat{g}^{-1}$, the divergence of $T^a$ is given by
\begin{eqnarray*}
\nabla_a T^a &=& 4\left( (1+{^*t}R)M \frac{\partial}{\partial R}\left(\frac{R}{\rho^2}\right)-\frac{2M{^*t}R}{\rho^2}\right) \left( - \frac{r^2a^2\sin^2 \theta}{\rho^2} \right) \\
&& -4a\sin^2 \theta \, {^*t} \left( -\frac{2a}{\rho^2} \right) \\
&& - 4a\sin^2\theta \left\{ 2(1+{^*t}R)M \frac{\partial}{\partial R}\left(\frac{R}{\rho^2}\right) - 2 M{^*t} \frac{R}{\rho^2}+ R \right\} \left( -\frac{2ar^2}{\rho^2} \right) \\
&& + 4a^2\cos^2\theta (1+{^*t}R)R \left( -\frac{r^2}{\rho^2} \right) \\
&& + 4a^2\sin^2\theta (1+{^*t}R)\left\{ R + \sin^2\theta M \frac{\partial}{\partial R}\left( \frac{R}{\rho^2}\right) \right\} \left( -\frac{r^2}{\rho^2\sin^2 \theta} \right) \, .
\end{eqnarray*}
This is of order $1$ in $R$ and using Lemma \ref{short-morawetz}, which was giving a short-hand expression for $\nabla_{(a}T_{b)} T^{ab}$, we can write~:
\begin{eqnarray}
\left(\nabla_a T_b  \right) \T^{ab} (\hat\psi) &=& A_1 \hat\psi_{^*t}^2 + A_2 \hat\psi_{^*t} \hat\psi_R + A_3 \hat\psi_{^*t} \hat\psi_{^*\varphi} +A_4 R \hat\psi_R^2 + A_5 \hat\psi_R \hat\psi_{^*\varphi}\nonumber \\
&&+ A_6 \sin^2\theta \hat\psi_\theta^2 + A_7 \hat\psi_{^*\varphi}^2 + A_8 \vert \nabla_{S^2} \hat\psi \vert^2 + A_9 \hat\psi^2 \, , \label{A1A9}\\
\left(\nabla_a T_b  \right) \tilde{\T}^{ab} (\hat\psi) &=& A_1 \hat\psi_{^*t}^2 + A_2 \hat\psi_{^*t} \hat\psi_R + A_3 \hat\psi_{^*t} \hat\psi_{^*\varphi} +A_4 R \hat\psi_R^2 + A_5 \hat\psi_R \hat\psi_{^*\varphi} \nonumber \\
&&+ A_6 \sin^2\theta \hat\psi_\theta^2 + A_7 \hat\psi_{^*\varphi}^2 + A_8 \vert \nabla_{S^2} \hat\psi \vert^2 + A_9 \hat\psi^2 +A_{10}\hat\psi^4 \, , \label{A1A10}
\end{eqnarray}
where the functions $A_i \, (i=1,2...8)$ are the ones given in Lemma \ref{short-morawetz} and
\begin{equation} \label{A9A10}
A_9 = 2 A_{10}= \frac12 \nabla_a T^a \, .
\end{equation}
All functions $A_i$, $i=1,...,10$, are bounded in $\Omega^+_{^*t_0}$ and $A_9$ and $A_{10}$ are of order $1$ in $R$.

From \eqref{SETCL2}, the nonlinear energy current
\begin{equation} \label{NLEnCurrent}
\tilde{\J}_a (\hat\psi ) := T^b \tilde{\T}_{ab} (\hat\psi )
\end{equation}
satisfies the approximate conservation law
\begin{eqnarray}
\nabla^a \tilde{\J}_a (\hat\psi)&=& \left( \hat\psi - 2\frac{Mr-a^2}{r^2}\hat\psi \right) \nabla_T \hat\psi + \left(\nabla_a T_b\right) \tilde{\T}^{ab} (\hat\psi) \nonumber\\
& =& \left( \hat\psi - 2\frac{Mr-a^2}{r^2}\hat\psi \right) \left( {^*t}^2\partial_{^*t}\hat\psi - 2(1+{^*t}R)\partial_R \hat\psi \right) + \left(\nabla_a T_b\right) \tilde{\T}^{ab} (\hat\psi) \, .\label{conser-non}
\end{eqnarray}
Similarly, from \eqref{SETCL1}, the linear energy current
\begin{equation} \label{LEnCurrent}
\J_a (\hat\psi ):= T^b \T_{ab} (\hat\psi ) \, .
\end{equation}
satisfies
\begin{eqnarray}
\nabla^a {\J}_a (\hat\psi)&=& \left( \hat\psi - 2\frac{Mr-a^2}{r^2}\hat\psi - \hat{\psi}^3 \right) \nabla_T \hat\psi + \left(\nabla_a T_b\right) {\T}^{ab} (\hat\psi) \nonumber\\
& =& \left( \hat\psi - 2\frac{Mr-a^2}{r^2}\hat\psi - \hat{\psi}^3 \right) \left( {^*t}^2\partial_{^*t}\hat\psi - 2(1+{^*t}R)\partial_R \hat\psi \right) + \left(\nabla_a T_b\right) {\T}^{ab} (\hat\psi) \, .\label{ACLLin}
\end{eqnarray}
For the approximate conservation laws satisfied by the successive derivatives of $\hat\psi$, we shall only use the stress energy tensor $\T_{ab}$ for the linear Klein-Gordon equation, since these derivatives in fact satisfy linear equations with sources. Using equation \eqref{resc-nonwave-t} we have
\begin{eqnarray}
\nabla^a \J_a (\hat\psi_{^*t})  &=& \left( \square_{\hat g}\hat\psi_{^*t} + \hat\psi_{^*t} \right)  \nabla_T \hat\psi_{^*t} + \left(\nabla_a T_b\right) \T^{ab} (\hat\psi_{^*t}) \nonumber\\
& = & \left( \hat\psi_{^*t} - 2\frac{Mr-a^2}{r^2}\hat\psi_{^*t} - 3\hat\psi^2\hat\psi_{^*t} \right) \left( {^*t}^2\partial_{^*t}\hat\psi_{^*t} - 2(1+{^*t}R)\partial_R \hat\psi_{^*t} \right) \nonumber \\&&+ \left(\nabla_a T_b\right) \T^{ab} (\hat\psi_{^*t}) 
\label{conser-non-t}
\end{eqnarray}
and a similar formula for $\hat\psi_{^*\varphi}$ is obtained using \eqref{resc-nonwave-varphi}
\begin{eqnarray}
\nabla^a \J_a (\hat\psi_{^*\varphi})  &=& \left( \square_{\hat g} \hat\psi_{^*\varphi} + \hat\psi_{^*\varphi} \right) \nabla_T \hat\psi_{^*\varphi} + \left(\nabla_a T_b\right) \T^{ab} (\hat\psi_{^*\varphi}) \nonumber\\
& =& \left( \hat\psi_{^*\varphi} - 2\frac{Mr-a^2}{r^2}\hat\psi_{^*\varphi} - 3\hat\psi^2\hat\psi_{^*\varphi} \right) \left( {^*t}^2\partial_{^*t}\hat\psi_{^*\varphi} - 2(1+{^*t}R)\partial_R \hat\psi_{^*\varphi} \right) \nonumber \\
&&+ \left(\nabla_a T_b\right) \T^{ab} (\hat\psi_{^*\varphi}) \, .
\label{conser-non-varphi}
\end{eqnarray}
The conservation laws for ${\cal L}_{X_i} \, (i=2,3,4)$ are more complicated. They are obtained from equations \eqref{resc-nonwave2} and \eqref{resc-nonwave3} and \eqref{resc-nonwave1} as follows~:
\begin{eqnarray}
\nabla^a \J_a ({\cal L}_{X_2}\hat\psi) &=& \left( \square_{\hat g}{\cal L}_{X_2} \hat\psi + {\cal L}_{X_2} \hat\psi \right) \nabla_T {\cal L}_{X_2}\hat\psi + \left(\nabla_a T_b  \right) \T^{ab} ({\cal L}_{X_2}\hat\psi ) \nonumber\\
&=&\frac{r^2}{\rho^2} \left( a^2\sin 2\theta \sin{^*\varphi} \partial_{{^*t}}^2 \hat\psi - 2a \partial_{^*t} {\cal L}_{X_3}\hat\psi - 2a R^2 \partial_R {\cal L}_{X_3} \hat\psi - 2aR {\cal L}_{X_3}\hat\psi \right)  \nabla_T{\cal L}_{X_2} \hat\psi \nonumber\\
&& + \left( {\cal L}_{X_2}\hat\psi - 3\hat\psi^2{\cal L}_{X_2}\hat\psi + \frac{a^2}{\rho^2}\sin 2\theta \sin{^*\varphi}\hat\psi^3 - 2\frac{Mr - a^2}{r^2}{\cal L}_{X_2}\hat\psi  \right)  \nabla_T {\cal L}_{X_2} \hat\psi \nonumber\\
&& + \left(\nabla_a T_b  \right) \T^{ab} ({\cal L}_{X_2}\hat\psi ) \, ,
\label{conser-non2}
\end{eqnarray}
and 
\begin{eqnarray}
\nabla^a \J_a  ({\cal L}_{X_3}\hat\psi) &=& \left( \square_{\hat g}{\cal L}_{X_3} \hat\psi + {\cal L}_{X_3} \hat\psi \right) \nabla_T {\cal L}_{X_3}\hat\psi + \left(\nabla_a T_b  \right) \T^{ab} ({\cal L}_{X_3}\hat\psi) \nonumber\\
&=&\frac{r^2}{\rho^2} \left( a^2\sin 2\theta \cos {^*\varphi} \partial_{{^*t}}^2 \hat\psi + 2a \partial_{^*t} {\cal L}_{X_2}\hat\psi + 2a R^2 \partial_R {\cal L}_{X_2} \hat\psi + 2aR {\cal L}_{X_2}\hat\psi \right) \nabla_T {\cal L}_{X_3} \hat\psi \nonumber\\
&& + \left( {\cal L}_{X_3}\hat\psi - 3\hat\psi^2{\cal L}_{X_3}\hat\psi + \frac{a^2}{\rho^2}\sin 2\theta \cos{^*\varphi}\hat\psi^3 - 2\frac{Mr - a^2}{r^2}{\cal L}_{X_3}\hat\psi  \right)  \nabla_T{\cal L}_{X_3} \hat\psi \nonumber\\
&&+ \left(\nabla_a T_b  \right) \T^{ab} ({\cal L}_{X_3}\hat\psi) \, ,
\label{conser-non3}
\end{eqnarray}
and
\begin{eqnarray}
\nabla^a \J_a  (\hat\psi_R) & =& \left( \square_{\hat g}\hat\psi_R + \hat\psi_R \right) \nabla_T \hat\psi_R + \left(\nabla_a T_b  \right) \T^{ab} (\hat\psi_R) \nonumber\\
& = & \frac{r^2}{\rho^2}\left( 4a^2R \partial_{^*t}\hat\psi_R + 4 aR \partial_{^*\varphi}\hat\psi_R + 2a^2\partial_{^*t}\hat\psi + 2a \partial_{^*\varphi}\hat\psi \right) \nabla_T \hat\psi_R \nonumber\\
&&+\frac{r^2}{\rho^2}\left((4a^2R^3-6MR^2+2R)\partial_R\hat\psi_R + (12aR^2-12MR+2)\hat\psi_R \right) \nabla_T \hat\psi_R \nonumber\\
&&- \left( 2(M-2a^2R)\hat\psi + 2\frac{Mr-a^2}{r^2}\hat\psi_R \right)  \nabla_T \hat\psi_R  \nonumber\\
&& + \left( \hat\psi_R - 3\hat\psi^2\hat\psi_R - \frac{ra^2\cos^2\theta}{\rho^2}\hat\psi^3  \right) \nabla_T \hat\psi_R +\left(\nabla_a T_b  \right) \T^{ab} (\hat\psi_R) \, .\label{conser-non1}
\end{eqnarray}

\subsection{Energy fluxes}

We shall denote by $\E_{S}$ and $\tilde{\E}_{S}$ the energy fluxes across an oriented hypersurface $S$ associated with the currents $\J$ (defined in \eqref{LEnCurrent}) and $\tilde{\J}$ (defined in \eqref{NLEnCurrent}). We give their simplified equivalent forms for the hypersurfaces ${\cal H}_s$, $0\leq s\leq 1$ in the following lemma~:
\begin{lemma}
The energy fluxes of a solution $\hat\psi$ of \eqref{resc-nonwave} associated with the non-linear energy current $\tilde{\J}$ have the equivalent simplified form~:
\begin{eqnarray*}
\tilde{\E}_{{\cal H}_s}(\hat\psi) &\simeq& \int_{{\cal H}_s} \left( {^*t}^2\hat\psi_{^*t}^2 + \frac{R}{|{^*t}|}\hat\psi_R^2 + \vert \nabla_{S^2}\hat\psi \vert^2 + \hat\psi^2 + \hat\psi^4 \right) \d {^*t} \d^2\omega \, , ~0<s\leq 1 \, , \\
\tilde{\E}_{\scri^+_{{^*t}_0}} (\hat\psi) &\simeq& \int_{\scri^+_{{^*t}_0}} \left( {^*t}^2 \hat\psi_{^*t}^2 + \vert \nabla_{S^2}\hat\psi \vert^2 + \hat\psi^2 + \hat\psi^4\right) \d {^*t} \d^2 \omega 
\end{eqnarray*}
and for the energy fluxes associated with the linear current $\J$, we have~:
\begin{eqnarray*}
{\cal E}_{{\cal H}_s}(\hat\psi) &\simeq & \int_{{\cal H}_s} \left( {^*t}^2\hat\psi_{^*t}^2 + \frac{R}{|{^*t}|}\hat\psi_R^2 + \vert \nabla_{S^2} \hat\psi \vert^2 + \hat\psi^2 \right) \d {^*t} \d^2\omega \, ,~0<s\leq 1 \, ,\\
{\cal E}_{\scri^+_{{^*t}_0}} (\hat\psi) &\simeq & \int_{\scri^+_{{^*t}_0}} \left( {^*t}^2 \hat\psi_{^*t}^2 + \vert \nabla_{S^2}\hat\psi \vert^2 + \hat\psi^2 \right) \d {^*t} \d^2 \omega \, .
\end{eqnarray*}
Both stress energy tensors $\T_{ab}$ and $\tilde{\T}_{ab}$ satisfy the dominant energy condition, hence
\[ {\E}_{{\cal S}_{^*t}}(\hat\psi) \geq 0 \, ,~ \tilde{\E}_{{\cal S}_{^*t}}(\hat\psi) \geq 0 \, .\]
\end{lemma}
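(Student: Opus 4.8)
The plan is to reduce the whole computation to the linear energy flux of Lemma \ref{short-energy} and then treat the extra terms carried by the two stress-energy tensors. Writing $\T_{ab} = T_{ab} + \frac12\hat\psi^2\hat g_{ab}$ and $\tilde{\T}_{ab} = T_{ab} + \frac12\hat\psi^2\hat g_{ab} + \frac14\hat\psi^4\hat g_{ab}$, the piece of each current built from $T_{ab}$ produces exactly the energy $3$-form analysed in Lemma \ref{short-energy}, whose restriction to ${\cal H}_s$ is equivalent to $\left({^*t}^2\hat\psi_{^*t}^2 + \frac{R}{|{^*t}|}\hat\psi_R^2 + |\nabla_{S^2}\hat\psi|^2\right)\d{^*t}\wedge\d^2\omega$ (and, on $\scri^+_{{^*t}_0}$, to $\left({^*t}^2\hat\psi_{^*t}^2 + |\nabla_{S^2}\hat\psi|^2\right)\d{^*t}\wedge\d^2\omega$). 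So only the contributions of the potential terms $\frac12\hat\psi^2\hat g_{ab}$ and $\frac14\hat\psi^4\hat g_{ab}$ remain to be computed.

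First I would split off these potential terms. Since raising one index of $c\hat\psi^n\hat g_{ab}$ gives $c\hat\psi^n$ times the identity, the extra piece of $T^a\T_a^b$ beyond the linear $T^aT_a^b$ is $\frac12\hat\psi^2 T^b$, with an additional $\frac14\hat\psi^4 T^b$ for $\tilde{\T}$. Through the definition $E = 4T^a\T_a^b\,\partial_b\hook\dvol$ these produce the extra energy $3$-forms $2\hat\psi^2\,T\hook\dvol$ and $\hat\psi^4\,T\hook\dvol$, where $T = {^*t}^2\partial_{^*t} - 2(1+{^*t}R)\partial_R$ is the Morawetz vector field. Using $\dvol = R^2\rho^2\,\d{^*t}\wedge\d R\wedge\d^2\omega$ one finds
\[ T\hook\dvol = R^2\rho^2\left( {^*t}^2\,\d R\wedge\d^2\omega + 2(1+{^*t}R)\,\d{^*t}\wedge\d^2\omega\right), \]
and on ${\cal H}_s$, where $\d R = \frac{\Delta R^2}{s(r^2+a^2)}\,\d{^*t}$, the pullback equals $R^2\rho^2\left({^*t}^2\frac{\Delta R^2}{s(r^2+a^2)} + 2(1+{^*t}R)\right)\d{^*t}\wedge\d^2\omega$.

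The only genuinely non-routine step is to show that this scalar coefficient is bounded above and below by positive constants, uniformly in $s\in(0,1]$ and in $(M,a)$ over a compact range. Setting $x = -{^*t}R$ and using \eqref{epsilonestimates} (so that $R^2\rho^2\to 1$, $r_*/r\in[1,1+\varepsilon]$, $\Delta/(r^2+a^2)\to 1$ and $s=|{^*t}|/r_*$), the first summand is equivalent to $x$ and the second to $2(1-x)$, whence the coefficient is equivalent to $2-x$; since $x\in[0,1+\varepsilon]$ we have $2-x\in[1-\varepsilon,2]$, so the coefficient is $\simeq 1$ for $|{^*t}_0|$ large enough. This yields the additional contributions $\simeq\hat\psi^2\,\d{^*t}\wedge\d^2\omega$ and $\simeq\hat\psi^4\,\d{^*t}\wedge\d^2\omega$; as all terms involved are non-negative, adding them to the linear part gives the stated equivalents for ${\cal E}_{{\cal H}_s}$ and $\tilde{\E}_{{\cal H}_s}$. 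The expressions on $\scri^+_{{^*t}_0}$ follow by setting $R=0$: the $\d R\wedge\d^2\omega$ component drops, $R^2\rho^2\to 1$ and $x\to 0$, so the potential coefficient tends to $2$ and the $\hat\psi_R^2$ term disappears, exactly as in Lemma \ref{short-energy}.

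Finally, the non-negativity of $\E_{\mathcal{S}_{^*t}}$ and $\tilde{\E}_{\mathcal{S}_{^*t}}$ is obtained as in the linear case: $\mathcal{S}_{^*t_0}$ is a null hypersurface oriented by a future-pointing null normal, the vector field $T^a$ is timelike and future-oriented on $\Omega^+_{^*t_0}$ by Lemma \ref{Killing-form}, and both $\T_{ab}$ and $\tilde{\T}_{ab}$ satisfy the dominant energy condition, being the stress-energy tensors of a scalar field with the non-negative potentials $\frac12\hat\psi^2$ and $\frac12\hat\psi^2+\frac14\hat\psi^4$ respectively. Hence $T^a\T_{ab}$ and $T^a\tilde{\T}_{ab}$ are causal and future-pointing, and their contraction with the future null normal of $\mathcal{S}_{^*t_0}$ is non-negative, which completes the argument.
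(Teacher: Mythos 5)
Your proposal is correct and is exactly the natural way to carry out the paper's one-line proof (``similar to the linear case''): split $\T_{ab}$ and $\tilde{\T}_{ab}$ into the wave stress-energy tensor, handled by Lemma \ref{short-energy}, plus the potential terms $\frac12\hat\psi^2\hat g_{ab}$ and $\frac14\hat\psi^4\hat g_{ab}$, whose contribution $2\hat\psi^2\, T\hook\dvol$ and $\hat\psi^4\, T\hook\dvol$ pulls back on ${\cal H}_s$ to a coefficient $\simeq 2-x$ with $x=-{^*t}R\in[0,1+\varepsilon]$, hence uniformly bounded above and below. The computation of $T\hook\dvol$, the limit on $\scri^+_{{^*t}_0}$, and the dominant-energy-condition argument for the fluxes across ${\cal S}_{^*t_0}$ all check out.
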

\begin{proof}
The proof is similar to the linear case.
\end{proof}
\begin{remark}
Note that $( \tilde{\E}_{{\cal H}_s} )^{1/2}$, $0<s\leq 1$, does not define a norm due to the presence of the nonlinear term $\hat\psi^4$.
\end{remark}

\subsection{Energy estimates and peeling}

\subsubsection{Basic estimate}

Integrating the conservation law \eqref{conser-non} for $\hat\psi$ on $\Omega_{^*t_0}^{s_1,s_2}$, we obtain~:
\begin{gather}
\left| \tilde{\E}_{{\cal H}_{s_1}}(\hat\psi) + \tilde{\E}_{{\cal S}_{u_0}^{s_1,s_2}}(\hat\psi) - \tilde{\E}_{{\cal H}_{s_2}}(\hat\psi) \right| \nonumber\\
\lesssim \int_{s_1}^{s_2} \int_{{\cal H}_s} \left(  \left| \left( 1 - 2\frac{Mr-a^2}{r^2} \right)\hat\psi  \left( {^*t}^2\partial_{^*t}\hat\psi - 2(1+{^*t}R)\partial_R \hat\psi \right) \right| \right.\nonumber\\
\hspace{1in}+ \left| \left(\nabla_a T_b\right) \tilde{\T}^{ab} (\hat\psi)  \right| \left) \frac{1}{\vert {^*t} \vert}\d {^*t} \d^2 \omega \d s \right.\nonumber\\
\lesssim \int_{s_1}^{s_2} \int_{{\cal H}_s} \left( \left|{^*t}\hat\psi \partial_{^*t}\hat\psi\right| + \left|\hat\psi \partial_R\hat\psi\right|\frac{1}{|^*t|} + \left| \left(\nabla_a T_b\right) \tilde{\T}^{ab} (\hat\psi)  \right| \frac{1}{|{^*t}|} \right)\d {^*t} \d^2\omega \d s \nonumber\\
\lesssim \int_{s_1}^{s_2} \int_{{\cal H}_s} \left( \left|{^*t}\hat\psi \partial_{^*t}\hat\psi\right| + \left|\hat\psi \partial_R\hat\psi\right|\frac{1}{\sqrt s}\sqrt{\frac{R}{|^*t|}} + \left| \left(\nabla_a T_b\right) \tilde{\T}^{ab} (\hat\psi)  \right| \frac{1}{|{^*t}|} \right)\d {^*t} \d^2\omega \d s \nonumber\\
\lesssim \int_{s_1}^{s_2} \int_{{\cal H}_s} \left\{ \left ( \hat\psi^2 + {^*t}^2\hat\psi_{^*t}^2 \right) + \frac{1}{\sqrt s}\left( \hat\psi^2 + \frac{R}{\vert {^*t} \vert }\hat\psi_R^2  \right) + \left| \left(\nabla_a T_b\right) \tilde{\T}^{ab} (\hat\psi)  \right| \frac{1}{|{^*t}|} \right\}\d {^*t} \d^2\omega \d s \, . \label{inter0}
\end{gather}
Here we have used the equivalence
$$\frac{1}{|^*t|} \simeq \frac{1}{\sqrt s}\sqrt{\frac{R}{|^*t|}} \, ,$$ 
which will also allow us to control the last term in \eqref{inter0} as follows~:
\begin{eqnarray}
 \left| \left(\nabla_a T_b  \right) \tilde{\T}^{ab} (\hat\psi) \right| \frac{1}{\vert {^*t} \vert} & \leq & \left| A_1 \hat\psi_{^*t}^2 + A_2 \hat\psi_{^*t} \hat\psi_R + A_3 \hat\psi_{^*t} \hat\psi_{^*\varphi} +A_4 R \hat\psi_R^2 + A_5 \hat\psi_R \hat\psi_{^*\varphi} \right| \frac{1}{\vert {^*t} \vert} \nonumber \\
&&+ \left| A_6 \sin^2\theta \hat\psi_\theta^2 + A_7 \hat\psi_{^*\varphi}^2 + A_8 \vert \nabla_{S^2} \hat\psi \vert^2 + A_9 \hat\psi^2 + A_{10} \hat{\psi}^4 \right| \frac{1}{\vert {^*t} \vert} \nonumber \\
& \lesssim &\left( \vert A_1\vert + \vert A_3\vert \right) \hat\psi_{^*t}^2 + \frac{\vert A_2\vert}{\sqrt s} \left( \hat{\psi}_{^*t}^2 +  \frac{R}{|{^*t}|} \hat\psi_R^2 \right) + \vert A_4\vert \frac{R}{|{^*t}|} \hat\psi_R^2 \nonumber \\
&&+(\vert A_3\vert +\vert A_5\vert + \vert A_7\vert )\hat\psi_{^*\varphi}^2 + \vert A_6\vert \sin^2\theta\hat\psi_\theta^2 + \vert A_8\vert |\nabla_{S^2}\hat\psi|^2 \nonumber \\
&&+ \vert A_9\vert \hat\psi^2 + \vert A_{10} \vert \hat{\psi}^4 \nonumber  \\
& \lesssim & \left(\vert A_1\vert+ \frac{1}{\sqrt s}\vert A_2\vert +\vert A_3\vert  \right) {^*t}^2 \hat\psi_{^*t}^2 + \frac{\vert A_2\vert }{\sqrt s} \frac{R}{|{^*t}|} \hat\psi_R^2 + \vert A_4\vert \frac{R}{|{^*t}|} \hat\psi_R^2 \nonumber \\
&&+ \left( (\vert A_3\vert + \vert A_5\vert + \vert A_6\vert + \vert A_7\vert )\sin^2\theta + \vert A_8 \vert \right) |\nabla_{S^2}\hat\psi|^2 \nonumber \\
&& + \vert A_9 \vert \hat\psi^2 + \vert A_{10} \vert \hat{\psi}^4 \nonumber \\
& \lesssim& \frac{1}{\sqrt s} \left( {^*t}^2 \hat\psi_{^*t}^2 + \frac{R}{|{^*t}|} \hat\psi_R^2 +  |\nabla_{S^2}\hat\psi|^2 + \hat\psi^2 + \hat{\psi}^4 \right) \, . \label{KFNL}
\end{eqnarray}
This together with \eqref{inter0} implies
\begin{equation} \label{interNLLevel0}
\left| \tilde{\E}_{{\cal H}_{s_1}}(\hat\psi) + \tilde{\E}_{{\cal S}_{u_0}^{s_1,s_2}}(\hat\psi) - \tilde{\E}_{{\cal H}_{s_2}}(\hat\psi) \right| \lesssim \int_{s_1}^{s_2} \frac{1}{\sqrt s}  \tilde{\E}_{{\cal H}_s}(\hat\psi) \d s \, .
\end{equation}
Using Gronwall's inequality, we then get the following result~:
\begin{proposition}\label{orderKerr0}
For ${^*t}_0 < 0$, $\vert {^*t}_0 \vert$ large enough, and for any smooth compactly supported initial data at $t = 0$, the associated rescaled solution $\hat\psi$ satisfies for all $0 \leq s_1 < s_2 \leq 1$,
\begin{eqnarray*}
\tilde{\E}_{{\cal H}_{s_1}}(\hat\psi) &\lesssim& \tilde{\E}_{{\cal H}_{s_2}}(\hat\psi) \, ,\\
\tilde{\E}_{{\cal H}_{s_2}}(\hat\psi) &\lesssim& \tilde{\E}_{\mathcal{H}_{s_1}}(\hat\psi)  + \tilde{\E}_{{\cal S}_{{^*t}_0}^{s_1,s_2}}(\hat\psi) \, .
\end{eqnarray*}
In particular for $s_1 = 0, \, s_2 =1$ we get
\begin{eqnarray*}
\tilde{\E}_{\scri_{{^*t}_0}^+}(\hat\psi) &\lesssim& \tilde{\E}_{{\cal H}_1}(\hat\psi) \, ,\\
\tilde{\E}_{{\cal H}_1}(\hat\psi) &\lesssim& \tilde{\E}_{\scri_{{^*t}_0}^+}(\hat\psi)  + \tilde{\E}_{{\cal S}_{{^*t}_0}}(\hat\psi) \, .
\end{eqnarray*}
\end{proposition}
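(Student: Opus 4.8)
The plan is to feed the already-established integral inequality \eqref{interNLLevel0} into two Gronwall arguments, exploiting that $s^{-1/2}$ is integrable on $[0,1]$ so that the resulting exponential factors are uniformly bounded. Write $\phi(s):=\tilde{\E}_{{\cal H}_s}(\hat\psi)$; for smooth compactly supported data the solution is smooth up to $\scri_{{^*t}_0}^+$, so $\phi$ is continuous and nonnegative on $[0,1]$ (all the density terms in the preceding energy-flux lemma are nonnegative) and every flux below is finite. The inequality \eqref{interNLLevel0} is a two-sided bound, so the first step is to split it into its two one-sided consequences and to use that the lateral flux $\tilde{\E}_{{\cal S}_{{^*t}_0}^{s_1,s_2}}(\hat\psi)$ is nonnegative by the dominant energy condition (as recorded in that lemma). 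Discarding this nonnegative flux on the favourable side gives, for every $0\le s_1<s_2\le 1$,
\[
\phi(s_1)\le \phi(s_2)+C\int_{s_1}^{s_2}\frac{\phi(s)}{\sqrt s}\,\d s \quad\text{and}\quad \phi(s_2)\le \phi(s_1)+\tilde{\E}_{{\cal S}_{{^*t}_0}^{s_1,s_2}}(\hat\psi)+C\int_{s_1}^{s_2}\frac{\phi(s)}{\sqrt s}\,\d s.
\]

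For the first estimate I would fix $s_2$ and run a backward Gronwall in the lower endpoint: for $\sigma\in[0,s_2]$ the left inequality reads $\phi(\sigma)\le \phi(s_2)+C\int_\sigma^{s_2}s^{-1/2}\phi(s)\,\d s$, and the integral Gronwall lemma yields $\phi(\sigma)\le \phi(s_2)\exp\!\bigl(C\int_\sigma^{s_2}s^{-1/2}\,\d s\bigr)$. Since $\int_0^1 s^{-1/2}\,\d s=2$, the exponential is at most $e^{2C}$ independently of $\sigma$ and $s_2$, so $\phi(s_1)\lesssim\phi(s_2)$, the first claimed inequality.

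For the second estimate I would fix $s_1,s_2$ and run a forward Gronwall in the upper endpoint. The only additional point is that the lateral flux depends on both endpoints, so I use its monotonicity: since $\Omega_{{^*t}_0}^{s_1,\tau}$ grows with $\tau$, one has $\tilde{\E}_{{\cal S}_{{^*t}_0}^{s_1,\tau}}(\hat\psi)\le \tilde{\E}_{{\cal S}_{{^*t}_0}^{s_1,s_2}}(\hat\psi)$ for $\tau\le s_2$. Setting the constant $A:=\phi(s_1)+\tilde{\E}_{{\cal S}_{{^*t}_0}^{s_1,s_2}}(\hat\psi)$, the right inequality gives $\phi(\tau)\le A+C\int_{s_1}^{\tau}s^{-1/2}\phi(s)\,\d s$ for $\tau\in[s_1,s_2]$, whence $\phi(\tau)\le A\,e^{2C}$ and in particular $\phi(s_2)\lesssim \phi(s_1)+\tilde{\E}_{{\cal S}_{{^*t}_0}^{s_1,s_2}}(\hat\psi)$. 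The displayed particular case is then read off by taking $s_1=0$, $s_2=1$ and recalling the identifications ${\cal H}_0=\scri_{{^*t}_0}^+$, ${\cal H}_1=\Sigma_0\cap\Omega_{{^*t}_0}^+$ and ${\cal S}_{{^*t}_0}^{0,1}={\cal S}_{{^*t}_0}$.

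Since \eqref{interNLLevel0} is already in hand, no genuinely hard analytic estimate remains and the work is essentially bookkeeping. I expect the only delicate points to be, first, tracking the sign of the lateral flux correctly — it is dropped for the first inequality but retained (and controlled by monotonicity) for the second — and, second, the degenerate endpoint $s=0$: one must know a priori that $\phi$ stays finite and continuous up to $\scri_{{^*t}_0}^+$, which here follows from smoothness of the solution up to null infinity, and it is precisely the integrability of $s^{-1/2}$ near $0$ that makes the Gronwall exponential, and hence all the implicit constants in $\lesssim$, uniform down to $s=0$.
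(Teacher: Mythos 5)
Your proposal is correct and follows essentially the same route as the paper, which simply applies Gronwall's inequality to \eqref{interNLLevel0}, relying on the integrability of $s^{-1/2}$ on $[0,1]$ and the nonnegativity of the lateral flux across ${\cal S}_{{^*t}_0}$ from the dominant energy condition. Your explicit splitting of the two-sided bound into a backward and a forward Gronwall argument, and the monotonicity remark for the lateral flux, are just the details the paper leaves implicit.
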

We now wish to estimate the linear energies which do define a norm and can be used to construct a function space by completion of smooth compactly supported functions. We can either prove the estimates from the approximate conservation law \eqref{ACLLin} for the linear energy current or deduce them from the estimates on the non linear energies. Either way, our main tool will be a Sobolev embedding $H^1 \hookrightarrow L^6$ on ${\cal H}_s$. We start by establishing the following result.
\begin{proposition} \label{EstimNLLinEn}
There exists a constant $C>0$ such that, for all $s\in [0,1]$, for all $\hat\psi \in {\cal C}^\infty_0 (\Omega^+_{^*t_0})$,
\[ \Vert \hat\psi \Vert_{L^6 ({\cal H}_s)} \leq \left( \E_{{\cal H}_s} (\hat\psi) \right)^{1/2} \, ,\]
where the $L^6$ norm on ${\cal H}_s$ is taken for the measure $\d ^*t \d^2 \omega$, i.e.
\[ \Vert \hat\psi \Vert_{L^6 ({\cal H}_s)}^6 = \int_{{\cal H}_s} \hat\psi^6  \d {^*t} \d^2 \omega \, .\]
\end{proposition}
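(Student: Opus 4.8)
The plan is to realise each slice ${\cal H}_s$ as a region of the fixed cylinder $\R_{{^*t}}\times S^2_{\theta,{^*\varphi}}$ equipped with the $s$-independent product metric $\gamma=\d{{^*t}}^2+\d\theta^2+\sin^2\theta\,\d{{^*\varphi}}^2$, whose Riemannian measure is precisely the measure $\d{{^*t}}\,\d^2\omega$ carried by the energy fluxes and whose gradient norm is $|\nabla_\gamma\hat\psi|^2=(\partial_{{^*t}}\hat\psi)^2+|\nabla_{S^2}\hat\psi|^2$. I would then proceed in two steps: first bound the $\gamma$-Sobolev norm $\|\hat\psi\|_{H^1({\cal H}_s,\gamma)}$ by $\E_{{\cal H}_s}(\hat\psi)$ uniformly in $s$, and then apply the critical Sobolev embedding $H^1\hookrightarrow L^6$ on this fixed three-dimensional cylinder, whose constant is manifestly independent of $s$.

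For the first step I must control the tangential gradient of $\hat\psi$ along ${\cal H}_s$; the spherical part $|\nabla_{S^2}\hat\psi|^2$ already appears in $\E_{{\cal H}_s}$, so only the derivative along the ${^*t}$-lines of ${\cal H}_s$ needs work. Recalling from the proof of Lemma \ref{short-energy} that on ${\cal H}_s$ one has $\d R=\tfrac{\Delta R^2}{s(r^2+a^2)}\,\d{{^*t}}$, this tangential derivative equals $\hat\psi_{{^*t}}+\beta\,\hat\psi_R$ with $\beta=\tfrac{\Delta R^2}{s(r^2+a^2)}$. Writing $s=|{^*t}|/r_*$ recasts this as $\beta=\tfrac{\Delta}{r^2+a^2}\,\tfrac{R\,(r_*R)}{|{^*t}|}$, so that, since $\Delta/(r^2+a^2)\to1$ and $r_*R\in[1,1+\varepsilon]$ by \eqref{epsilonestimates}, the explicit $1/s$ cancels and $\beta\simeq R/|{^*t}|$ uniformly in $s$. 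Using $|{^*t}|\geq|{^*t}_0|$ gives $\hat\psi_{{^*t}}^2\lesssim {^*t}^2\hat\psi_{{^*t}}^2$, and using $|{^*t}|R\leq1+\varepsilon$ gives $R/|{^*t}|=(|{^*t}|R)/{^*t}^2\leq(1+\varepsilon)/{^*t}_0^2$, whence $\beta^2\simeq(R/|{^*t}|)^2\lesssim R/|{^*t}|$. Consequently
\[
\big(\hat\psi_{{^*t}}+\beta\,\hat\psi_R\big)^2+|\nabla_{S^2}\hat\psi|^2\;\lesssim\;{^*t}^2\hat\psi_{{^*t}}^2+\frac{R}{|{^*t}|}\hat\psi_R^2+|\nabla_{S^2}\hat\psi|^2
\]
with constants independent of $s$. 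Integrating against $\d{{^*t}}\,\d^2\omega$, adding $\int_{{\cal H}_s}\hat\psi^2$, and invoking the equivalent form of $\E_{{\cal H}_s}$ from the preceding lemma — which for the Klein--Gordon current $\J$ does contain the $\hat\psi^2$ term — I obtain $\|\hat\psi\|_{H^1({\cal H}_s,\gamma)}^2\lesssim\E_{{\cal H}_s}(\hat\psi)$ uniformly for $s\in(0,1]$; at $s=0$ one has $R\equiv0$ and $\beta\equiv0$, so the same bound holds with $\scri^+_{{^*t}_0}$ in place of ${\cal H}_s$.

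For the second step, since $\hat\psi\in{\cal C}^\infty_0(\Omega^+_{{^*t}_0})$ its restriction to ${\cal H}_s$ is smooth with compact support and therefore extends by zero to an $H^1$ function on the complete cylinder $(\R\times S^2,\gamma)$. This manifold is three-dimensional with bounded geometry, so the embedding $H^1\hookrightarrow L^6$ holds with a constant $C_{\mathrm{Sob}}$ depending only on $(\R\times S^2,\gamma)$ and not on $s$ (alternatively this is proved directly by a partition of unity with uniformly bounded overlaps, reducing to the Euclidean inequality). As the $L^6$ and $H^1$ norms both use the measure $\d{{^*t}}\,\d^2\omega$, I conclude $\|\hat\psi\|_{L^6({\cal H}_s)}^2\leq C_{\mathrm{Sob}}\,\|\hat\psi\|_{H^1({\cal H}_s,\gamma)}^2\lesssim\E_{{\cal H}_s}(\hat\psi)$, uniformly in $s\in[0,1]$, which is the asserted inequality. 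The main obstacle will be the uniformity down to $s=0$, where the slice degenerates to the null hypersurface $\scri^+$ and the radial weight $R/|{^*t}|$ vanishes: everything rests on the cancellation $\beta\simeq R/|{^*t}|$ and on the bound $(R/|{^*t}|)^2\lesssim R/|{^*t}|$, which is exactly what lets the single energy term $\tfrac{R}{|{^*t}|}\hat\psi_R^2$ absorb $\beta^2\hat\psi_R^2$, combined with the choice of the $s$-independent reference cylinder that keeps the Sobolev constant uniform across the foliation.
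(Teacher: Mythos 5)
Your overall strategy is exactly the paper's: identify $({\cal H}_s,\d{^*t}\,\d^2\omega)$ with a fixed three--dimensional cylinder carrying the product metric, show that the tangential $H^1$ norm there is controlled by $\E_{{\cal H}_s}(\hat\psi)$ uniformly in $s$ (via $\partial_{^*t}(\hat\psi|_{{\cal H}_s})=\hat\psi_{^*t}+\beta\hat\psi_R$ with $\beta\simeq R/|{^*t}|$, $\hat\psi_{^*t}^2\lesssim{^*t}^2\hat\psi_{^*t}^2$ and $(R/|{^*t}|)^2\lesssim R/|{^*t}|$), and then invoke the critical embedding $H^1\hookrightarrow L^6$ with an $s$-independent constant. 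This first step is carried out correctly and in fact more explicitly than in the paper.

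There is, however, one genuine gap in your second step. The slice ${\cal H}_s=\{{^*t}=-sr_*,\ \hat t\leq{^*t}_0\}$ is a \emph{half}-infinite cylinder with boundary the $2$-sphere where it meets ${\cal S}_{^*t_0}$, and the functions to which the proposition is applied (restrictions of solutions arising from compactly supported data on ${\cal H}_1$) do \emph{not} vanish on that boundary sphere, nor on $\scri^+_{^*t_0}$. Your claim that the restriction ``extends by zero to an $H^1$ function on the complete cylinder'' therefore fails: extending by zero a function with a nonzero trace on the boundary produces a distributional derivative with a surface delta, which is not in $L^2$. The paper's proof addresses precisely this point: it works on the half-cylinder $M=]0,+\infty[\times S^2$, uses a reflection (Babbitch) extension $\tilde\phi(x,\omega)=\phi(|x|,\omega)$ --- which is bounded from $H^1(M)$ to $H^1(\R\times S^2)$ even for functions whose support contains $\{x=0\}$ --- and then applies Aubin's $H^1\hookrightarrow L^6$ on the complete cylinder. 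Your parenthetical alternative (partition of unity reducing to the Euclidean inequality) concerns the boundaryless cylinder and does not by itself repair this; you need either the reflection extension or a Sobolev inequality valid on a manifold with boundary. With that replacement your argument coincides with the paper's.
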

\begin{proof}
Let us consider the infinite half-cylinder $M=]0,+\infty[_x \times S^2_{\omega}$ endowed with the metric
\[ h = \d x^2 + \d \omega^2 \, ,\]
$\d \omega^2$ being the euclidean metric on $S^2$. We start by establishing the following result
\begin{lemma} \label{SobHalfCyl}
The Sobolev embedding $H^1 (M) \hookrightarrow L^6 (M)$ is valid.
\end{lemma}
\begin{proof}
We use a Babbitch extension and a classic result for complete manifolds. We know from T. Aubin \cite{Au} that for any complete Riemannian Manifold for which the sectional curvatures are uniformly bounded and the injectivity radius is strictly positive, the Sobolev embedding of $H^1$ into $L^6$ is valid. In particular this is true for the whole cylinder ${\cal C} = \R \times S^2$ equipped with the metric $h$, i.e. there exists $C>0$ such that, for all $f \in H^1({\cal C})$,
\[ \Vert f \Vert_{L^6 ({\cal C})} \leq C \Vert f \Vert_{H^1 ({\cal C})} \, . \]
Now we consider the Babbitch extension from $M$ to $\cal C$ which is an extension by reflection~: to a function $\phi$ on $M$ we associate the function $T\phi:=\tilde{\phi}$ defined on $\cal C$ by~:
\[ \tilde{\phi} (x,\omega) = \left\{ \begin{array}{lll} \phi (x,\omega) & if & x\geq 0 \, ,\\\phi ( -x,\omega) & if & x\leq 0 \, . \end{array} \right. \]
It is straightforward to check that $T$ is a bounded linear map from $H^1 (M)$ to $H^1({\cal C})$ and in addition forall $\phi \in {\cal C}^\infty_0 (M)$ (the support of $\phi$ is allowed to contain the boundary $\{ x=0 \}$),
\[ \Vert \tilde{\phi} \Vert_{H^1({\cal C})} = \sqrt{2} \Vert \phi \Vert_{H^1(M)} \, ,~ \Vert \tilde{\phi} \Vert_{L^6 ({\cal C})} = 2^{1/6}\Vert \phi \Vert_{L^6 (M)} \, .\]
Putting the two together, we have for all $\hat\psi \in {\cal C}^\infty_0 (M)$~:
\[ \Vert \phi \Vert_{L^6 (M)} = \frac{1}{2^{1/6}} \Vert \tilde{\phi} \Vert_{L^6 ({\cal C})} \leq \frac{C}{2^{1/6}} \Vert \tilde{\phi} \Vert_{H^1 ({\cal C})} = \frac{C\sqrt{2}}{2^{1/6}} \Vert \phi \Vert_{H^1 (M)} \, .\]
The result follows by density of ${\cal C}^\infty_0 (M)$ in $H^1(M)$.
\end{proof}
Lemma \ref{SobHalfCyl} gives us for $\hat\psi \in {\cal C}^\infty_0 (\Omega^+_{^*t_0})$
\[ \left( \int_{\mathcal{H}_s}\hat{\psi}^6 \d {^*t} \d^2 \omega\right)^{1/3} \lesssim \int_{{\cal H}_s} \left( (\partial_{^*t}(\hat{\psi}|_{\mathcal {H}_s}))^2 + |\nabla_{S^2}\hat\psi|^2 + \hat{\psi}^2 \right) \d {^*t} \d^2 \omega \, \]
and, using the fact that on ${\cal H}_s$, ${^*t} = -sr_*$,
\[ \partial_{^*t}(\hat\psi|_{\mathcal{H}_s}) = \hat\psi_{^*t} + \frac{r_*R^2\Delta}{|{^*t}|(r^2+a^2)}\hat\psi_R \simeq \hat\psi_{^*t} + \frac{R}{|{^*t}|}\hat\psi_R \, .\]
Hence,
\begin{eqnarray*}
\left( \int_{{\cal H}_s}\hat\psi^6 \d {^*t} \d \omega \right)^{1/3} &\lesssim&  \int_{{\cal H}_s} \left( \left( \hat\psi_{^*t} \right)^2 + \frac{R^2}{|{^*t}|^2} \left( \hat\psi_R\right)^2 + \vert \nabla_{S^2}\hat\psi \vert^2 + \hat\psi^2\right) \d {^*t} \d^2\omega \\
&\lesssim &\int_{{\cal H}_s} \left( \left( \hat\psi_{^*t} \right)^2 + \frac{R}{|{^*t}|} \left( \hat\psi_R\right)^2 + \vert \nabla_{S^2}\hat\psi \vert^2 + \hat\psi^2\right) \d {^*t} \d^2\omega \\
&\lesssim &{\E}_{{\cal H}_s} (\hat\psi) \, .
\end{eqnarray*} 
This concludes the proof of the proposition.
\end{proof}
In particular, this allows to control the non linear energy $\tilde{\E}_{{\cal H}_s}(\hat\psi)$ in terms of the linear energy $\E_{{\cal H}_s}(\hat\psi)$. We have
\[ \int_{{\cal H}_s} \hat\psi^4 \d {^*t}\d^2 \omega \leq \frac{1}{2} \int_{{\cal H}_s} (\hat\psi^6 + \hat\psi^2)  \d {^*t}\d^2 \omega \lesssim \left( {\E}_{{\cal H}_s}(\hat\psi)  \right)^3 + {\E}_{{\cal H}_s}(\hat\psi) \, ,\]
whence
\begin{equation}\label{basic-ine}
{\E}_{{\cal H}_s}(\hat\psi) \leq \tilde{\E}_{{\cal H}_s}(\hat\psi) \lesssim \left( {\E}_{{\cal H}_s}(\hat\psi) \right)^3 + {\E}_{{\cal H}_s}(\hat\psi) \, .
\end{equation}
We now turn to controlling the linear energy on ${\cal H}_s$. Integrating \eqref{ACLLin} on $\Omega^{s_1,s_2}_{^*t_0}$, we obtain
\begin{gather*}
\left| {\E}_{{\cal H}_{s_1}}(\hat\psi) + {\E}_{{\cal S}_{u_0}^{s_1,s_2}}(\hat\psi) - {\E}_{{\cal H}_{s_2}}(\hat\psi) \right| \nonumber\\
\lesssim \int_{s_1}^{s_2} \int_{{\cal H}_s} \left(  \left| \left( \left( 1 - 2\frac{Mr-a^2}{r^2} \right)\hat\psi -\hat{\psi}^3 \right) \left( {^*t}^2\partial_{^*t}\hat\psi - 2(1+{^*t}R)\partial_R \hat\psi \right) \right| \right.\nonumber\\
\hspace{1in}+ \left| \left(\nabla_a T_b\right) {\T}^{ab} (\hat\psi)  \right| \left) \frac{1}{\vert {^*t} \vert}\d {^*t} \d^2 \omega \d s \right.\nonumber\\
\lesssim \int_{s_1}^{s_2} \frac{1}{\sqrt s} \int_{{\cal H}_s} \left\{ \hat\psi^2 + \hat{\psi}^6 + {^*t}^2\hat\psi_{^*t}^2  + \frac{R}{\vert {^*t} \vert }\hat\psi_R^2  + \left| \left(\nabla_a T_b\right) {\T}^{ab} (\hat\psi)  \right| \frac{1}{|{^*t}|} \right\}\d {^*t} \d^2\omega \d s \, .
\end{gather*}
Using Proposition \ref{EstimNLLinEn} and the fact that
\[ \left\vert \left(\nabla_a T_b\right) {\T}^{ab} (\hat\psi) \right\vert \lesssim \frac{1}{\sqrt s} \left( {^*t}^2 \hat\psi_{^*t}^2 + \frac{R}{|{^*t}|} \hat\psi_R^2 +  |\nabla_{S^2}\hat\psi|^2 + \hat\psi^2 \right) \, ,\]
whose proof is similar to that of \eqref{KFNL}, we get
\[ \left| {\E}_{{\cal H}_{s_1}}(\hat\psi) + {\E}_{{\cal S}_{u_0}^{s_1,s_2}}(\hat\psi) - {\E}_{{\cal H}_{s_2}}(\hat\psi) \right| \lesssim \int_{s_1}^{s_2} \left( {\E}_{{\cal H}_{s}}(\hat\psi) + {\E}_{{\cal H}_{s}}(\hat\psi)^3 \right) \frac{\d s}{\sqrt{s}} \, . \]
Then estimating the linear energy by the nonlinear energy and applying Proposition \ref{orderKerr0} entails
\begin{eqnarray}
{\E}_{{\cal H}_{s_1}}(\hat\psi) &\lesssim & {\E}_{{\cal H}_{s_2}}(\hat\psi) + \left( 1 + \tilde{\E}_{{\cal H}_{s_2}} (\hat\psi)^2 \right) \int_{s_1}^{s_2} {\E}_{{\cal H}_{s}}(\hat\psi) \frac{\d s}{\sqrt{s}} \, , \label{IntermEstZero1}\\
{\E}_{{\cal H}_{s_2}}(\hat\psi) &\lesssim & \left( {\E}_{{\cal H}_{s_1}}(\hat\psi) + {\E}_{{\cal S}_{u_0}^{s_1,s_2}}(\hat\psi) \right) \nonumber \\
&& + \left( 1 + \left( \tilde{\E}_{{\cal H}_{s_1}} (\hat\psi)+ \tilde{\E}_{{\cal S}^{s_1,s_2}_{^* t_0}} (\hat{\psi} ) \right)^2 \right) \int_{s_1}^{s_2} {\E}_{{\cal H}_{s}}(\hat\psi) \frac{\d s}{\sqrt{s}} \, . \label{IntermEstZero2}
\end{eqnarray}
Finally, Gronwall's lemma and \eqref{basic-ine} give the following result~:
\begin{proposition} \label{BasicNLlin}
There exists a function $\Phi \, :~ \R^+ \rightarrow \R^+$ such that, for ${^*t}_0 < 0, \, \vert {^*t}_0 \vert$ large enough and for any smooth compactly supported initial data at $t = 0$, the associated rescaled solution $\hat\psi$ satisfies for all $0 \leq s_1 < s_2 \leq 1$,
\begin{eqnarray*}
{\E}_{{\cal H}_{s_1}}(\hat\psi) &\leq & \Phi \left( \E_{{\cal H}_{s_2}}(\hat\psi) \right) {\E}_{{\cal H}_{s_2}}(\hat\psi )  \, , \\
{\E}_{{\cal H}_{s_2}}(\hat\psi) &\leq& \Phi \left( {\E}_{{\cal H}_{s_1}}(\hat\psi) + {\E}_{{\cal S}_{u_0}^{s_1,s_2}}(\hat\psi) \right)\left( {\E}_{{\cal H}_{s_1}}(\hat\psi) + {\E}_{{\cal S}_{u_0}^{s_1,s_2}}(\hat\psi) \right) \, .
\end{eqnarray*}
In particular for $s_1 = 0, \, s_2 = 1$,
\begin{eqnarray*}
{\E}_{\scri_{{^*t}_0}^+}(\hat\psi) &\leq & \Phi \left( \E_{{\cal H}_1}(\hat\psi) \right)  {\E}_{{\cal H}_{1}}(\hat\psi) \, , \\
{\E}_{{\cal H}_{1}}(\hat\psi) &\leq& \Phi \left( \tilde{\E}_{{\cal S}_{{^*t}_0}}(\hat\psi)+ {\E}_{\scri_{{^*t}_0}^+}(\hat\psi)\right) \left( \tilde{\E}_{{\cal S}_{{^*t}_0}}(\hat\psi)+ {\E}_{\scri_{{^*t}_0}^+}(\hat\psi)\right) \, .
\end{eqnarray*}
\end{proposition}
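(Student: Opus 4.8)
The plan is to deduce Proposition \ref{BasicNLlin} from the two intermediate inequalities \eqref{IntermEstZero1} and \eqref{IntermEstZero2} by a single Gronwall argument, the essential point being that the coefficient multiplying the integral term, although it involves a nonlinear energy, is a \emph{constant} in the integration variable once the endpoints $s_1,s_2$ are fixed. Write $g(s) := \E_{{\cal H}_s}(\hat\psi)$, which is finite for every $s$ because the data are smooth and compactly supported. By the same integration of the linear conservation law \eqref{ACLLin} over $\Omega^{s_1,s}_{{^*t}_0}$ that produced \eqref{IntermEstZero1}--\eqref{IntermEstZero2}, those two bounds in fact hold for every intermediate slice, which is exactly what Gronwall requires.

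For the first inequality I would fix $s_1<s_2$ and read \eqref{IntermEstZero1} for all $s'\in[s_1,s_2]$ as $g(s')\lesssim g(s_2)+\big(1+\tilde{\E}_{{\cal H}_{s_2}}(\hat\psi)^2\big)\int_{s'}^{s_2} g(s)\,\frac{\d s}{\sqrt s}$. Here the prefactor is evaluated at the \emph{fixed} upper endpoint $s_2$ --- this is legitimate because the derivation uses Proposition \ref{orderKerr0} to dominate $\tilde{\E}_{{\cal H}_s}$ by $\tilde{\E}_{{\cal H}_{s_2}}$ for $s\le s_2$ --- so it is a genuine constant $K$ in the variable $s'$. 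Since $1/\sqrt s$ is integrable on $[0,1]$ with $\int_0^1 \frac{\d s}{\sqrt s}=2$, the (backward) Gronwall lemma gives $g(s_1)\le C\,g(s_2)\,e^{2CK}$. The basic inequality \eqref{basic-ine} then bounds $\tilde{\E}_{{\cal H}_{s_2}}$, hence $K$, by an increasing function of $g(s_2)=\E_{{\cal H}_{s_2}}(\hat\psi)$, and absorbing everything into $\Phi$ yields the first estimate.

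The second inequality is symmetric, running Gronwall forward. For $s'\in[s_1,s_2]$, \eqref{IntermEstZero2} reads $g(s')\lesssim\big(g(s_1)+\E_{{\cal S}^{s_1,s'}_{{^*t}_0}}(\hat\psi)\big)+\big(1+(\tilde{\E}_{{\cal H}_{s_1}}+\tilde{\E}_{{\cal S}^{s_1,s'}_{{^*t}_0}})^2\big)\int_{s_1}^{s'} g(s)\,\frac{\d s}{\sqrt s}$. The flux across ${\cal S}^{s_1,s'}_{{^*t}_0}$ increases with $s'$ (a larger portion of the null hypersurface) and is non-negative by the dominant energy condition, so both flux terms are bounded by their values at $s_2$; the prefactor is again a constant in $s'$. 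Forward Gronwall gives $g(s_2)\le C\big(g(s_1)+\E_{{\cal S}^{s_1,s_2}_{{^*t}_0}}\big)e^{2CK'}$, and \eqref{basic-ine} converts $\tilde{\E}_{{\cal H}_{s_1}}$ into a function of $\E_{{\cal H}_{s_1}}$, the flux on ${\cal S}$ being retained (as $\tilde{\E}_{{\cal S}_{{^*t}_0}}$ in the particular case $s_1=0$). Taking $\Phi$ to be the larger of the two increasing functions built above gives a single $\Phi$ valid for both bounds, and specialising to $s_1=0$, $s_2=1$ --- where ${\cal H}_0=\scri^+_{{^*t}_0}$ and ${\cal H}_1$ is the initial slice --- produces the stated particular cases. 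The only real care needed, and the step I expect to be the main obstacle, is checking that the Gronwall coefficient can be frozen at a fixed endpoint (via Proposition \ref{orderKerr0} and the monotonicity of the flux on ${\cal S}$) and then re-expressed through the Sobolev bound \eqref{basic-ine}; everything else is the standard integral Gronwall lemma.
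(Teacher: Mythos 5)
Your proposal is correct and follows essentially the same route as the paper: it too passes from \eqref{IntermEstZero1}--\eqref{IntermEstZero2} to the Proposition by Gronwall's lemma with the nonlinear-energy prefactor frozen at the fixed endpoint (legitimised by Proposition \ref{orderKerr0} and the non-negativity of the flux on $\mathcal{S}$), and then converts $\tilde{\E}$ into a function of $\E$ through \eqref{basic-ine}. You have merely made explicit the steps the paper compresses into a single sentence, including the correct observation that the null-hypersurface flux must be retained as $\tilde{\E}_{{\cal S}_{^*t_0}}$ since the Sobolev embedding of Proposition \ref{EstimNLLinEn} is only available on the slices ${\cal H}_s$.
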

Note that Propositon \ref{orderKerr0} together with \eqref{basic-ine} yield sharper estimates~:
\begin{corollary}\label{inequality-basicnon2}
For ${^*t}_0 < 0, \, \vert {^*t}_0 \vert$ large enough and for any smooth compactly supported initial data at $t = 0$, the associated rescaled solution $\hat\psi$ satisfies for all $0 \leq s_1 < s_2 \leq 1$,
\begin{eqnarray*}
{\E}_{{\cal H}_{s_1}}(\hat\psi) &\lesssim& \left( {\E}_{{\cal H}_{s_2}}(\hat\psi) \right)^3 + {\E}_{{\cal H}_{s_2}}(\hat\psi) \, ,\\
{\E}_{{\cal H}_{s_2}}(\hat\psi) &\lesssim& \left( {\E}_{\mathcal{H}_{s_1}}(\hat\psi) \right)^3 + {\E}_{\mathcal{H}_{s_1}}(\hat\psi)  + \tilde{\E}_{{\cal S}_{{^*t}_0}^{s_1,s_2}}(\hat\psi) \, .
\end{eqnarray*}
In particular for $s_1 = 0, \, s_2 =1$ we get
\begin{eqnarray*}
{\E}_{\scri_{{^*t}_0}^+}(\hat\psi) &\lesssim & \left( {\E}_{{\cal H}_1}(\hat\psi) \right)^3 + {\E}_{{\cal H}_1}(\hat\psi) \, ,\\
{\E}_{{\cal H}_1}(\hat\psi) &\lesssim & \left( {\E}_{\scri_{{^*t}_0}^+}(\hat\psi) \right)^3 + {\E}_{\scri_{{^*t}_0}^+}(\hat\psi)  + \tilde{\E}_{{\cal S}_{{^*t}_0}}(\hat\psi) \, .
\end{eqnarray*}
\end{corollary}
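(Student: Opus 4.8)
The plan is to obtain Corollary \ref{inequality-basicnon2} directly from Proposition \ref{orderKerr0} and the two-sided comparison \eqref{basic-ine} between the linear and nonlinear energy fluxes, without reopening the approximate conservation laws. The observation driving the argument is that Proposition \ref{orderKerr0} already furnishes the propagation estimates for the nonlinear energy $\tilde{\E}$ both ways across the foliation $({\cal H}_s)$, while \eqref{basic-ine} allows us to pass back and forth between $\E$ and $\tilde{\E}$ on each individual slice ${\cal H}_s$ via the inequalities $\E_{{\cal H}_s}(\hat\psi) \leq \tilde{\E}_{{\cal H}_s}(\hat\psi) \lesssim (\E_{{\cal H}_s}(\hat\psi))^3 + \E_{{\cal H}_s}(\hat\psi)$. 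Concatenating these three relations in the correct order yields the sharper bounds expressed purely in terms of the linear energy.

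For the first estimate I would start from the left half of \eqref{basic-ine} on ${\cal H}_{s_1}$, namely $\E_{{\cal H}_{s_1}}(\hat\psi) \leq \tilde{\E}_{{\cal H}_{s_1}}(\hat\psi)$, then invoke the first inequality of Proposition \ref{orderKerr0} to write $\tilde{\E}_{{\cal H}_{s_1}}(\hat\psi) \lesssim \tilde{\E}_{{\cal H}_{s_2}}(\hat\psi)$, and finally bound the right-hand side through the right half of \eqref{basic-ine}, $\tilde{\E}_{{\cal H}_{s_2}}(\hat\psi) \lesssim (\E_{{\cal H}_{s_2}}(\hat\psi))^3 + \E_{{\cal H}_{s_2}}(\hat\psi)$. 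Chaining these three gives exactly the first line of the corollary.

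For the second estimate the same strategy applies, but now with the second inequality of Proposition \ref{orderKerr0}: starting from $\E_{{\cal H}_{s_2}}(\hat\psi) \leq \tilde{\E}_{{\cal H}_{s_2}}(\hat\psi)$ I would use $\tilde{\E}_{{\cal H}_{s_2}}(\hat\psi) \lesssim \tilde{\E}_{{\cal H}_{s_1}}(\hat\psi) + \tilde{\E}_{{\cal S}_{{^*t}_0}^{s_1,s_2}}(\hat\psi)$, and then control only the first summand by the right half of \eqref{basic-ine}, $\tilde{\E}_{{\cal H}_{s_1}}(\hat\psi) \lesssim (\E_{{\cal H}_{s_1}}(\hat\psi))^3 + \E_{{\cal H}_{s_1}}(\hat\psi)$, leaving the flux term $\tilde{\E}_{{\cal S}_{{^*t}_0}^{s_1,s_2}}(\hat\psi)$ in its nonlinear form since it already appears on the right-hand side of the claimed inequality. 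The specialisation to $s_1=0$, $s_2=1$ is then immediate from the identifications ${\cal H}_0 = \scri_{{^*t}_0}^+$ and ${\cal S}_{{^*t}_0}^{0,1} = {\cal S}_{{^*t}_0}$.

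There is no genuine obstacle here: the statement is a bookkeeping consequence of results already in hand. The single point demanding care is the asymmetry in how \eqref{basic-ine} is deployed in each direction — applying the cubic upper bound on the slice carrying the controlling energy while keeping the source flux $\tilde{\E}_{{\cal S}_{{^*t}_0}^{s_1,s_2}}$ nonlinear — so that the cubic powers land on the intended side and the resulting inequalities are sharp, rather than being routed through the abstract function $\Phi$ of Proposition \ref{BasicNLlin}.
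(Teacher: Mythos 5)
Your argument is exactly the paper's: the corollary is stated there as an immediate consequence of Proposition \ref{orderKerr0} combined with the two-sided comparison \eqref{basic-ine}, chained in precisely the order you describe (lower bound of \eqref{basic-ine} on the controlled slice, nonlinear propagation estimate, cubic upper bound on the controlling slice, with the flux term $\tilde{\E}_{{\cal S}_{{^*t}_0}^{s_1,s_2}}(\hat\psi)$ left in nonlinear form). The proposal is correct and matches the paper's route.
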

For derivatives of $\hat\psi$ of arbitrarily high order, we cannot obtain estimates as explicit as in Corollary \ref{inequality-basicnon2} but we shall obtain results analogous to those of Proposition \ref{BasicNLlin}.

\subsubsection{Higher order estimates}

First, we control the energy of $\hat\psi_R$. Integrating \eqref{conser-non1} over the domain $\Omega^{s_1,s_2}_{^*t_0}$, $0 \leq s_1 < s_2 \leq 1$, defined in \eqref{Omegs1s2}, we obtain
\begin{gather}
\left| {\E}_{{\cal H}_{s_1}} (\hat\psi_R) + {\E}_{{\cal S}_{{^*t}_0}^{s_1, s_2}}(\hat\psi_R) - {\E}_{{\cal H}_{s_2}}(\hat\psi_R) \right| \nonumber\\
\lesssim \int_{s_1}^{s_2} \int_{{\cal H}_s} \left| \frac{r^2}{\rho^2}\left( 4a^2R \partial_{^*t}\partial_R\hat\psi + 4 aR \partial_{^*\varphi}\partial_R\hat\psi + 2a^2\partial_{^*t}\hat\psi \right) \nabla_T (\partial_R\hat\psi) \right| \frac{1}{|{^*t}|}  \d{^*t} \d^2\omega \d s \nonumber\\
+ \int_{s_1}^{s_2} \int_{{\cal H}_s} \left\vert \frac{r^2}{\rho^2}\left( 2a \partial_{^*\varphi}\hat\psi + (4a^2R^3-6MR^2+2R)\partial_R^2\hat\psi \right.\right. \hspace{1in} \nonumber\\
\hspace{1in}\left. + (12aR^2-12MR+2)\partial_R\hat\psi \right) \nabla_T (\partial_R\hat\psi) \left\vert  \frac{1}{|{^*t}|} \right. \d{^*t} \d^2\omega \d s \nonumber\\
+ \int_{s_1}^{s_2} \int_{{\cal H}_s} \left| \left( 2(M-2a^2R)\hat\psi + 2\frac{Mr-a^2}{r^2} \partial_R \hat\psi \right)  \nabla_T (\partial_R\hat\psi)\right|  \frac{1}{|{^*t}|} \d{^*t}\d^2\omega \d s \nonumber\\
+  \int_{s_1}^{s_2} \int_{{\cal H}_s} \left| \left(\hat\psi_R - 3\hat\psi^2\hat\psi_R - \frac{ra^2\cos^2\theta}{\rho^2}\hat\psi^3 \right) \nabla_T (\partial_R\hat\psi) \right|\frac{1}{|{^*t}|} \d{^*t}\d^2\omega \d s \nonumber\\
+ \int_{s_1}^{s_2} \int_{{\cal H}_s} \left|  \left(\nabla_a T_b  \right) T^{ab} (\partial_R \hat\psi)\right| \frac{1}{|{^*t}|} \d{^*t}\d^2\omega \d s \, . \label{control-non}
\end{gather}
The right-hand side can be controlled by the linear energy in the same manner as in the linear case, except for the following term which arises from the nonlinear part of the equation~:   
\[ E := \int_{s_1}^{s_2} \int_{{\cal H}_s} \left| \left( - 3\hat\psi^2\hat\psi_R - \frac{ra^2\cos^2\theta}{\rho^2}\hat\psi^3 \right) \nabla_T (\hat\psi_R) \right|  \frac{1}{|{^*t}|} \d{^*t}\d^2\omega \d s \, .\]
We now use the equivalence
\[ \frac{1}{|^*t|} \simeq \frac{1}{\sqrt s}\sqrt{\frac{R}{|^*t|}} \, , \]
as well as the classic inequality $ab\leq \frac12 (a^2+b^2)$ and H\"older estimates~:
\begin{eqnarray*}
E &= &\int_{s_1}^{s_2} \int_{{\cal H}_s} \left| \left( - 3\hat\psi^2\hat\psi_R - \frac{ra^2\cos^2\theta}{\rho^2}\hat\psi^3 \right) \left( {^*t}^2\partial_{^*t}\hat\psi_R - 2(1 + {^*t}R)\partial_R\hat\psi_R \right) \right|  \frac{1}{|{^*t}|} \d{^*t}\d^2\omega \d s\\
& \lesssim& \int_{s_1}^{s_2} \int_{{\cal H}_s}  \frac{3}{2} \left(\hat\psi^4\hat\psi_R^2 + {^*t}^2(\partial_{^*t}\hat\psi_R)^2  \right) + \frac{ra^2\cos^2\theta}{2\rho^2} \left( \hat\psi^6 + {^*t}^2(\partial_{^*t}\hat\psi_R)^2  \right) \d{^*t}\d^2\omega \d s \\
&&+ \int_{s_1}^{s_2} \int_{{\cal H}_s}  3(1 + {^*t}R)\frac{1}{\sqrt s} \left(\hat\psi^4\hat\psi_R^2 + \frac{R}{|{^*t}|}(\partial_R\hat\psi_R)^2  \right) \d{^*t}\d^2\omega \d s \\
&&+ \int_{s_1}^{s_2} \int_{{\cal H}_s} (1 + {^*t}R)\frac{ra^2\cos^2\theta}{\rho^2}\frac{1}{\sqrt s} \left( \hat\psi^6 + \frac{R}{|{^*t}|}(\partial_R\hat\psi_R)^2 \right) \d{^*t}\d^2\omega \d s \\
& \lesssim& \int_{s_1}^{s_2} \frac{1}{\sqrt s}{\E}_{{\cal H}_s}(\hat\psi_R) \d s + \int_{s_1}^{s_2} \frac{1}{\sqrt s} \int_{{\cal H}_s}\left(\hat\psi^4\hat\psi_R^2 + \hat\psi^6  \right) \d{^*t}\d^2\omega \d s \\
& \lesssim& \int_{s_1}^{s_2} \frac{1}{\sqrt s}{\E}_{{\cal H}_s}(\hat\psi_R) \d s + \int_{s_1}^{s_2} \frac{1}{\sqrt s} \left( \int_{{\cal H}_s} \hat\psi^6 \d{^*t}\d^2\omega \right)^{2/3} \left( \int_{{\cal H}_s} \hat\psi_R^6 \d{^*t}\d^2\omega \right)^{1/3} \d s \\
& &+ \int_{s_1}^{s_2} \frac{1}{\sqrt s}\int_{{\cal H}_s} \hat\psi^6 \d{^*t}\d^2\omega \d s \, .
\end{eqnarray*}
From Proposition \ref{EstimNLLinEn}, we infer
\[ E \lesssim \int_{s_1}^{s_2} \frac{1}{\sqrt s}{\E}_{{\cal H}_s}(\hat\psi_R) \d s + \int_{s_1}^{s_2} \frac{1}{\sqrt s} \left( {\E}_{{\cal H}_s}(\hat\psi) \right)^2 {\E}_{{\cal H}_s}(\hat\psi_R) \d s + \int_{s_1}^{s_2} \frac{1}{\sqrt s} \left( {\E}_{{\cal H}_s}(\hat\psi) \right)^3 \d s\, . \]
Using Proposition \ref{inequality-basicnon2}, we can estimate ${\E}_{{\cal H}_s}(\hat\psi)$ as follows
\[ {\E}_{{\cal H}_s}(\hat\psi) \lesssim \tilde{\E}_{{\cal H}_s}(\hat\psi) \lesssim \tilde{\E}_{{\cal H}_{s_2}}(\hat\psi) \]
and
\[ {\E}_{{\cal H}_s}(\hat\psi) \lesssim \tilde{\E}_{{\cal H}_s}(\hat\psi)  \lesssim \tilde{\E}_{\mathcal{H}_{s_1}}(\hat\psi)  + \tilde{\E}_{{\cal S}_{{^*t}_0}^{s_1,s_2}}(\hat\psi) \, , \]
from which we obtain
\begin{equation} \label{IntermA}
E \lesssim \left(1+\left( \tilde{\E}_{{\cal H}_{s_2}}(\hat\psi) \right)^2 \right) \int_{s_1}^{s_2} \frac{1}{\sqrt s} \left( {\E}_{{\cal H}_s}(\hat\psi_R) + {\E}_{{\cal H}_s}(\hat\psi) \right) \d s 
\end{equation}
and
\begin{equation} \label{IntermB}
E \lesssim \left( 1+ \left( \tilde{\E}_{\mathcal{H}_{s_1}}(\hat\psi)  + \tilde{\E}_{{\cal S}_{{^*t}_0}^{s_1,s_2}}(\hat\psi) \right)^2 \right) \int_{s_1}^{s_2} \frac{1}{\sqrt s} \left( {\E}_{{\cal H}_s}(\hat\psi_R) + {\E}_{{\cal H}_s}(\hat\psi) \right) \d s \, .
\end{equation}
Inequalities \eqref{control-non}, \eqref{IntermA} and \eqref{IntermB} give the two estimates~:
\begin{gather*}
\left| {\E}_{{\cal H}_{s_1}}(\hat\psi_R) + {\E}_{{\cal S}_{{^*t}_0}^{s_1,s_2}}(\hat\psi_R) - {\E}_{{\cal H}_{s_2}}(\hat\psi_R) \right| \hspace{3in} \\
 \hspace{1in}\lesssim \left( 1+\left( \tilde{\E}_{{\cal H}_{s_2}}(\hat\psi) \right)^2 \right) \int_{s_1}^{s_2} \left( {\E}_{{\cal H}_s}(\hat\psi_R) + {\E}_{{\cal H}_s}(\hat\psi) \right) \frac{\d s}{\sqrt s}
\end{gather*}
and
\begin{gather*}
 \left\vert {\E}_{{\cal H}_{s_1}}(\hat\psi_R) + {\E}_{{\cal S}_{{^*t}_0}^{s_1,s_2}}(\hat\psi_R) - {\E}_{{\cal H}_{s_2}}(\hat\psi_R) \right\vert \hspace{3in} \\
 \hspace{1in} \lesssim \left( 1+ \left( \tilde{\E}_{\mathcal{H}_{s_1}}(\hat\psi)  + \tilde{\E}_{{\cal S}_{{^*t}_0}^{s_1,s_2}}(\hat\psi) \right)^2 \right) \int_{s_1}^{s_2} \frac{1}{\sqrt s}\left( {\E}_{{\cal H}_s}(\hat\psi_R) + {\E}_{{\cal H}_s}(\hat\psi) \right) \d s \, ,
\end{gather*}
which, combined with \eqref{IntermEstZero1} and \eqref{IntermEstZero2}, in turn entail
\begin{eqnarray*}
{\E}_{{\cal H}_{s_1}}(\hat\psi) + {\E}_{{\cal H}_{s_1}}(\hat\psi_R) &\lesssim & {\E}_{{\cal H}_{s_2}}(\hat\psi) + {\E}_{{\cal H}_{s_2}}(\hat\psi_R) \\
&& + \left( 1+\left( \tilde{\E}_{{\cal H}_{s_2}}(\hat\psi) \right)^2 \right) \int_{s_1}^{s_2} \left( {\E}_{{\cal H}_s}(\hat\psi) + {\E}_{{\cal H}_s}(\hat\psi_R) \right) \frac{\d s}{\sqrt s}
\end{eqnarray*}
and
\begin{eqnarray*}
{\E}_{{\cal H}_{s_2}}(\hat\psi) + {\E}_{{\cal H}_{s_2}}(\hat\psi_R) &\lesssim & {\E}_{{\cal H}_{s_1}}(\hat\psi_R) + {\E}_{{\cal S}_{{^*t}_0}^{s_1,s_2}}(\hat\psi_R) \\
&& + \left( 1+\left( \tilde{\E}_{\scri_{{^*t}_0}^+}(\hat\psi)  + \tilde{\E}_{{\cal S}_{{^*t}_0}}(\hat\psi) \right)^2 \right) \int_{s_1}^{s_2} \left( {\E}_{{\cal H}_s}(\hat\psi) + {\E}_{{\cal H}_s}(\hat\psi_R) \right) \frac{\d s}{\sqrt s} \, .
\end{eqnarray*}
Using Gronwall's inequality, we obtain~:
\begin{proposition}\label{ordernon1}  
For ${^*t}_0 < 0, \, \vert {^*t}_0 \vert$ large enough and for any smooth compactly supported initial data at $t = 0$, the associated rescaled solution $\hat\psi$ satisfies for all $0 \leq s_1 < s_2 \leq 1$,
\begin{eqnarray*}
{\E}_{{\cal H}_{s_1}}(\hat\psi)+ {\E}_{{\cal H}_{s_1}}(\hat\psi_R) &\lesssim & \left( {\E}_{{\cal H}_{s_2}}(\hat\psi) + {\E}_{{\cal H}_{s_2}}(\hat\psi_R) \right) \times \mathrm{exp} \left( \left( 1+\left( \tilde{\E}_{{\cal H}_{s_2}}(\hat\psi) \right)^2 \right) \int_{s_1}^{s_2}  \frac{\d s}{\sqrt s} \right) \\
&\lesssim & \left( {\E}_{{\cal H}_{s_2}}(\hat\psi) + {\E}_{{\cal H}_{s_2}}(\hat\psi_R) \right) \times \mathrm{exp} \left(2 \left( 1+\left( \tilde{\E}_{{\cal H}_{s_2}}(\hat\psi) \right)^2 \right) \right) \, ,\\
{\E}_{{\cal H}_{s_2}}(\hat\psi) + {\E}_{{\cal H}_{s_2}}(\hat\psi_R) & \lesssim & \left( {\E}_{{\cal H}_{s_1}}(\hat\psi) + {\E}_{{\cal S}_{{^*t}_0}^{s_1,s_2}}(\hat\psi) + {\E}_{{\cal H}_{s_1}}(\hat\psi_R) + {\E}_{{\cal S}_{{^*t}_0}^{s_1,s_2}}(\hat\psi_R) \right) \\
&&\times \mathrm{exp} \left( 2 \left( 1+\left( {\E}_{{\cal H}_{s_1}}(\hat\psi) + {\E}_{{\cal S}_{{^*t}_0}^{s_1,s_2}}(\hat\psi) \right)^2 \right) \right) \, .
\end{eqnarray*}
In particular for $s_1 = 0, \, s_2 = 1$,
\begin{eqnarray*}
{\E}_{\scri_{{^*t}_0}^+}(\hat\psi)+ {\E}_{\scri_{{^*t}_0}^+}(\hat\psi_R) &\lesssim & \left( {\E}_{{\cal H}_{1}}(\hat\psi) + {\E}_{{\cal H}_{1}}(\hat\psi_R) \right) \times \mathrm{exp} \left(2 \left( 1+\left( \tilde{\E}_{{\cal H}_{1}}(\hat\psi) \right)^2 \right) \right) \, ,\\
{\E}_{{\cal H}_{1}}(\hat\psi) + {\E}_{{\cal H}_{1}}(\hat\psi_R) & \lesssim & \left( {\E}_{\scri_{{^*t}_0}^+}(\hat\psi) + {\E}_{{\cal S}_{{^*t}_0}}(\hat\psi) + {\E}_{\scri_{{^*t}_0}^+}(\hat\psi_R) + {\E}_{{\cal S}_{{^*t}_0}}(\hat\psi_R) \right) \\
&&\times \mathrm{exp} \left( 2 \left( 1+\left( \tilde{\E}_{\scri_{{^*t}_0}^+}(\hat\psi)  + \tilde{\E}_{{\cal S}_{{^*t}_0}}(\hat\psi) \right)^2 \right) \right) \, .
\end{eqnarray*}
\end{proposition}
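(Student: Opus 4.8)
The plan is to close the two combined integral inequalities established just above the statement by two applications of Gronwall's lemma — one backward in $s$ and one forward — exactly as in the linear Theorem \ref{theorem_1_linear}, but now carrying the nonlinear-energy prefactors along as constants. Write $G(s) := \E_{{\cal H}_s}(\hat\psi) + \E_{{\cal H}_s}(\hat\psi_R)$ for the total linear energy of the pair $(\hat\psi,\hat\psi_R)$ on the slice ${\cal H}_s$; this is a nonnegative continuous function of $s\in[0,1]$. The decisive observation is that the coefficient $1+(\tilde{\E}_{{\cal H}_{s_2}}(\hat\psi))^2$ (respectively $1+(\tilde{\E}_{\scri^+_{^*t_0}}(\hat\psi)+\tilde{\E}_{{\cal S}_{^*t_0}}(\hat\psi))^2$) multiplying the integral $\int_{s_1}^{s_2} G(s)\,\d s/\sqrt s$ does \emph{not} depend on the integration variable $s$: it is a fixed number determined by the field on the target slice, so each inequality is a genuine linear Gronwall inequality with integrable kernel $s\mapsto K/\sqrt s$ on $[0,1]$.

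First I would treat the first combined estimate, which has the shape $G(s_1)\lesssim G(s_2)+K_2\int_{s_1}^{s_2}G(s)\,\d s/\sqrt s$ with $K_2=1+(\tilde{\E}_{{\cal H}_{s_2}}(\hat\psi))^2$ constant. Fixing $s_2$ and regarding $G(s_1)$ as a function of the lower limit $s_1$, the backward form of Gronwall's lemma yields $G(s_1)\lesssim G(s_2)\exp\!\big(K_2\int_{s_1}^{s_2}\d s/\sqrt s\big)$, which is precisely the intermediate line of the statement displaying the exponential of the integral. Since $\int_{s_1}^{s_2}\d s/\sqrt s=2(\sqrt{s_2}-\sqrt{s_1})\leq 2$, the exponent is bounded by $2K_2$, giving the first asserted inequality after absorbing the implicit constant from $\lesssim$.

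The second combined estimate has the forward shape $G(s_2)\lesssim G_0+K_1\int_{s_1}^{s_2}G(s)\,\d s/\sqrt s$, where the initial term $G_0$ collects the energies of $\hat\psi$ and $\hat\psi_R$ on ${\cal H}_{s_1}$ and on ${\cal S}_{^*t_0}^{s_1,s_2}$ — the $\hat\psi_R$ contributions coming directly from the combined estimate and the $\hat\psi$ contributions from \eqref{IntermEstZero2} — and $K_1$ is again constant in $s$. The forward Gronwall lemma then gives $G(s_2)\lesssim G_0\exp\!\big(K_1\int_{s_1}^{s_2}\d s/\sqrt s\big)\leq G_0\exp(2K_1)$. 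Specialising to $s_1=0$, $s_2=1$ and recalling that ${\cal H}_0=\scri^+_{^*t_0}$ produces the two stated limit cases. To match the exponent in the general statement, which is written in terms of the \emph{linear} energies $\E_{{\cal H}_{s_1}}(\hat\psi)+\E_{{\cal S}_{^*t_0}^{s_1,s_2}}(\hat\psi)$, I would invoke \eqref{basic-ine} together with the basic bounds of Proposition \ref{orderKerr0} and Corollary \ref{inequality-basicnon2}, which control $\tilde{\E}$ on the relevant slices by a polynomial function of the corresponding linear energies; the prefactor $K_1$ is thereby re-expressed, up to the implicit constant, as $1+(\E_{{\cal H}_{s_1}}(\hat\psi)+\E_{{\cal S}_{^*t_0}^{s_1,s_2}}(\hat\psi))^2$.

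The only genuine subtlety is conceptual rather than computational, and it has already been isolated in the derivation preceding the statement: the nonlinear term $E$ must be absorbed into $\int G(s)\,\d s/\sqrt s$ with a coefficient that is \emph{constant} in $s$, which is exactly what the $H^1\hookrightarrow L^6$ Sobolev embedding of Proposition \ref{EstimNLLinEn} buys us through the bounds \eqref{IntermA}–\eqref{IntermB}. Once that is in place, nothing in the Gronwall step is specific to the nonlinearity and the argument runs verbatim as in the linear case. The main obstacle is therefore behind us; no further estimate is needed beyond the two elementary Gronwall applications and the uniform bound $\int_{s_1}^{s_2}\d s/\sqrt s\leq 2$ coming from the integrability of $1/\sqrt s$ near $s=0$.
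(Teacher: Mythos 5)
Your proposal coincides with the paper's own proof: the combined integral inequalities derived just before the statement are closed by forward and backward applications of Gronwall's lemma with the integrable kernel $1/\sqrt{s}$, the key point being precisely the one you isolate, namely that the Sobolev embedding of Proposition \ref{EstimNLLinEn} makes the prefactor $1+\tilde{\E}^2$ independent of the integration variable, and the bound $\int_{s_1}^{s_2}\d s/\sqrt{s}\leq 2$ then yields the stated exponentials. Your closing remark on re-expressing the prefactor of the second estimate in terms of linear rather than nonlinear energies addresses a point the paper itself leaves implicit, so nothing is missing.
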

We obtain similar estimates for $\mathcal{L}_{X_0}(\hat\psi) =\hat\psi_{^*t}$ and $\mathcal{L}_{X_1}(\hat\psi) = \hat\psi_{^*\varphi}$ from \eqref{conser-non-t} and \eqref{conser-non-varphi}.  As for the approximate conservation law \eqref{conser-non2} for ${\cal L}_{X_2}\hat\psi$ (resp. \eqref{conser-non3} for ${\cal L}_{X_3}\hat\psi$), it involves the derivatives of $\hat{\psi}$ along $X_0$ and $X_3$ (resp. $X_2$) in its error terms, but only at the linear level. We can therefore etablish combined estimates for the derivatives of $\hat{\psi}$ along the elements of ${\cal B} = \{ X_0,X_2,X_3 \}$ and of ${\cal A} = \{ X_0,X_1,X_2,X_3,X_4 \}$. Let us denote for an oriented hypersurface $S$
\[ \E_{S} ({\cal L}_{\cal B} \hat{\psi} ) := \sum_{i\in\{0,2,3\}} \E_{\cal S} ({\cal L}_{X_i} \hat{\psi} ) \, ,~~\E_{S} ({\cal L}_{\cal A} \hat{\psi} ) := \sum_{i=0}^4 \E_{\cal S} ({\cal L}_{X_i} \hat{\psi} ) \,  ,\]
we have~:
\begin{theorem}\label{ordernon3}
For ${^*t}_0 < 0, \, \vert {^*t}_0 \vert$ large enough and for any smooth compactly supported initial data at $t = 0$, the associated rescaled solution $\hat\psi$ satisfies~:
\begin{eqnarray*}
{\E}_{\scri_{{^*t}_0}^+}(\hat\psi) + \E_{\scri_{{^*t}_0}^+} ({\cal L}_{\cal B} \hat{\psi} ) & \lesssim & \left( {\E}_{{\cal H}_1}( \hat\psi) + \E_{{\cal H}_1} ({\cal L}_{\cal B} \hat{\psi} ) \right) \times \exp \left( 2 \left( 1+ \left( \tilde{\E}_{{\cal H}_1}(\hat\psi) \right)^2 \right) \right) \, ,\\
{\E}_{{\cal H}_1}(\hat\psi) + \E_{{\cal H}_1} ({\cal L}_{\cal B} \hat{\psi} ) &\lesssim & \left( {\E}_{\scri_{{^*t}_0}^+}(\hat\psi)  + {\E}_{{\cal S}_{{^*t}_0}}(\hat\psi)  + \E_{\scri_{{^*t}_0}^+} ({\cal L}_{\cal B} \hat{\psi} )+ \E_{{\cal S}_{{^*t}_0}} ({\cal L}_{\cal B} \hat{\psi} ) \right) \\
&&\times \exp \left( 2 \left( 1+ \left( \tilde{\cal E}_{\scri_{{^*t}_0}^+}(\hat\psi)  + \tilde{\cal E}_{{\cal S}_{{^*t}_0}}(\hat\psi) \right)^2\right) \right)
\end{eqnarray*}
and
\begin{eqnarray*}
{\E}_{\scri_{{^*t}_0}^+}(\hat\psi) + \E_{\scri_{{^*t}_0}^+} ({\cal L}_{\cal A} \hat{\psi} ) & \lesssim & \left( {\E}_{{\cal H}_1}( \hat\psi) + \E_{{\cal H}_1} ({\cal L}_{\cal A} \hat{\psi} ) \right) \times \exp \left( 2 \left( 1+ \left( \tilde{\E}_{{\cal H}_1}(\hat\psi) \right)^2 \right) \right) \, ,\\
{\E}_{{\cal H}_1}(\hat\psi) + \E_{{\cal H}_1} ({\cal L}_{\cal A} \hat{\psi} ) &\lesssim & \left( {\E}_{\scri_{{^*t}_0}^+}(\hat\psi)  + {\E}_{{\cal S}_{{^*t}_0}}(\hat\psi)  + \E_{\scri_{{^*t}_0}^+} ({\cal L}_{\cal A} \hat{\psi} )+ \E_{{\cal S}_{{^*t}_0}} ({\cal L}_{\cal A} \hat{\psi} ) \right) \\
&&\times \exp \left( 2 \left( 1+ \left( \tilde{\cal E}_{\scri_{{^*t}_0}^+}(\hat\psi)  + \tilde{\cal E}_{{\cal S}_{{^*t}_0}}(\hat\psi) \right)^2\right) \right)
\end{eqnarray*}
\end{theorem}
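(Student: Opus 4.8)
The plan is to obtain the combined first-order estimates by integrating the approximate conservation laws of each generator over the domain $\Omega^{s_1,s_2}_{^*t_0}$ of \eqref{Omegs1s2}, summing over the relevant vector fields, controlling the resulting error terms, and closing with Gronwall's lemma. Proposition \ref{ordernon1} already furnishes the estimate for the pair $(\hat\psi,\hat\psi_R)$, so what remains is to incorporate the first derivatives along $X_0$, $X_1$, $X_2$ and $X_3$ and then reinstate $X_4=\partial_R$. For the Killing fields $X_0,X_1$ the laws \eqref{conser-non-t} and \eqref{conser-non-varphi} show that $\hat\psi_{^*t}$ and $\hat\psi_{^*\varphi}$ solve linear equations with potential $3\hat\psi^2$; for $X_2,X_3$ the laws \eqref{conser-non2} and \eqref{conser-non3} couple ${\cal L}_{X_2}\hat\psi$ and ${\cal L}_{X_3}\hat\psi$ to each other and to $\hat\psi_{^*t}$, but crucially only at the linear level. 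First I would integrate each of these four conservation laws over $\Omega^{s_1,s_2}_{^*t_0}$, producing on the left-hand side the difference of the energy fluxes across ${\cal H}_{s_1}$, ${\cal H}_{s_2}$ and ${\cal S}^{s_1,s_2}_{^*t_0}$.

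The error terms on the right-hand side split into two families. The \emph{linear-level} terms --- the curvature factor $2\frac{Mr-a^2}{r^2}$, the first-derivative coupling $\frac{r^2}{\rho^2}(2a\partial_{^*t}+2aR^2\partial_R+2aR){\cal L}_{X_3}\hat\psi$ (and its ${\cal L}_{X_2}$ analogue), and the second-order piece $\frac{r^2}{\rho^2}a^2\sin2\theta\sin{^*\varphi}\,\partial_{^*t}^2\hat\psi$ --- are handled exactly as in the linear analysis of Section \ref{Lin}, using the Killing-form bound on $\nabla_T$ of the type \eqref{KFNL} (with $\T$ in place of $\tilde{\T}$) and the equivalence $1/|{^*t}|\simeq s^{-1/2}\sqrt{R/|{^*t}|}$. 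The only term requiring a dedicated remark is $\partial_{^*t}^2\hat\psi$, bounded through $\int_{{\cal H}_s}|\partial_{^*t}^2\hat\psi|^2\,\d{^*t}\,\d^2\omega\le\E_{{\cal H}_s}(\hat\psi_{^*t})$; thus every linear-level contribution is absorbed into the \emph{combined} linear energy $\E(\hat\psi)+\E({\cal L}_{\cal B}\hat\psi)$ on each slice, which is precisely why the estimates must be closed for all generators of ${\cal B}$ at once.

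The \emph{nonlinear} error terms have the two shapes $\hat\psi^2\,{\cal L}_{X_i}\hat\psi\cdot\nabla_T{\cal L}_{X_i}\hat\psi$ and $\hat\psi^3\cdot\nabla_T{\cal L}_{X_i}\hat\psi$, and I would dispose of them by the very computation already performed for the term $E$ in the proof of Proposition \ref{ordernon1}: after applying $ab\le\frac12(a^2+b^2)$ and the equivalence above, the remaining slice integrals $\int_{{\cal H}_s}\hat\psi^4({\cal L}_{X_i}\hat\psi)^2$ and $\int_{{\cal H}_s}\hat\psi^6$ are estimated by H\"older and the Sobolev embedding $H^1\hookrightarrow L^6$ of Proposition \ref{EstimNLLinEn}, yielding $(\E_{{\cal H}_s}(\hat\psi))^2\,\E_{{\cal H}_s}({\cal L}_{X_i}\hat\psi)$ and $(\E_{{\cal H}_s}(\hat\psi))^3$. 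Replacing $\E_{{\cal H}_s}(\hat\psi)$ by the monotone nonlinear energy $\tilde{\E}$ at a single slice, via \eqref{basic-ine} and Proposition \ref{orderKerr0}, factors these as a constant $1+(\tilde{\E}_{{\cal H}_{s_2}}(\hat\psi))^2$ times $\int_{s_1}^{s_2}s^{-1/2}(\E(\hat\psi)+\E({\cal L}_{\cal B}\hat\psi))\,\d s$. Summing the four inequalities with the $(\hat\psi,\hat\psi_R)$ estimate of Proposition \ref{ordernon1} and invoking Gronwall's lemma then gives the first pair of the theorem, the exponential $\exp(2(1+(\tilde{\E}_{{\cal H}_1}(\hat\psi))^2))$ arising from the Gronwall constant together with the integrability of $s^{-1/2}$ on $[0,1]$; the ${\cal A}$-statement follows by adjoining $X_4$, whose contribution is exactly Proposition \ref{ordernon1}.

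The main obstacle I expect is the bookkeeping of the nonlinear terms, ensuring that the Gronwall constant depends only on the nonlinear energy at a single slice and not on the quantity being estimated. The feature that makes the scheme close is that the couplings introduced by the non-Killing fields $X_2,X_3$ are purely linear: the nonlinearity feeds back only through $\hat\psi$ itself, whose nonlinear energy $\tilde{\E}$ is already controlled and monotone by Proposition \ref{orderKerr0}, so the factor $1+(\tilde{\E})^2$ is a genuine constant for the Gronwall step and never involves the first derivatives ${\cal L}_{X_i}\hat\psi$. Checking this separation carefully --- in particular that every occurrence of a derivative of $\hat\psi$ of the same order as the estimated quantity sits inside the $\int_{s_1}^{s_2}s^{-1/2}(\cdots)\,\d s$ integral and never in the prefactor --- is the delicate point of the argument.
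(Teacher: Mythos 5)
Your proposal is correct and follows essentially the same route as the paper: integrate the approximate conservation laws \eqref{conser-non-t}--\eqref{conser-non1} over $\Omega^{s_1,s_2}_{^*t_0}$, observe that the couplings among $X_0$, $X_2$, $X_3$ occur only at the linear level so the estimates must be closed jointly over ${\cal B}$, treat the nonlinear error terms exactly as the term $E$ in Proposition \ref{ordernon1} via H\"older and the Sobolev embedding of Proposition \ref{EstimNLLinEn}, replace $\E_{{\cal H}_s}(\hat\psi)$ in the prefactor by the monotone nonlinear energy through \eqref{basic-ine} and Proposition \ref{orderKerr0}, and close with Gronwall. Your emphasis on the prefactor depending only on $\tilde{\E}(\hat\psi)$ at a single slice, and never on the first derivatives being estimated, is precisely the point the paper relies on.
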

When commuting an arbitrary number of derivatives along the elements of $\cal A$ (let us denote it $D^\alpha$) into the equation \eqref{RescNLW}, $D^\alpha \hat\psi$ will satisfy an equation of the form
\[ \square_{\hat{g}} \D^\alpha \hat{\psi} = \mbox{RHS} \, .\]
In the right-hand side, the linear terms can be controlled as in the linear case. As for the non-linear terms, once we look at the flux of the energy current $\J (\D^\alpha \hat{\psi} )$ across the boundary of $\Omega^{s_1,s_2}_{^*t_0}$, they induce two types of error terms~:
\[ E_1 = \int_{s_1}^{s_2} \int_{{\cal H}_s} f \hat\psi_1 \hat\psi_2 \hat\psi_3 \, ^*t^2 \partial_{^*t} \left( \D^\alpha \hat{\psi} \right) \frac{1}{\vert \, ^*t\vert}\d ^*t \d \omega \d s \]
and
\[ E_2=\int_{s_1}^{s_2} \int_{{\cal H}_s} f \hat\psi_1 \hat\psi_2 \hat\psi_3 \, \partial_{R} \left( \D^\alpha \hat{\psi} \right) \frac{1}{\vert \, ^*t\vert}\d ^*t \d \omega \d s \, , \]
where $f$ is a bounded function and the $\hat\psi_i$'s are either $D^\alpha\hat\psi$ or lower order derivatives of $\hat\psi$ that involve the same vector fields as $D^\alpha$. The $E_1$-type terms are dealt with by standard Hölder estimates as follows
\begin{eqnarray*}
E_1 &\leq & \int_{s_1}^{s_2} \int_{{\cal H}_s} \vert f\vert \frac12 \left(  (\hat\psi_1 \hat\psi_2 \hat\psi_3)^2 +  (^*t \partial_{^*t} ( \D^\alpha \hat{\psi} ))^2 \right) \d ^*t \d \omega \d s \\
&\lesssim & \int_{s_1}^{s_2} \Vert f \Vert_{L^\infty ({\cal H}_s)} \left( \Pi_{i=1}^3 \Vert \hat\psi_i \Vert_{L^6({\cal H}_s)}^2 + \Vert ^*t \partial_{^*t} D^\alpha \hat\psi_i \Vert_{L^2({\cal H}_s)}^2  \right) \d s \, ,
\end{eqnarray*}
then, using Proposition \ref{EstimNLLinEn}, we obtain
\[ E_1 \lesssim  \int_{s_1}^{s_2} \left( \Pi_{i=1}^3 \E(\hat\psi_i) + \E ( \D^\alpha \hat{\psi} ) \right) \d s \, .\]
The $E_2$-type terms are estimated analogously with the additional ingredient of the familiar equivalence
\[ \frac{1}{|^*t|} \simeq \frac{1}{\sqrt s}\sqrt{\frac{R}{|^*t|}} \, .\]
We obtain
\begin{eqnarray*}
E_2 &\leq & \int_{s_1}^{s_2} \frac{1}{\sqrt{s}} \int_{{\cal H}_s} \vert f\vert \frac12 \left(  (\hat\psi_1 \hat\psi_2 \hat\psi_3)^2 +  \left( \frac{R}{|^*t|} \partial_{R} ( \D^\alpha \hat{\psi} )\right)^2 \right) \d ^*t \d \omega \d s \\
& \lesssim & \int_{s_1}^{s_2} \frac{1}{\sqrt{s}} \left( \Pi_{i=1}^3 \E(\hat\psi_i) + \E ( \D^\alpha \hat{\psi} ) \right) \d s \, .
\end{eqnarray*}
Depending whether we differentiate merely along $\partial_R$, along the elements of $\cal B$ or along all the elements of $\cal A$, we obtain three types of higher order estimates.
\begin{theorem} \label{NLEstimates}
There exists a sequence of continuous functions $F_p \, :~\R^{p+1} \rightarrow ]0,+\infty[$, $p\in \N$, such that, for any smooth compactly supported initial data, the associated solution of \eqref{RescNLW} satisfies for all $p \in \N^*$~:
\begin{gather*}
\sum_{k=0}^p \E_{\scri^+_{^*t_0}} (\partial_R^k \hat\psi ) \lesssim  F_{p-1} \left( \tilde{\E}_{\mathcal{H}_1}(\hat\psi) , \tilde{\E}_{\mathcal{H}_1}(\partial_R \hat\psi) , ... , \tilde{\E}_{\mathcal{H}_1}(\partial_R^{p-1}\hat\psi) \right) \sum_{k=0}^p \E_{\mathcal{H}_1}(\partial_R^k \hat\psi) \, ,\\
\sum_{k=0}^p \E_{{\cal H}_1} (\partial_R^k \hat\psi ) + \E_{{\cal H}_1} (\partial_R^k \hat\psi ) \lesssim  F_{p-1} \left( \tilde{\E}_{\scri^+_{^*t_0}}(\hat\psi) + \tilde{\E}_{{\cal S}_{^*t_0}}(\hat\psi) , \tilde{\E}_{\scri^+_{^*t_0}}(\partial_R \hat\psi) + \tilde{\E}_{{\cal S}_{^*t_0}}(\partial_R \hat\psi) , ... \right. \\
\left. ..., \tilde{\E}_{\scri^+_{^*t_0}}(\partial_R^{p-1}\hat\psi) + \tilde{\E}_{{\cal S}_{^*t_0}}(\partial_R^{p-1} \hat\psi) \right) \sum_{k=0}^p \left( \E_{\scri^+_{^*t_0}}(\partial_R^k \hat\psi) + \E_{{\cal S}_{^*t_0}}(\partial_R^k \hat\psi) \right) \, ,
\end{gather*}
also
\begin{gather*}
\sum_{k=0}^p \E_{\scri^+_{^*t_0}} ({\cal L}_{\cal B}^k \hat\psi ) \lesssim  F_{p-1} \left( \tilde{\E}_{\mathcal{H}_1}(\hat\psi) , \tilde{\E}_{\mathcal{H}_1}({\cal L}_{\cal B} \hat\psi) , ... , \tilde{\E}_{\mathcal{H}_1}({\cal L}_{\cal B}^{p-1}\hat\psi) \right) \sum_{k=0}^p \E_{\mathcal{H}_1}({\cal L}_{\cal B}^k \hat\psi) \, ,\\
\sum_{k=0}^p \E_{{\cal H}_1} ({\cal L}_{\cal B}^k \hat\psi ) + \E_{{\cal H}_1} ({\cal L}_{\cal B}^k \hat\psi ) \lesssim  F_{p-1} \left( \tilde{\E}_{\scri^+_{^*t_0}}(\hat\psi) + \tilde{\E}_{{\cal S}_{^*t_0}}(\hat\psi) , \tilde{\E}_{\scri^+_{^*t_0}}({\cal L}_{\cal B} \hat\psi) + \tilde{\E}_{{\cal S}_{^*t_0}}({\cal L}_{\cal B} \hat\psi) , ... \right. \\
\left. ..., \tilde{\E}_{\scri^+_{^*t_0}}({\cal L}_{\cal B}^{p-1}\hat\psi) + \tilde{\E}_{{\cal S}_{^*t_0}}({\cal L}_{\cal B}^{p-1} \hat\psi) \right) \sum_{k=0}^p \left( \E_{\scri^+_{^*t_0}}({\cal L}_{\cal B}^k \hat\psi) + \E_{{\cal S}_{^*t_0}}({\cal L}_{\cal B}^k \hat\psi) \right)
\end{gather*}
and finally
\begin{gather*}
\sum_{k=0}^p \E_{\scri^+_{^*t_0}} ({\cal L}_{\cal A}^k \hat\psi ) \lesssim  F_{p-1} \left( \tilde{\E}_{\mathcal{H}_1}(\hat\psi) , \tilde{\E}_{\mathcal{H}_1}({\cal L}_{\cal A} \hat\psi) , ... , \tilde{\E}_{\mathcal{H}_1}({\cal L}_{\cal A}^{p-1}\hat\psi) \right) \sum_{k=0}^p \E_{\mathcal{H}_1}({\cal L}_{\cal A}^k \hat\psi) \, ,\\
\sum_{k=0}^p \E_{{\cal H}_1} ({\cal L}_{\cal A}^k \hat\psi ) + \E_{{\cal H}_1} ({\cal L}_{\cal A}^k \hat\psi ) \lesssim  F_{p-1} \left( \tilde{\E}_{\scri^+_{^*t_0}}(\hat\psi) + \tilde{\E}_{{\cal S}_{^*t_0}}(\hat\psi) , \tilde{\E}_{\scri^+_{^*t_0}}({\cal L}_{\cal A} \hat\psi) + \tilde{\E}_{{\cal S}_{^*t_0}}({\cal L}_{\cal A} \hat\psi) , ... \right. \\
\left. ..., \tilde{\E}_{\scri^+_{^*t_0}}({\cal L}_{\cal A}^{p-1}\hat\psi) + \tilde{\E}_{{\cal S}_{^*t_0}}({\cal L}_{\cal A}^{p-1} \hat\psi) \right) \sum_{k=0}^p \left( \E_{\scri^+_{^*t_0}}({\cal L}_{\cal A}^k \hat\psi) + \E_{{\cal S}_{^*t_0}}({\cal L}_{\cal A}^k \hat\psi) \right)
\end{gather*}
\end{theorem}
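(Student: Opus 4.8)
The plan is to establish the three families of inequalities simultaneously by induction on the differentiation order $p$, treating $\partial_R = \mathcal{L}_{X_4}$, the elements of $\mathcal{B} = \{X_0,X_2,X_3\}$ and the full collection $\mathcal{A}$ in parallel, since the scheme is structurally identical in all three cases. The case $p=0$ is Proposition \ref{BasicNLlin}, and $p=1$ is furnished by Proposition \ref{ordernon1} for $\partial_R$ and by Theorem \ref{ordernon3} for $\mathcal{B}$ and $\mathcal{A}$. For the inductive step I would commute a product $\D^\alpha$ of $k \leq p$ vector fields from the relevant set into the rescaled equation \eqref{RescNLW}; as recorded just before the statement, $\D^\alpha\hat\psi$ then solves $\square_{\hat g}\D^\alpha\hat\psi = \mathrm{RHS}$, whose right-hand side splits into linear terms, with coefficients that are bounded or of controlled order in $R$ exactly as in the first-order commuted equations \eqref{resc-nonwave2}, \eqref{resc-nonwave3} and \eqref{resc-nonwave1}, and a nonlinear contribution arising from the differentiation of $\hat\psi^3$.

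I would then integrate the approximate conservation law for the linear energy current $\J(\D^\alpha\hat\psi)$ over the domain $\Omega^{s_1,s_2}_{^*t_0}$, so that $|\E_{\mathcal{H}_{s_1}}(\D^\alpha\hat\psi) + \E_{\mathcal{S}^{s_1,s_2}_{^*t_0}}(\D^\alpha\hat\psi) - \E_{\mathcal{H}_{s_2}}(\D^\alpha\hat\psi)|$ is bounded by a spacetime integral of $|\nabla^a\J_a(\D^\alpha\hat\psi)|$. The linear error terms are controlled slice by slice by $\E_{\mathcal{H}_s}(\D^\beta\hat\psi)$ with $|\beta| \leq |\alpha|$, precisely as in the linear Theorem \ref{theorem_1_linear} and Proposition \ref{theorem_2_linear}; in particular the non-Killing fields $X_2, X_3$ generate the same $\partial_{^*t}^2$ and first-order cross terms already present in \eqref{conser-non2} and \eqref{conser-non3}, which are absorbed by the combined energy of derivatives along $X_0, X_2, X_3$. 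For the nonlinear contribution I would expand $\D^\alpha(\hat\psi^3)$ by the Leibniz rule into a sum of products $(\D^{\beta_1}\hat\psi)(\D^{\beta_2}\hat\psi)(\D^{\beta_3}\hat\psi)$ with $|\beta_1| + |\beta_2| + |\beta_3| = |\alpha|$, and treat each resulting term in the flux as an $E_1$- or $E_2$-type error in the sense described immediately before the statement, using $ab \leq \tfrac12(a^2+b^2)$, the equivalence $\tfrac{1}{|^*t|} \simeq \tfrac{1}{\sqrt s}\sqrt{R/|^*t|}$, three-factor H\"older, and the Sobolev embedding of Proposition \ref{EstimNLLinEn} to convert each $L^6$ factor into a linear energy.

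The decisive structural observation is that, since the nonlinearity is cubic, the top order $|\alpha|$ can be carried by at most one factor in the Leibniz expansion, and whenever it is, the two remaining factors are undifferentiated copies of $\hat\psi$. This single term contributes, after H\"older and Proposition \ref{EstimNLLinEn}, a quantity of the form $\E_{\mathcal{H}_s}(\hat\psi)^2\,\E_{\mathcal{H}_s}(\D^\alpha\hat\psi)$, so the top-order energy appears linearly with a coefficient governed by the order-zero energy alone; every other product has all three factors of order strictly below $|\alpha|$ and is therefore bounded by a product of lower-order linear energies already controlled by the induction hypothesis. Summing over all multi-indices of order at most $p$ and using \eqref{basic-ine} to pass between linear and nonlinear energies, I would arrive at a differential inequality of the schematic form $\frac{\d}{\d s}\mathcal{F}_p(s) \lesssim \tfrac{1}{\sqrt s}\,\big(1 + G_{p-1}\big)\,\mathcal{F}_p(s)$, where $\mathcal{F}_p(s)$ is the sum of the order-$\leq p$ linear energies on $\mathcal{H}_s$ and $G_{p-1}$ collects the lower-order nonlinear energies; since $1/\sqrt s$ is integrable on $[0,1]$, Gronwall's lemma produces the exponential factor, and the iteration over $p$ assembles the continuous functions $F_{p-1}$ of the nonlinear energies $\tilde\E_{\mathcal{H}_1}(\D^\beta\hat\psi)$, $|\beta| \leq p-1$, and of their $\scri^+ \cup \mathcal{S}_{^*t_0}$ counterparts in the reverse estimate.

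The main obstacle I anticipate is combinatorial bookkeeping rather than new analysis: one must verify that the coefficient multiplying the top-order energy in the Gronwall inequality depends only on nonlinear energies of order at most $p-1$, so that $F_{p-1}$ is genuinely a function of the $p$ lower-order energies and the recursion closes at each step. This rests on the two facts isolated above, namely that the cubic nonlinearity forces the top derivative onto a single factor, and that the backward estimate systematically replaces $\tilde\E_{\mathcal{H}_1}$ by $\tilde\E_{\scri^+_{^*t_0}} + \tilde\E_{\mathcal{S}_{^*t_0}}$ through the second half of Proposition \ref{orderKerr0}. A secondary point is to confirm that the lower-order nonlinear fluxes entering the arguments of $F_{p-1}$ are finite and monotone along the foliation, which again follows by applying Proposition \ref{orderKerr0} and Corollary \ref{inequality-basicnon2} at each order.
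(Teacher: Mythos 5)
Your proposal follows essentially the same route as the paper: commuting the derivatives into \eqref{RescNLW}, splitting the resulting source into linear terms handled as in Section \ref{Lin} and nonlinear $E_1$/$E_2$-type terms handled by H\"older together with the Sobolev embedding of Proposition \ref{EstimNLLinEn}, then closing with Gronwall and an iteration on the order $p$. Your explicit observation that the cubic nonlinearity forces the top-order derivative onto a single factor (so that the top-order energy enters linearly with a coefficient controlled by lower-order nonlinear energies, letting $F_{p-1}$ depend only on orders $\le p-1$) is exactly the point the paper leaves implicit, and it is correct.
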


\subsubsection{Peeling}

The estimates of Proposition \ref{BasicNLlin} and Theorem \ref{NLEstimates} are a priori valid only for solutions of \eqref{RescNLW} associated with smooth compactly supported data on ${\cal H}_1$. Since the equation is nonlinear, in order to extend the estimates by density to less regular frameworks, we need some estimates on the difference of two solutions. These can be proved in the same manner as we obtained estimates on a given solution, using the Sobolev embeddings of Proposition \ref{EstimNLLinEn}. For example, as an analogue of Proposition \ref{BasicNLlin}, we have for two solutions $\hat\psi$ and $\hat\phi$ associated with smooth compactly supported data on ${\cal H}_1$ and for $0\leq s_1 \leq s_2 \leq 1$~:
\[ {\E}_{{\cal H}_{s_1}} (\hat\psi - \hat\phi) \leq  \Phi \left( \E_{{\cal H}_{s_2}}(\hat\psi) , \E_{{\cal H}_{s_2}}(\hat\phi) \right)  {\E}_{{\cal H}_{s_2}}(\hat\psi - \hat\phi) \, . \]
Similar estimates can be established at any order, using the same ingredients as for the proof of Theorem \ref{NLEstimates}. The estimates of Theorem \ref{NLEstimates} therefore extend to solutions arising from data in Sobolev spaces on ${\cal H}_1$ of sufficiently high order so that the derivatives considered have finite energy on the initial data slice.

In the linear case, we had expressed the finiteness of the energy on ${\cal H}_1$ as the finiteness of a norm that was intrinsic to ${\cal H}_1$ since it involved only derivatives tangent to ${\cal H}_1$. This was done using the equation to transform the transverse derivatives into combinations of tangential ones, the operators $L$ and $N$ of \eqref{OpL} and \eqref{OpN}. We can still do the same in the nonlinear case, but the operators $L$ and $N$ will now become nonlinear and the quantities appearing in Theorem \ref{LinearPeeling} no longer define norms.

\end{document}